\documentclass{article}
\usepackage[numbers]{natbib}
\usepackage{fontenc}
\usepackage{shadethm}
\usepackage{amsthm}
\usepackage{float}
\usepackage[normalem]{ulem}
\usepackage{bm}
\usepackage{mathtools}
\usepackage{graphicx}
\usepackage{subfig}
\usepackage{stmaryrd}
\usepackage{sectsty}
\usepackage{mathrsfs} 
\usepackage{amsmath}
\usepackage{amssymb}
\usepackage{wasysym}
\usepackage[hyperfootnotes=false]{hyperref}
\usepackage[a4paper, left=2.7cm, right=2.7cm, top=2.5cm]{geometry}
\usepackage{chngcntr}
\usepackage{cleveref}
\usepackage[utf8]{inputenc}
\usepackage[german,russian,english]{babel}
\usepackage{unnumberedtotoc}

\counterwithin*{equation}{section}

\DeclareMathAlphabet{\mathpzc}{OT1}{pzc}{m}{it}

\sectionfont{\normalsize\centering}
\subsectionfont{\footnotesize\centering}

\theoremstyle{plain}
\newtheorem{thm}{Theorem}[section] 

\theoremstyle{definition}
\newtheorem{defn}[thm]{Definition} 
\newtheorem{lem}[thm]{Lemma}

\newtheorem{rem}[thm]{Remark}
\newtheorem{cor}[thm]{Corollary}

\def\XXint#1#2#3{{\setbox0=\hbox{$#1{#2#3}{\int}$ }
		\vcenter{\hbox{$#2#3$ }}\kern-.6\wd0}}

\newcommand\scalemath[2]{\scalebox{#1}{\mbox{\ensuremath{\displaystyle #2}}}}

\newcounter{MPequ}

\newcounter{AppA}

\newcounter{AppB}

\newcounter{AppC}

\newcounter{AppD}

\newcounter{AppE}

\begin{document}\selectlanguage{english}
\begin{center}
\normalsize \textbf{{Electric current dynamics in the stellarator coil winding surface model}}
\end{center}

\begin{center}
	Wadim Gerner$^{\dagger,}$\footnote{\textit{E-mail address:} \href{mailto:wadim.gerner@edu.unige.it}{wadim.gerner@edu.unige.it}, corresponding author}, Anouk Nicolopoulos-Salle$^{\star,}$\footnote{\textit{E-mail address:} \href{mailto:anouk.nicolopoulos@renfusion.eu}{anouk.nicolopoulos@renfusion.eu}}, Diego Pereira Botelho$^{\star,}$\footnote{\textit{E-mail address:} \href{mailto:diego.pereira@renfusion.eu}{diego.pereira@renfusion.eu}}
\end{center}

\begin{center}
{\footnotesize MaLGa Center, Department of Mathematics, Department of Excellence 2023-2027, University of Genoa, Via Dodecaneso 35, 16146 Genova, Italy$^{\dagger}$}
\end{center}
\begin{center}
{\footnotesize Renaissance Fusion, 38600 Fontaine, France$^{\star}$}
\end{center}
{\small \textbf{Abstract:} 
In stellarator design, the coil winding surfaces $\Sigma\subset\mathbb R^3$ support
current distributions $j$ that shape the magnetic field.
This work provides a theoretical framework explaining the emergence of
centre and saddle point regions, a key feature in coil optimisation.
For coil winding surfaces with a toroidal shape, we prove a dichotomy principle:
the current distribution has both centre and saddle point regions or is no-where vanishing.
For coil winding surfaces that consist of piecewise cylinders, 
we show that if $j$ is oppositely oriented on the two boundary circles,
centre and saddle points appear, and all but finitely many field lines of $j$ are periodic.
When $j$ admits a harmonic potential, all field lines are closed poloidal orbits.
These results offer insights into current patterns on winding surfaces,
with implications for coil design strategies and their simplification.
\newline
\newline
{\small \textit{Keywords}: Coil winding surface, Plasma physics, Stellarator, Current dynamics, Area-preserving flows}
\newline
{\small \textit{2020 MSC}: 31A35, 37E35, 37N20, 37C25, 37C27, 37C10, 37C20, 78A30, 78A55}

\vspace{1cm}

\section{Introduction}
\label{S1}

\subsection{Motivation}
\label{Sub11}

Nuclear fusion based on magnetic confinement of plasma relies primarily on two main approaches. The first is the tokamak, which confines plasma using a helical magnetic field generated in part by a toroidal current driven through the plasma itself. The second is the stellarator, which produces the confining magnetic field entirely through external non-planar coils, eliminating the need for a plasma current. The main advantage of the tokamak lies in its relatively simple coil geometry, while its principal disadvantage is its reliance on a plasma current that must be actively driven and sustained and which can give rise to current-driven instabilities and disruptions. Conversely, the stellarator operates inherently in steady state and avoids such instabilities, but its complex three-dimensional coil geometry poses significant engineering and manufacturing challenges. For a general overview, the reader is referred to \cite{Xu16},\cite{Hel14},\cite{Hel18}.
Although the stellarator concept predates the tokamak, it was set aside in the late 1960s due to its design complexity and the promising results observed in early tokamak experiments.
The international fusion research programme ITER, which has driven and continues to drive numerous research advancements, adopted the tokamak approach.
Today, however, the challenges faced by tokamaks and the successes of recent stellarator experiments have renewed interest in the latter.
Notably, the stellarator has emerged as the leading approach among private fusion companies, with one of the primary engineering challenges being the optimisation of coil geometry to achieve the simplest possible design while maintaining the required confinement properties.
\newline
\newline
Obtaining suitable plasma configurations and corresponding supporting magnetic fields
is far from simple and is the subject of ongoing research \cite{APL19,PALC20,SPWB21,JGCHJL23}.
Traditionally, stellarator design and optimisation consists of two stages:
\begin{enumerate}
	\item Equilibrium optimisation: the plasma domain and corresponding magnetic field configuration are optimised taking into account confinement and stability properties, as well as other physical and engineering constraints. Software packages such as DESC or VMEC can be used in this context~\cite{HW83,DK20,PCDUK23Part1,CDPK23Part2,DCPK23Part3,DCPKUK24}.
	\item Coil optimisation: starting from the result of the previous stage, one aims to design a set of coils that reproduces the desired target magnetic field configuration in the plasma domain $\mathcal{P}$ with acceptable accuracy~\cite{L17,PLBD18,PRS22,RV22}.
\end{enumerate}
Recently, substantial efforts have been directed towards the development of a single-stage optimisation procedure \cite{HHPH21},\cite{WGLCS22},\cite{JGLRW23},\cite{JGL24}, \cite{SLBB24}.
In general, coil optimisation implies minimising, but not eliminating, the error with respect to a target magnetic field. Although this residual error may be small, it can significantly influence the plasma confinement properties. Consequently, jointly optimising the coil geometry and the plasma configuration may yield better results than optimising the coils for a fixed plasma configuration.
\newline
One way to proceed is to start from a coil  winding surface \cite{M87,IGPW24} which reproduces a
current distribution on a surface $\Sigma\subset\mathbb{R}^3$.
To set up the problem of interest we consider a region $\mathcal{P}$
within which we wish to confine the plasma, hereafter called plasma region, and a target magnetic field 
$B_T\in L^2(\mathcal{P},\mathbb{R}^3)$ which we wish to generate
by means of our coil currents. 
Since, according to Maxwell's equations, the coil current induced
magnetic field is divergence free as well as curl-free within $\mathcal{P}$,
we will be considering only divergence free and curl-free target magnetic fields.

We define the full field error
\begin{gather}
	\label{chi_B}
	\chi_B:=\|B(j)-B_T\|_{L^2(\mathcal{P},\mathbb{R}^3)},
\end{gather}
where $B(j)$ denotes the magnetic field induced by a current $j$.
Minimising this quantity results in an ill-posed problem, which is why a Tikhonov regularisation term is introduced to determine an optimal current distribution.
A common choice for the regularisation term is
\begin{gather}
	\label{chi_j}
	\chi_j:=\|j\|_{L^2(\Sigma)},
\end{gather}
which, for a fixed number of contours, encourages their distribution across the domain.
This approach promotes coil spacing and facilitates access for heating and diagnostics.
The optimisation of the current is then formulated as 
\begin{gather}
	\label{1E1}
	\min_{j}\{\chi_B^2+\lambda \chi_j^2\}.
\end{gather}
The minimisation is carried out over all square-integrable vector fields $j$ which are tangent to $\Sigma$
and div-free as vector fields on $\Sigma$.
Such vector fields $j$ are referred to as currents or current distributions.
For a given $\lambda>0$, the second term in \eqref{1E1} favors solutions with lower values of $\chi_j^2$.
This regularisation technique is widely used in the stellarator coil optimisation community to reduce coil complexity.
Other criteria, such as minimising coil curvature, length, or proximity to the plasma, may also be considered.
Within the stellarator coil optimisation community, a standard choice for the field error term is 
$\chi_B:=\|(B(j)-B_T)\cdot\mathcal{N}\|_{L^2(\partial\mathcal{P})}$,
where $\mathcal{N}$ is the outward unit normal to $\partial\mathcal{P}$.
This choice is motivated by the fact that the normal component of the magnetic field on the plasma boundary
uniquely determines the field throughout the plasma domain $\mathcal{P}$ for a fixed toroidal circulation, 
as detailed in \cite[Section 4]{PRS22}.
A detailed discussion on these aspects can be found in \cite[Section 13.4]{IGPW24}.
While the choice of a coil winding surface $\Sigma$ remains a degree of freedom and can be taken into account in the optimisation process \cite{PRS22},
we assume here that a suitable coil winding surface $\Sigma$ has already been determined.
\newline
We focus on the minimisation problem defined by (\ref{chi_B}-\ref{chi_j}-\ref{1E1}).
Standard variational arguments and the strict convexity of the functional ensure that, for any $\lambda>0$, this problem always admits a unique minimiser $j_{\lambda}$.
It was recently proven that $\|B(j_{\lambda})-B_T\|_{L^2(\mathcal{P},\mathbb{R}^3)}\rightarrow 0$ as $\lambda\searrow 0$, meaning that arbitrary precision can, at least theoretically, be achieved \cite[Section 4]{G24}.
In practice, a small parameter $\lambda>0$ is fixed, and the corresponding minimiser $j_{\lambda}$ is computed numerically.

Once the optimal current distribution on the coil winding surface is determined, it must be
approximated using coils. Coil shapes are derived from selected current streamlines and are practically realized by winding a conductive material over a frame.
Traditionally, stellarator coils are constructed from superconducting cables wound onto non-planar frames, producing a filamentary (one-dimensional) current flow.
However, recent technological advancements suggest the use of wide high-temperature superconductive (HTS) surfaces \cite[HTS patents]{RFPatents}.
These can be deposited or wound onto a cylinder, enabling a two-dimensional surface current.
In both cases, understanding the dynamics of surface electric currents is essential.

\begin{figure}[h]
	\centering
	\begin{subfloat}
		\centering
		\includegraphics[width=0.42\textwidth, keepaspectratio]{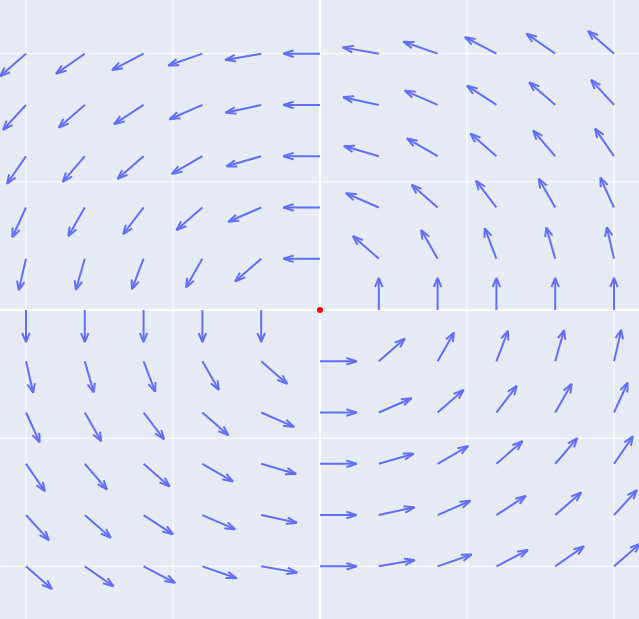}
	\end{subfloat}
	\hspace{2cm}
	\begin{subfloat}
		\centering
		\includegraphics[width=0.42\textwidth, keepaspectratio]{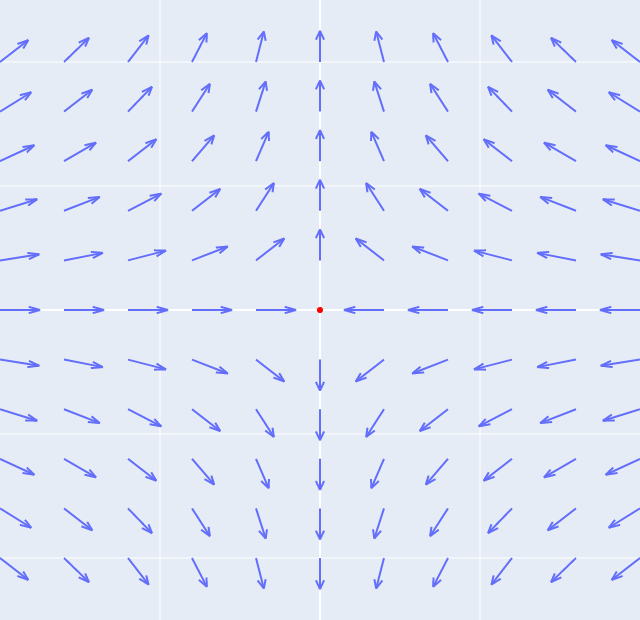}
	\end{subfloat}
	\caption{
		Left side: A centre region, the field lines wrap around in circles around a singularity of the field (red dot).
		Right side: A saddle region, around the red dot.}
	\label{1F1}
\end{figure}

This coil optimisation procedure, developed in the late 1980s and applied to several machines,
has revealed the presence of centre and saddle-point regions, as illustrated in \Cref{1F1}.
Examples of these central regions can be found in \cite{L17,FPKB24}.
The conventional approaches to these issues involve either ignoring them and manually selecting poloidally closing contours
until the coils produce a sufficiently accurate field,
or increasing the Tikhonov regularisation weight to smooth the contours, though this sacrifices field accuracy.
Another observation is that increasing the distance between the coil winding surface and the target plasma equilibrium leads to more complex behavior \cite{LB16,KLM24}.
This distance is a key factor in stellarator optimisation: while a compact design improves cost-efficiency \cite{PV24},
tight spatial constraints introduce significant engineering challenges.
These challenges have historically complicated the manufacturing of devices such as Wendelstein 7-X \cite{Bosch13} and NCSX \cite{Dudek09}.

\begin{figure}[h]
	\centering
	\subfloat[]{%
		\includegraphics[width=0.42\textwidth, keepaspectratio,trim={0 0.5cm 0 1cm},clip]{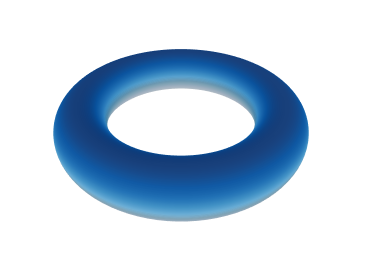}
	}
	\hspace{1cm}
	\subfloat[]{%
		\includegraphics[width=0.42\textwidth, keepaspectratio,trim={0 0.5cm 0 1cm},clip]{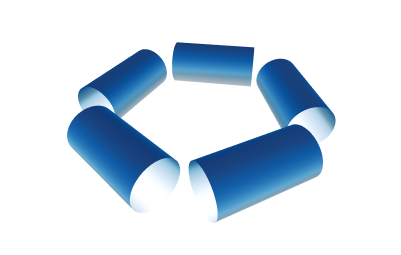}
	}
	\\
	\caption{
		Left: A toroidal coil winding surface. 
		Right side: A coil winding surface made of cylindrical structures.
	}
	\label{1F2}
\end{figure}
    
\begin{figure}[h]
	\subfloat[]{%
		\includegraphics[width=0.45\textwidth, keepaspectratio,trim={0 0.5cm 0 0.5cm},clip]{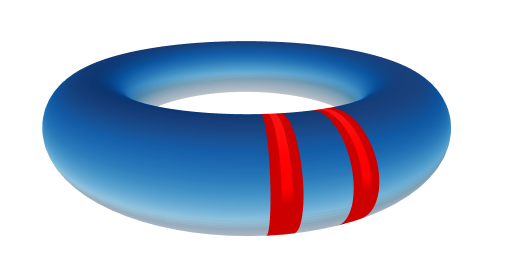}
	}
	\hspace{0.25cm}
	\subfloat[]{%
		\includegraphics[width=0.45\textwidth, keepaspectratio,trim={0 0.5cm 0 0.5cm},clip]{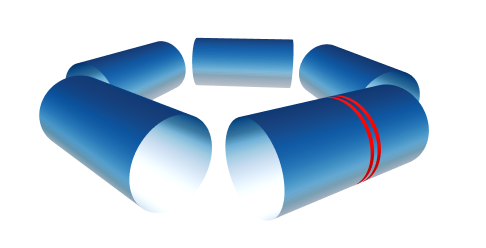}
	}
	\\
	\caption{
		Left: A toroidal coil winding surface with coils depicted by red strips.
		Right: A piecewise cylindrical coil winding surface with coils depicted by red strips.
	}
	\label{1F3}
\end{figure}

When considering the cylindrical coil approach,
the coil winding surface consists of a union of mutually disjoint cylindrical surfaces, as shown in \Cref{1F2}.
The currents are assumed to be tangent to the boundaries.
This raises the question: under what conditions do centre and saddle regions still appear?

We model our physical coils as cylindrical subsurfaces of the coil winding surface $\Sigma$, as illustrated in \Cref{1F3}.
This idealised coil model is especially accurate when using laser-engraved conductive cylinders to shape the electric current distribution needed to generate the
desired magnetic fields within the plasma confinement region,
an approach pioneered by Renaissance Fusion \cite[Magnetic chamber and modular coils patent]{RFPatents}.
Initial benchmark tests have successfully validated this proof of concept, as detailed in \cite{PPPV24}.

\subsection{Results}
\label{Sub12}

As we shall see, when the coil winding surface is assumed to be toroidal, then "generically" certain structures will be observed.
To be more precise, we obtain the following result.
\begin{thm}[Generic behaviour of toroidal currents, informal version]
	\label{1T1}
	Let $T^2\cong\Sigma\subset\mathbb{R}^3$ be a smooth toroidal surface and let $j$ be a "generic" smooth, divergence-free current on $\Sigma$. Then we have the following dichotomy:
	\begin{enumerate}
		\item Either $j(p)\neq 0$ for all $p\in \Sigma$. If this is the case, then one of the following two situations may occur:
		\begin{enumerate}
			\item All orbits of $j$ are dense within $\Sigma$.
			\item All orbits of $j$ are periodic and any two such orbits wind the same number of times in poloidal and toroidal directions before they close again.
		\end{enumerate}
		\item Or $j(p)=0$ for some $p\in \Sigma$. Then the flow of $j$ admits at least one centre- as well as saddle-point region, c.f. \Cref{1F1}. In addition, in this case all but finitely many orbits of $j$ are proper periodic orbits. The remaining orbits consist of the singular points of $j$ and non-periodic orbits which connect saddle points.
	\end{enumerate} 
\end{thm}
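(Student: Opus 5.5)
The approach rests on a single observation: a divergence-free vector field $j$ on an oriented surface $\Sigma$ with area form $\omega$ is an area-preserving flow. Let $\alpha:=\iota_j\omega$. Then $d\alpha=\mathcal{L}_j\omega=(\operatorname{div}j)\,\omega=0$, so $\alpha$ is closed. Since $\Sigma\cong T^2$, Hodge decomposition gives
\[
\alpha=dH+a\,d\theta_1+b\,d\theta_2 ,
\]
where $H\in C^\infty(\Sigma)$ and $a,b\in\mathbb{R}$ are the periods of $\alpha$ along the poloidal and toroidal cycles, i.e.\ the poloidal and toroidal net currents. The zeros of $j$ coincide with the zeros of $\alpha$. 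The proof then splits according to whether $\alpha$ vanishes somewhere.

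\textbf{Case 1: $j\neq 0$ everywhere.} Here $\alpha$ is a nowhere-vanishing closed $1$-form on $T^2$. By Tischler's argument, or more classically Kolmogorov's theorem on area-preserving flows on $T^2$ without equilibria, the flow is smoothly conjugate to a time change of a linear flow with rotation number $a/b$ (up to orientation conventions). If the ratio is irrational, every orbit is dense. If it is rational, say $p/q$, every orbit is closed with winding numbers $(p,q)$. Genericity is not needed in this case.

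\textbf{Case 2: $j(p)=0$ for some $p$.} The genericity assumption I would impose is that every zero of $j$ is nondegenerate, so $\alpha$ is a Morse form in the sense of Novikov. Since $j$ preserves $\omega$, the linearisation at a zero is trace-free. A nondegenerate zero is therefore either a centre (eigenvalues $\pm i\lambda$) or a hyperbolic saddle. Centres are genuinely nonlinear centres, because locally $\alpha=dh$ and $h$ is a Morse function whose level sets are the orbits.

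The Poincaré--Hopf theorem gives $\chi(T^2)=0=\#\text{centres}-\#\text{saddles}$. As at least one zero exists, there is at least one centre and at least one saddle.

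It remains to show that all but finitely many orbits are periodic. First, only finitely many orbits (the saddle separatrices) are non-periodic and not singular. This follows from the Poincaré--Bendixson and Maier theory for area-preserving flows: such a flow has no limit cycles and no attracting behaviour, so each nonsingular orbit is either periodic, a separatrix, or recurrent non-closed. Next, a second genericity requirement rules out recurrent orbits. Nonclosed recurrent orbits of a Morse form occur only when the form has a ``minimal component''. If the periods $a,b$ are rationally dependent, in particular if $(a,b)=(0,0)$ so that $\alpha=dH$ is exact, then every leaf is compact or a separatrix. Without such a restriction the claim fails: an irrational Morse form on $T^2$ can have centres and saddles together with dense orbits. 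The delicate point is therefore fixing what ``generic'' means so that both nondegeneracy and the exclusion of minimal components hold. I would define generic as ``zeros are nondegenerate and saddle connections do not bound a minimal component'', and I would expect to have to verify via Maier's theorem that, with zeros present, the remaining condition holds on an open dense set. This is the main obstacle.

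Finally, with recurrence excluded, the complement of the saddles, centres and separatrices is a finite union of open annuli foliated by periodic orbits. The saddle-connecting orbits form a finite set, since each saddle has four separatrices. This gives the stated description.
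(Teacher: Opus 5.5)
Your treatment of the nowhere-vanishing case (Kolmogorov/Tischler in place of the paper's appeal to \cite[Theorem 9]{PPS22}) matches the paper. So does your treatment of the zeros: trace-free linearisation, a local primitive plus the Morse lemma at index $+1$ zeros, and Poincar\'e--Hopf on $T^2$. The gap is the claim that all but finitely many orbits are periodic. You leave it open, and the repair you propose cannot work.

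For $j$ with nondegenerate zeros, having no minimal component is equivalent to the periods $(a,b)$ of $\iota_j\omega$ being rationally dependent:
\begin{itemize}
\item If $(a,b)=c(p,q)$ with $p,q\in\mathbb{Z}$, then $\iota_j\omega=c\,dF$ for a circle-valued Morse function $F$, and the orbits lie in its compact level sets.
\item If $a/b\notin\mathbb{Q}$, then $\int_\gamma\iota_j\omega=0$ forces every periodic orbit $\gamma$ to be null-homologous, hence to bound a disc. Suppose all but finitely many orbits were periodic. Then the maximal unions of these nested-or-disjoint discs are finitely many disjoint open discs. On each, $\iota_j\omega$ has a primitive that is constant on orbits and can be normalised to vanish on the finite graph of saddles and saddle connections. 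These primitives glue to a global one, contradicting $(a,b)\neq(0,0)$.
\end{itemize}
Adding a $C^\infty$-small harmonic field makes $a/b$ irrational while keeping the nondegenerate zeros. So within the open set of fields that have a zero, ``no minimal component'' holds only on a meagre set: dense, but with dense complement. Hence no open dense set can carry the finiteness conclusion. Your own example (irrational flux plus a bump creating a centre--saddle pair) is a counterexample to it, not an obstacle that a better notion of genericity removes.

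The paper imposes no extra condition at this step. It proves the Lojasiewicz-type bound $|j(q)|\geq C|q|$ near each nondegenerate zero, then invokes \cite[Theorem 1]{CJL99} to conclude that every orbit is singular, periodic, or connects singular points. That input is purely local at the zeros, and your example satisfies it. So whatever the exact hypotheses of the cited theorem, the conclusion drawn from it cannot hold on $T^2$ when $\iota_j\omega$ has irrational periods; the paper's argument does not close this gap either.

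What can be proved is the finiteness statement, and with it the rest of case (ii), for currents whose net poloidal and toroidal currents are rationally dependent. Zero net toroidal current, the usual situation in coil design, is an example. There you can argue as in the cylinder case, \Cref{6T8}, with the circle-valued potential $F$ in place of $\psi$. Nondegeneracy of zeros remains generic within that affine class, because the perturbation used in the proof of \Cref{6L1} adds an exact term and so preserves the periods.
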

\begin{rem}
	\begin{enumerate}
		\item Here, "generic" means that in any given $C^1$-neighbourhood of any given smooth current $j$, we can find a current with the above mentioned dichotomy property and that all currents in some $C^1$-neighbourhood around a generic current, remain generic. More details can be found in \Cref{S3}.
		\item For a more precise description of generic field line behaviour see \Cref{6T7}.
		\item Statement (i) in \Cref{1T1} remains valid for non-generic currents, i.e. if $\Sigma\cong T^2$ and $j$ is any div-free, smooth current on $\Sigma$ which is no-where vanishing, then $j$ satisfies either statement (a) or (b) of \Cref{1T1}. We refer the reader also to \cite{Ga76A}, where statement (i) of \Cref{1T1} was proven under an additional hypothesis, as well as to \cite[Theorem 9]{PPS22} where (i) was established under a more general, so-called, winding condition.
		\item Regarding the behaviour of area preserving flows on closed surfaces of arbitrary genus, we refer the reader also to \cite{Ga76B}.
	\end{enumerate}
\end{rem}
As for cylindrical surfaces, we have the following result
\begin{thm}[Generic behaviour of cylindrical currents, informal version]
	\label{1T2}
Let $S^1\times [0,1]\cong \Sigma\subset\mathbb{R}^3$ be a smooth cylindrical surface and let $j$ be a current on $\Sigma$ such that $j$ is tangent to and non-vanishing on $\partial\Sigma$. If the field lines of $j$ on the two boundary circles of $\partial\Sigma$ point in opposite directions, c.f. \Cref{1F4}, then generically the flow of $j$ has at least one centre and one saddle-point region. In addition, all but finitely many orbits of $j$ are proper periodic orbits. The remaining orbits consist of the singular points of $j$ and non-periodic orbits which connect saddle points of $j$.
\end{thm}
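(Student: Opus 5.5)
The plan is to reduce the problem to the topology of a stream function on the annulus. Since $j$ is divergence free and tangent to $\Sigma$, we write $j=\star d\psi$, i.e. $j=\mathcal{N}\times\nabla\psi$, where $\mathcal{N}$ is a unit normal. On the annulus this requires $\int_{C}j\cdot\nu=0$ for the flux through a cross-section $C$ joining the two boundary circles. Tangency of $j$ to $\partial\Sigma$ gives zero flux through $\partial\Sigma$, so the flux through any curve from one boundary component to the other is a single constant $c$.

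\emph{Existence of the stream function.} If $c\neq0$, the field $j$ is not a global rotated gradient. However, $j$ still has a multivalued potential whose period is $c$ around the core circle. Since the field lines are level sets, we first exclude this case. On the boundary, $\psi$ is locally constant: the boundary circles are orbits, so $\psi|_{\partial_0\Sigma}\equiv a$ and $\psi|_{\partial_1\Sigma}\equiv b$, and $c=b-a$ measured along a transversal. Hence a global single-valued $\psi$ always exists: the net flux through a transversal arc equals $\psi(\text{end})-\psi(\text{start})$, which is well defined because the boundary is invariant. The real obstruction is the circulation, not the flux. Since $j=\mathcal N\times\nabla\psi$ on an annulus, a global $\psi$ exists because the flux of $j$ across the closed core circle is $0$: that flux equals the flux across either boundary circle, which vanishes by tangency. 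So $\psi\in C^\infty(\Sigma)$ is constant on each boundary circle, and orbits of $j$ are the connected components of level sets of $\psi$.

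\emph{Using the opposite orientation.} Next we use the hypothesis that $j$ is oppositely oriented on the two boundary circles. Along $\partial_0\Sigma$, with $j\neq0$, the normal derivative $\partial_\nu\psi$ has a fixed sign, and its sign is determined by the direction of $j$ relative to the boundary orientation. Opposite directions, measured with the orientation induced from a fixed orientation of $S^1$, imply that $\psi$ increases into the interior at both ends, or decreases at both ends. Assume it increases at both. Then $\psi$ attains its maximum in the interior of $\Sigma$. The same conclusion holds in the mixed case where $a\neq b$: the interior values near each boundary exceed the boundary values, so $\max\psi>\max(a,b)$ and it is attained at an interior critical point $p$. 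This gives a zero of $j$.

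\emph{Genericity and the saddle.} Generically, meaning $\psi$ is Morse in the interior (an open dense condition in $C^1$ on $j$, i.e. $C^2$ on $\psi$, argued as in \Cref{S3} for the torus), $p$ is a nondegenerate maximum and hence a centre. For the saddle we use Poincaré–Hopf for the gradient-like field $\nabla\psi$ on the annulus. Doubling, or noting that $-\nabla\psi$ points inward along the whole boundary, gives $\sum \operatorname{ind}=\chi(\Sigma)=0$. Each extremum has index $+1$ and each saddle has index $-1$, so the existence of a centre forces at least one saddle.

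\emph{Periodicity of orbits.} Finally, the Morse function $\psi$ has finitely many critical points. Away from the critical levels, the level sets are compact one-manifolds in the interior (the boundary levels themselves are circles), so each regular component is a circle, i.e. a periodic orbit. The exceptional orbits lie in the finitely many critical level sets: they are the singular points together with the finitely many separatrices, which connect saddles (or saddles to themselves). These are the non-periodic orbits.

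The main obstacle is making the genericity class precise. It must be $C^1$-open and dense among currents tangent to and nonvanishing on $\partial\Sigma$ with opposite orientation, and we must perturb $\psi$ by compactly supported interior functions to achieve the Morse property while keeping the boundary behaviour intact.
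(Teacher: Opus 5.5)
Your proposal is correct and follows essentially the same route as the paper (\Cref{6L2} together with the proof of \Cref{6T8}). The common steps are:
\begin{itemize}
\item a single-valued stream function $\psi$ that is constant on each boundary circle;
\item the opposite orientation forcing an interior extremum of $\psi$, hence a zero of $j$;
\item Poincar\'{e}--Hopf with $\chi(\Sigma)=0$ to obtain a saddle;
\item the Morse lemma to obtain a centre;
\item a level-set analysis for periodicity;
\item genericity obtained exactly as you suggest, by gluing a $C^2$-close Morse approximation of $\psi$ in the interior to $\psi$ itself on a collar where $\nabla_{\Sigma}\psi\neq 0$.
\end{itemize}

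Two small fixes are needed. First, delete your opening claims that a nonzero cross-sectional flux obstructs a single-valued $\psi$; the only obstruction is the flux across the core circle, which you correctly show vanishes. Second, state why a Morse saddle of $\psi$ gives a saddle region in the sense of \Cref{3D2}: $Dj(p)$ is traceless in normal coordinates and $\det Dj(p)=\det\operatorname{Hess}\psi(p)<0$, so Hartman's $C^1$ linearisation applies.
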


\begingroup\centering
\begin{figure}[h]
	\hspace{4.5cm}\includegraphics[width=0.435\textwidth, keepaspectratio]{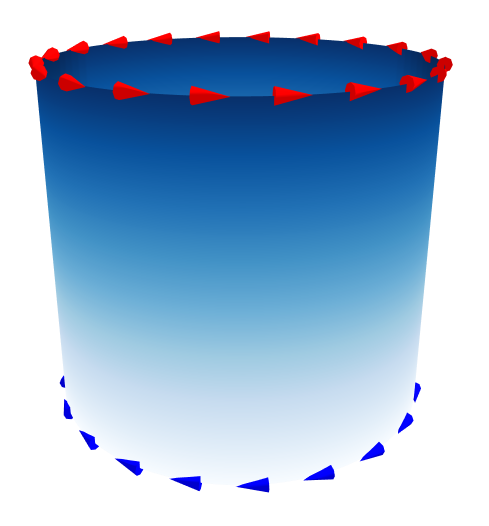}
	\caption{
		Two oppositely oriented boundary circles. Directions indicated by the arrows on a standard cylindrical surface.
		} 
	\label{1F4}
\end{figure}
\endgroup

Next, we make additional assumptions on our currents which stem from physical considerations. 
More precisely, we take into account one more of Maxwell's equations:
$\operatorname{curl}(E)=-\partial_tB$ where $E$ is the induced electric field and $B$ the magnetic field produced by the current at the coils. 
Since stellarators are intended to operate in a steady state, we assume that $\partial_tB=0$. 
Further, we assume that we are dealing with a linear isotropic medium which obeys the law $j=\sigma E$, 
where $j$ is the current producing the magnetic field and $\sigma\in \mathbb{R}$ is a material constant 
(assumed to take the same value along the whole coil winding surface). 
Combining these considerations, we find $\operatorname{curl}(j)=\sigma\operatorname{curl}(E)=-\sigma\partial_tB=0$ 
which, in turn, means that the physical currents are not only div-free on the surface but in fact also curl-free.
\newline
\newline
We note that the physical currents are then naturally assumed to be tangent to the corresponding cylindrical surfaces, since the currents are confined to the coils. Therefore, on any given cylindrical subsurface $\widetilde{\Sigma}\subset \Sigma$, the physical currents are elements of the following space
\begin{gather}
	\label{1E2}
\mathcal{H}_N(\widetilde{\Sigma}):=\left\{\gamma\in L^2(\widetilde{\Sigma},\mathbb{R}^3)\mid \mathcal{N}\cdot \gamma=0=n\cdot \gamma\text{, }\operatorname{curl}_{\widetilde{\Sigma}}(\gamma)=0=\operatorname{div}_{\widetilde{\Sigma}}(\gamma)\right\},	
\end{gather}
where $\mathcal{N}$ denotes a unit normal to $\widetilde{\Sigma}$ and $n$ denotes a unit normal to $\partial\widetilde{\Sigma}$ within $\widetilde{\Sigma}$. I.e. $\mathcal{H}_N(\widetilde{\Sigma})$ consists of the tangent fields of $\widetilde{\Sigma}$ which are div-, curl-free and remain tangent to $\partial\widetilde{\Sigma}$. This space is known as the space of harmonic Neumann fields and it is well-known to be finite dimensional and, in fact, has the same dimension as the first de Rham cohomology group of $\widetilde{\Sigma}$, c.f. \cite[Theorem 2.6.1]{S95}. We also note that elements in $\mathcal{H}_N(\widetilde{\Sigma})$ are smooth whenever $\widetilde{\Sigma}$ is smooth, \cite[Theorem 2.2.7]{S95}. Since the $\widetilde{\Sigma}$ are cylindrical surfaces, we conclude that $\dim\left(\mathcal{H}_N(\widetilde{\Sigma})\right)=1$. It is known that if we let $c_1,c_2$ denote the two boundary circles of $\partial\widetilde{\Sigma}$, then
\begin{gather}
	\label{1E3}
	\Delta \psi =0\text{ in }\widetilde{\Sigma}\text{, }\psi|_{c_1}=0\text{, }\psi|_{c_2}=1
\end{gather}
admits a unique smooth solution, \cite[Theorem 2.4.2.5]{Gris85}, and consequently it is easy to verify that
\begin{gather}
	\label{1E4}
	\gamma_0:=\mathcal{N}\times \nabla_{\widetilde{\Sigma}}\psi\in \mathcal{H}_N(\widetilde{\Sigma})\setminus \{0\}
\end{gather}
is a basis of $\mathcal{H}_N(\widetilde{\Sigma})$ where $\mathcal{N}$ is a unit normal to $\widetilde{\Sigma}$.

Further, if we let $\ell$ be some smooth curve starting at $c_1$ and ending in $c_2$, then we set, for a given current $j$, $I:=\int_{\ell}n\cdot j$ which is the electric current passing through $\ell$. Here $n$ denotes an appropriate unit normal to $\ell$ within $\widetilde{\Sigma}$ which has to be chosen such that $\{\mathcal{N},\dot{\ell},n\}$ forms a right-handed coordinate system where $\dot{\ell}$ denotes the time derivative of $\ell$, see \Cref{1F5}.

With this definition we obtain the formula
\begin{gather}
	\label{1E5}
	j=I\gamma_0\text{ for every }j\in \mathcal{H}_N(\widetilde{\Sigma}),
\end{gather}
i.e. every physical current is uniquely determined by the induced total electric current. We note that the value of $I$ is independent of the chosen curve $\ell$.

\begingroup\centering
\begin{figure}[h]
	\hspace{2cm}\includegraphics[width=.85\textwidth, keepaspectratio,trim={2cm 2cm 1cm 5cm},clip]{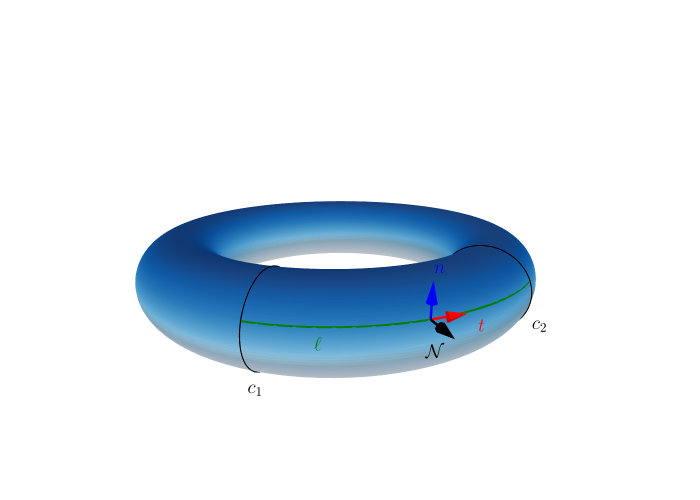}
	\caption{
		The (toroidal) coil winding surface is depicted by the blue surface.
		Two coil boundaries corresponding to $\partial\widetilde{\Sigma}$ are depicted by the black circles.
		The connecting curve $\ell$ (the circular arc) is depicted in green with a tangent vector depicted in red.
		The corresponding normal $n$ which is tangent to $\widetilde{\Sigma}$ and normal to $\ell$ corresponds to the blue upward pointing arrow.
		The remaining black arrow depicts the normal $\mathcal{N}$ which corresponds to an (outward) unit normal of the coil winding surface $\Sigma$.
		} 
	\label{1F5}
\end{figure}
\endgroup

Our main result regarding the behaviour of physical currents is as follows
\begin{thm}[Behaviour of physical currents, informal version]
	\label{1T3}
	Let $S^1\times [0,1]\cong \widetilde{\Sigma}\subset\mathbb{R}^3$ be a smooth cylindrical surface and let $j\in \mathcal{H}_N(\widetilde{\Sigma})$. Then either $j$ is identically zero, or otherwise is no-where vanishing. Further, if $j$ is not identically zero, then every field line of $j$ is a periodic poloidal orbit and in particular, the flow of $j$ does not admit any centre or saddle regions.
\end{thm}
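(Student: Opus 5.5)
By \eqref{1E5} every $j\in\mathcal{H}_N(\widetilde{\Sigma})$ is of the form $j=I\gamma_0$ with $\gamma_0=\mathcal{N}\times\nabla_{\widetilde{\Sigma}}\psi$ and $\psi$ the solution of \eqref{1E3}. So it suffices to treat $I\neq 0$ and prove everything for $\gamma_0$. Since $\mathcal{N}\times\cdot$ rotates tangent vectors by $\pi/2$, the zeros of $j$ are exactly the critical points of $\psi$. The field lines of $j$ are tangent to the level sets of $\psi$, because $j\cdot\nabla_{\widetilde{\Sigma}}\psi=0$. The whole statement therefore reduces to a statement about the harmonic function $\psi$: it has no critical points in $\widetilde{\Sigma}$ (including the boundary), and its level sets are closed curves homotopic to the boundary circles.

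The first step is to show $\nabla_{\widetilde{\Sigma}}\psi\neq 0$ on $\partial\widetilde{\Sigma}$. By the strong maximum principle, $0<\psi<1$ in the interior. Hopf's boundary lemma, applied in local isothermal coordinates where $\Delta$ becomes a positive multiple of the flat Laplacian, then gives $\partial_n\psi\neq 0$ on $c_1$ and $c_2$.

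The second step, which I expect to be the main obstacle, is to exclude interior critical points. I would argue via the local structure of harmonic functions in two dimensions. In isothermal coordinates $\psi$ is locally the real part of a holomorphic function. Near an interior critical point $p$ of order $k\geq 1$, it therefore looks like $\psi(p)+\operatorname{Re}(z^{k+1})$, so the level set $\{\psi=\psi(p)\}$ consists of $2(k+1)\geq 4$ arcs emanating from $p$. These arcs cannot end in the interior: by analyticity each continues as an analytic curve through possibly further critical points. They cannot reach $\partial\widetilde{\Sigma}$ either, since $\psi\in\{0,1\}$ there while $\psi(p)\in(0,1)$.

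Hence the level set $L=\{\psi=\psi(p)\}$ is a finite graph in the open annulus, with every vertex of degree at least $2$ and $p$ of degree at least $4$. Such a graph has first Betti number at least $2$, so it contains two independent cycles. Because the annulus has $H_1\cong\mathbb{Z}$, some cycle bounds a disc $D$ disjoint from $\partial\widetilde{\Sigma}$. More carefully, the complement of $L$ has a component whose boundary lies entirely in $L$, and that component avoids $c_1$ and $c_2$. Then $\psi\equiv\psi(p)$ on $\partial D$, and the maximum principle forces $\psi$ to be constant on $D$, hence constant everywhere by unique continuation. This contradicts the boundary values. An alternative route I would keep in reserve is the Euler characteristic and Poincaré--Hopf count: the double of $\widetilde{\Sigma}$ is a torus, and reflecting $\gamma_0$ across the boundary gives a tangent field whose zeros all have negative index $-k$, while the index sum must be $\chi(T^2)=0$.

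Finally, $\psi:\widetilde{\Sigma}\to[0,1]$ is a submersion onto $[0,1]$ that is constant on each boundary component. By Ehresmann's fibration theorem (for manifolds with boundary, with proper maps), each level set $\psi^{-1}(t)$ is diffeomorphic to $c_1\cong S^1$, and the level sets foliate $\widetilde{\Sigma}$ by circles isotopic to the boundary circles. Since $j$ is non-vanishing and tangent to these compact circles, each field line traverses its whole circle in finite time. Every orbit is therefore periodic and poloidal (parallel to the boundary), and there are no centre or saddle regions, since these would require zeros of $j$.
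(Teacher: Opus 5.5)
Your proof is correct, but the central step (no interior critical points) takes a different route from the paper's.

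\textbf{The paper's route.} The paper argues by index counting. Lemma~\ref{6L4} shows that every interior zero of the harmonic field $\nabla_{\Sigma}\psi$ has index $-(m(p)+1)\le -1$. On the other hand, mapping $\Sigma$ onto a round annulus and computing the winding of $\nabla_{\Sigma}\psi$ along the two boundary circles gives an index sum of $1-1=0$. The winding computation relies on $\nabla_{\Sigma}\psi$ being nonzero and transverse there by Hopf's lemma, and it needs an auxiliary extension of $\psi$ to $D_2$ with isolated critical points. Poloidality is then obtained separately from Lemma~\ref{6L3} (maximum principle on a disc bounded by a contractible orbit, plus unique continuation), applied to the finitely many circles making up each regular level set.

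\textbf{Your route.} You use the holomorphic normal form $\psi(p)+\operatorname{Re}(w^{k+1})$ together with planar topology and the same maximum-principle mechanism, now applied to a critical level set. One point should be tightened. The remark ``the annulus has $H_1\cong\mathbb{Z}$'' by itself only yields a null-homologous cycle, not necessarily a simple one bounding a disc. The clean version is Euler's formula: embed the annulus in $S^2$; the connected graph through $p$ has $1+b_1\ge 3$ faces; at most two of them contain the complementary discs. The remaining face has boundary inside $\{\psi=\psi(p)\}$, and the maximum principle applies to it.

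\textbf{What each approach buys.} Your argument avoids index theory and the extension to $D_2$. Its interior part does not even need Hopf's lemma, since $\{\psi=\psi(p)\}$ automatically stays away from $\partial\widetilde{\Sigma}$. Your last step gives more than the paper states. The flow of $\nabla_{\widetilde{\Sigma}}\psi/|\nabla_{\widetilde{\Sigma}}\psi|^2$ from $c_1$ gives a diffeomorphism $[0,1]\times c_1\cong\widetilde{\Sigma}$; this is the cleanest way to make your Ehresmann step rigorous, since the target $[0,1]$ has boundary, and the paper uses the same flowout at the end of the proof of Theorem~\ref{6T8}. It shows that every level set is a single circle isotopic to the boundary, so poloidality follows without Lemma~\ref{6L3}. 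The paper's index machinery, in turn, is reused elsewhere: in Corollary~\ref{6C5} for closed surfaces of any genus, and in the index-sum step of Theorem~\ref{ExtraT2}. Your reserve argument by doubling to a torus is essentially the paper's index count in different form.
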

Here poloidal means that the orbits wrap once fully around the cylinder or, in other words, that the orbits are not contractible, i.e., they represent non-trivial elements of the first fundamental group of $\widetilde{\Sigma}$.
\newline
\newline
We conclude our discussions about the dynamics of surface currents by including some further results regarding area-preserving flows and low-dimensional dynamical systems. The following is an immediate consequence of Poincar\'{e}'s famous recurrence theorem, see for instance \cite[Theorem 4]{GoHe49}, where we call a point $p\in \Sigma$ recurrent with respect to a flow $\psi_t$ if there exists sequences $(t^{\pm}_k)_k$ with $t^{\pm}_k\rightarrow \pm \infty$ and $\psi_{t^{\pm}_k}(p)\rightarrow p$.
\begin{thm}[Recurrent dynamics-Poincar\'{e} recurrence, {\cite[Theorem 4]{GoHe49}}]
	\label{1T4}
	Let $\Sigma\subset\mathbb{R}^3$ be a smooth, compact surface with possibly non-empty boundary. Let further $j\parallel \Sigma$ be a smooth tangent field with $\operatorname{div}_{\Sigma}(j)=0$ and such that $j\parallel \partial\Sigma$ in case $\partial\Sigma\neq\emptyset$. Then there exists a subset $S\subset \Sigma$ of zero area such that every $p\in \Sigma\setminus S$ is recurrent with respect to the flow of $j$.
\end{thm}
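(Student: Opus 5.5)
The plan is to deduce \Cref{1T4} from the abstract Poincar\'{e} recurrence theorem for measure-preserving flows on finite measure spaces. First I show that the flow $\psi_t$ of $j$ is globally defined and preserves the area measure $dA$ of $\Sigma$. Since $\Sigma$ is compact and $j$ is smooth, solutions of $\dot{x}=j(x)$ extend for all times: in the interior this is standard compactness, and because $j\parallel\partial\Sigma$, the boundary $\partial\Sigma$ is invariant. Flow lines starting on $\partial\Sigma$ stay in $\partial\Sigma$, and by uniqueness, interior points cannot reach the boundary in finite time. One can make this precise by extending $j$ smoothly to a slightly larger open surface and noting that $\partial\Sigma$ is a union of orbits. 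Area preservation follows from Liouville's formula
\[
\frac{d}{dt}\psi_t^{*}(dA)=\psi_t^{*}(\mathcal{L}_j dA)=\psi_t^{*}\big(\operatorname{div}_{\Sigma}(j)\,dA\big)=0 .
\]
Hence $\int_{\psi_t(U)}dA=\int_U dA$ for every Borel set $U$.

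Next I apply recurrence. Fix a countable basis $(U_k)_k$ of the topology of $\Sigma$, for instance geodesic balls with rational radii around a countable dense set. For each $k$, consider the time-one map $\phi=\psi_1$ and the set
\[
B_k:=\{p\in U_k\mid \psi_n(p)\notin U_k\text{ for all integers } n\geq 1\}.
\]
The preimages $\phi^{-m}(B_k)$, $m\geq 0$, are pairwise disjoint and all have the same area. Since $\operatorname{area}(\Sigma)<\infty$, this forces $\operatorname{area}(B_k)=0$. The same argument with $\psi_{-1}$ gives a null set $B_k^{-}$ for backward times. Set $S:=\bigcup_k (B_k\cup B_k^{-})$, which has zero area as a countable union of null sets.

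Finally I check recurrence. Take $p\notin S$ and any neighbourhood $V\ni p$, and choose $U_k$ with $p\in U_k\subset V$. Then $p$ returns to $U_k$ at some positive integer time. To get times tending to $+\infty$, I apply the same statement to the null sets $\{p\in U_k\mid \psi_n(p)\notin U_k \ \forall n\geq N\}$ for each $N$. This only adds countably many null sets to $S$, and their measure vanishes by the identical disjointness argument applied to $\phi^N$. Letting the radii of the $U_k$ shrink and extracting a diagonal sequence yields $t_k^{+}\to+\infty$ with $\psi_{t_k^{+}}(p)\to p$, and similarly $t_k^{-}\to-\infty$.

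The main obstacle is the first step. One must make sure the flow is complete and measure-preserving on a manifold with boundary, which rests on the tangency $j\parallel\partial\Sigma$. The recurrence part is the classical argument and could alternatively be quoted directly from \cite[Theorem 4]{GoHe49}.
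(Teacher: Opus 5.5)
Your proposal is correct and takes essentially the same route as the paper, which gives no independent proof and simply invokes Poincar\'{e}'s recurrence theorem via \cite[Theorem 4]{GoHe49}. You additionally check the hypotheses the paper leaves implicit (a complete flow because $j\parallel\partial\Sigma$, and area preservation because $\operatorname{div}_{\Sigma}(j)=0$), and you write out the classical disjoint-preimages argument, including the step that produces return times $t^{\pm}_k\to\pm\infty$.
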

In the case of the cylinder, \Cref{1T4} can be significantly sharpened. This result will be a consequence of the Poincar\'{e}-Bendixson theorem in combination with a modification of a proof of a related result, c.f. \cite[Theorem 14.1.1]{HasKa95}, which asserts that if all points of a given orbit of a vector field (not necessarily div-free) defined on an open subset of $S^2$ are positively recurrent, then the corresponding orbit must be periodic.
\begin{thm}[Almost all orbits are periodic, informal version]
	\label{1T5}
	Let $S^1\times [0,1]\cong \Sigma\subset\mathbb{R}^3$ be a smooth cylindrical surface and $j\parallel \Sigma$ be a smooth tangent field such that $\operatorname{div}_{\Sigma}(j)=0$ and $j\parallel \partial\Sigma$. Then there exists some $S\subset \Sigma$ of zero area such that every orbit of $j$ starting at some $p\in \Sigma\setminus S$ is periodic. In addition, if $j(p)\neq 0$ for all $p\in \Sigma$, then all orbits of $j$ are poloidal periodic orbits.
\end{thm}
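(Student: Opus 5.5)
We plan to combine \Cref{1T4} with the Poincar\'{e}--Bendixson theorem on the cylinder, following the proof of \cite[Theorem 14.1.1]{HasKa95}. Embed $\Sigma\cong S^1\times[0,1]$ into $S^2$: glue a disc onto each boundary circle, so $\Sigma$ is a closed annulus in $S^2$. Since $j\parallel\partial\Sigma$, the flow $\psi_t$ of $j$ is complete and leaves $\Sigma$ and $\partial\Sigma$ invariant. By \Cref{1T4} there is a null set $S_0$ such that every $p\in\Sigma\setminus S_0$ is recurrent. We further enlarge $S_0$ by the zero set $Z=\{j=0\}$, provided $Z$ has measure zero. If $Z$ has positive area, we instead use that at a.e.\ point of $Z$ the point is a fixed point, so its orbit is trivially periodic; we count such constant orbits as periodic. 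We set $S$ to be the union of $S_0$ and the set of points whose orbit is non-periodic but recurrent; the main task is to show this second set is empty.

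Fix $p\in\Sigma\setminus S_0$ with $j(p)\neq0$, and suppose its orbit $\mathcal O(p)$ is not periodic. Since $p$ is positively recurrent, $p\in\omega(p)$. Take a local transversal $\tau$ through $p$, i.e.\ a short arc transverse to $j$ (a half-open arc if $p\in\partial\Sigma$; boundary points are simpler anyway, since $\partial\Sigma$ is a union of invariant circles). Recurrence gives return times $t_1<t_2$ with $\psi_{t_1}(p),\psi_{t_2}(p)\in\tau$ accumulating at $p$. Consider the first return $q=\psi_{t_1}(p)\in\tau$, $q\neq p$. The orbit segment from $p$ to $q$ together with the subarc $[p,q]\subset\tau$ forms a Jordan curve $C$ in $S^2$. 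As in the standard Poincar\'{e}--Bendixson argument, the flow crosses $[p,q]$ in one direction only, so one component $U$ of $S^2\setminus C$ is positively invariant. Then the forward orbit of $q$ stays in $\overline U$ and successive intersections with $\tau$ are monotone along $\tau$, moving away from $p$. This contradicts $\psi_{t_k}(p)\to p$.

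Here two points need care. First, $U$ may contain the glued-in caps. We avoid this by working in $\Sigma$ itself: the flow does not cross $\partial\Sigma$, so invariance of $U\cap\Sigma$ is unaffected. Second, we must check that the monotonicity argument does not rely on a compact invariant region containing no singular points. It does not: it only uses that the flow crosses $\tau$ in a single direction. Hence every recurrent regular point is periodic, and $S:=S_0$ (together with the null part of $Z$ if needed) works.

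For the second claim, assume $j\neq0$ everywhere. Then no point has a singular $\omega$-limit, so by Poincar\'{e}--Bendixson every $\omega$-limit set is a periodic orbit. An orbit spiralling onto a limit cycle contradicts area preservation, because a thin annulus flux/area argument applies: the region between two successive transversal crossings would be mapped strictly into itself. Hence every orbit is periodic, including those in $S$. Each periodic orbit $\gamma$ is a Jordan curve in $\Sigma$. If $\gamma$ were contractible it would bound a disc $D\subset\Sigma$ invariant under the flow. By the Poincar\'{e}--Hopf/index theorem, $j$ restricted to $D$ would then have a zero, which is a contradiction. So every orbit is non-contractible, i.e.\ poloidal. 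I expect the main obstacle to be the Jordan-curve/transversal step near boundary points and handling non-isolated zeros; the rest is standard.
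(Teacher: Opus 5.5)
Your proposal is correct, but the key step takes a different route from the paper's.

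\textbf{The paper's argument.} It also starts from Poincar\'{e} recurrence and a transversal through a recurrent regular point $p$. It does not use the Jordan curve theorem or the monotonicity lemma. Instead it uses a global stream function. By the Hodge--Morrey decomposition and the structure of $\mathcal{H}_N$ on the cylinder, one writes $j=\mathcal{N}\times\nabla_{\Sigma}f$ with $f$ single-valued on $\Sigma$. The transversal is taken to be a field line of $Y=j\times\mathcal{N}=\nabla_{\Sigma}f$. Closing the orbit segment with a piece of this gradient line gives $0=\int_{\gamma}df=\int_0^{\tau_Y}|Y|^2\,ds$, which forces the first return point to be $p$ itself. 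The same computation at a regular point of $\omega(q)$ shows directly that non-periodic orbits can only accumulate on zeros. In the non-vanishing case, this replaces both your appeal to Poincar\'{e}--Bendixson and your area/flux argument against limit cycles. The final step, that a contractible periodic orbit would bound a disc containing a zero, is the same as yours.

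\textbf{Your argument.} You use the divergence-free condition only through Poincar\'{e} recurrence, and through area preservation in the second part. Everything else rests on the planarity of the annulus. The monotonicity lemma shows that on a planar surface any regular point with $p\in\omega(p)$ is periodic, for an arbitrary flow.

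\textbf{Comparison.} The paper's proof is shorter, handles boundary points and the $\omega$-limit statement uniformly, and avoids Jordan-curve and Poincar\'{e}--Bendixson machinery. It pays for this with the Hodge decomposition and the exactness of $Y$, which is where the cylinder topology enters. Your proof needs no potential theory. It also makes clear that the first claim is really ``recurrence plus planar topology''.

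\textbf{Minor point.} Your separate treatment of $Z=\{j=0\}$ is unnecessary. Every point of $Z$, not just almost every point, is a fixed point, and fixed points count as periodic orbits. So $S:=S_0$ works outright.
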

In the statement of \Cref{1T5} fix points are also considered as periodic orbits.
\begin{rem}
	\label{1R6}
	\begin{enumerate}
		\item We note that under the assumption $\partial\Sigma=\emptyset$ and some additional assumptions on the vector field $j$, a sharpening of \Cref{1T4} is possible, c.f. \cite[Theorem 1]{CJL99}.
		\item The result in \Cref{1T5} is optimal in the sense that one can construct div-free fields $j$ tangent to the boundary of a cylinder which support either only contractible orbits, or only poloidal orbits, or orbits of both types, as well as may admit non-periodic orbits, so that no additional information beyond what is contained in \Cref{1T5} is known a priori for a general div-free field.
		\item We emphasise that the results in \Cref{1T1} and \Cref{1T2} do not hold for all div-free tangent fields with the corresponding boundary conditions, but indeed only generically so that in this sense these results are also optimal. To see this, we can first consider a standard rotationally symmetric torus and let $\phi$ and $\theta$ denote the toroidal and poloidal angles, respectively. We can then (with a slight abuse of notation) consider any function $f(\phi,\theta)=f(\theta)$ and observe that $j:=(\partial_{\theta}f)e_{\phi}$, where $e_{\phi}$ is the normalised basis vector in toroidal direction, is div-free for any choice of such an $f$. It is clear that such a vector field cannot support a centre region because it has a vanishing poloidal component. Additionally, $\partial_{\theta}f$ must vanish somewhere since $f$ must admit a global extremum and therefore no such vector field $j$ is of type (i) nor of type (ii) as described in \Cref{1T1}. As for cylindrical surfaces, we can consider the standard rotationally symmetric cylindrical surface centered around the $z$-axis of height $2$ ($-1\leq z\leq 1$) and radius $1$. We can consider the vector field $j:=z e_{\theta}$, where $e_{\theta}$ once more denotes the (normalised) coordinate field in poloidal direction. 
		Since the boundary circles correspond to $z=-1$ and $z=1$ respectively, $j$ is non-vanishing on the boundary and of opposite orientation on the boundary circles. However, the flow of $j$ cannot admit any centre regions since $j$ points solely in poloidal direction. 
	\end{enumerate}
\end{rem}
\subsection{Comparison to previous results in the literature}
Area preserving flows have been subject of mathematical interest for a long time and we do not attempt to provide a full list of references. However, as a representative publication, we want to point out reference \cite{MaWa01}, since it gives a rather complete and general classification theorem of flows of div-free vector fields with non-degenerate zeros on general compact $2$-manifolds with or without boundary, c.f. \cite[Theorem 3.1, Theorem 4.1, Theorem 5.1]{MaWa01}. Our results complement the classification in \cite{MaWa01} by providing additional details in the specific situations relevant in plasma physics. For instance, our detailed structure theorem for the torus, c.f. \Cref{6T7}, states that if $j$ has no zeros, then its flow is semi-rectifiable (which will be a consequence of a result in \cite{PPS22}), as well as that any two orbits contained within certain regions are necessarily ambient isotopic to each other. The latter statement may be useful to characterise toroidal flows further by means of the toroidal and poloidal windings of its field lines around the toroidal surface, which are a degree of freedom in the current reconstruction process. Further, the approach we take is distinct from \cite{MaWa01} and exploits a Lojasiewicz-type inequality satisfied by the underlying vector field so that our approach is more flexible and not restricted to div-free fields (even though, of course, we exploit the div-free property to derive additional information about the flows). The structure of area-preserving flows of non-degenerate div-free fields on subsets of $S^2$ has been studied in \cite[Theorem 2.5]{MaWa02}. Again, the arguments we use differ from those in \cite{MaWa02} and, at the same time, since we restrict attention to the annulus, we can provide a more detailed characterisation. Regarding our \Cref{1T3}, we fully exploit that $j$ is harmonic, tangent to the boundary as well as the fact that $\widetilde{\Sigma}$ is an annulus which allows a complete (simple) characterisation of any such flow. Possibly because \Cref{1T3} restricts attention to such a special situation, such a precise characterisation is difficult to find in the literature. Our last \Cref{1T5} resembles a classical exercise in a dynamical systems book, but we did not find it in standard references such as \cite{HasKa95}. For this reason, we decided to include it for completeness and future reference.

Throughout, we attempted to provide detailed proofs, which are often omitted in the literature, as long as they appeared to be of reasonable length in order to make this paper more self contained and to be more accessible to people who are possibly less familiar with the theory of dynamical systems but for whom the results of the present manuscript may be of interest.
\subsection{Structure of the paper}
\label{1S3}
In section 2, we introduce some of the mathematical notation which we will use throughout this manuscript.
In section 3, we give precise definitions of the concepts which are necessary to formulate the main results and to clarify the meaning of certain notions which we will refer to when discussing numerical observations.
In section 4, we describe the numerical observations of the centre and saddle structures which motivated the theoretical results of this manuscript as well as discuss the relevance of the theoretical results for practical applications in plasma physics.
Section 5 contains the proofs of the main results \Cref{1T1},\Cref{1T2},\Cref{1T3} and \Cref{1T5}.
\Cref{A} discusses a more general version of \Cref{1T2} whose conclusion also involves a dichotomy akin to \Cref{1T1}, see \Cref{ExtraT2} for the details.

\section{Notation}
\label{S2}

We will denote by $\Sigma\subset\mathbb{R}^3$ a smooth, connected, compact surface with (possibly empty or disconnected) boundary. Usually in our applications we will have $T^2\cong \Sigma$ or $S^1\times [0,1]\cong \Sigma$ where $S^1$ denotes the unit circle and $T^2$ denotes the $2$-dimensional torus and $\Sigma$ will be embedded in $\mathbb{R}^3$. Given a smooth surface $\Sigma$, we denote by $\mathcal{V}(\Sigma):=\{X\in C^{\infty}(\Sigma,\mathbb{R}^3)\mid X\parallel \Sigma\}$ the smooth tangent fields on $\Sigma$. We denote by $\mathcal{N}$ a smoothly varying unit normal to $\Sigma$ (usually chosen to be the outward unit normal in case $\Sigma$ is a closed surface). The surface gradient of a function $\phi\in C^{\infty}(\Sigma)$ is defined as follows $\nabla_{\Sigma}\phi:=\nabla \Phi-(\mathcal{N}\cdot \nabla \Phi)\Phi$ where $\Phi$ is an arbitrary extension of $\phi$ into an open neighbourhood of $\Sigma$ and $\nabla$ denotes the standard Euclidean gradient. The value of $\nabla_{\Sigma}\phi$ is independent of the chosen extension. We denote by $\mathcal{V}_0(\Sigma):=\{X\in \mathcal{V}(\Sigma)\mid \operatorname{div}_{\Sigma}(X)=0\}$ the divergence-free tangent fields on $\Sigma$, where $X\in \mathcal{V}(\Sigma)$ is said to be divergence-free if $\int_{\Sigma}X\cdot \nabla_{\Sigma} \phi d\sigma=0$ for every $\phi\in C^{\infty}_c(\operatorname{int}(\Sigma))$, where $\operatorname{int}(\Sigma)$ denotes the manifold interior. For a given $X\in \mathcal{V}(\Sigma)$, we say that $X$ is curl-free on $\Sigma$, in symbols $\operatorname{curl}_{\Sigma}(X)=0$, if $\int_{\Sigma}X\cdot (\mathcal{N}\times \nabla_{\Sigma} \phi)d\sigma=0$ for all $\phi\in C^{\infty}_c(\operatorname{int}(\Sigma))$.
\newline
Given a smooth tangent field $X\in \mathcal{V}(\Sigma)$ with $X\parallel \partial\Sigma$ we denote by $\psi_t$ its flow, i.e., for fixed $t\in \mathbb{R}$ and any $p\in \Sigma$, $\psi_t(p)$ denotes the field line of $X$ starting at $p$ at time $t$. Sometimes we will denote for fixed $p\in \Sigma$ by $\gamma_p$ the field line of $X$ starting at $p$ so that $\gamma_p(t)=\psi_t(p)$ and the distinct notations are used to emphasise that we either consider the specific curve $\gamma_p:\mathbb{R}\rightarrow\Sigma$ or the induced diffeomorphism $\psi_t:\Sigma\rightarrow\Sigma$.
\newline
For a given vector field $X\in \mathcal{V}(\Sigma)$, $X\parallel \partial\Sigma$, we denote for given $p\in \Sigma$ the corresponding $\omega$-limit set by $\omega(p):=\{q\in \Sigma\mid \exists (t_k)_k\subset\mathbb{R}\text{ with }t_k\rightarrow\infty\text{ and }\gamma_p(t_k)\rightarrow q\}$, where $\gamma_p$ denotes the corresponding field line of $X$ starting at $p$. The $\alpha$-limit set is defined correspondingly with the only difference that we consider sequences $(t_k)_k$ diverging to $-\infty$ and is denoted by $\alpha(p)$.
\newline
If $X\in \mathcal{V}(\Sigma)$ has only isolated singularities, i.e., only isolated zeros, within $\operatorname{int}(\Sigma)$, then for any given $p\in \{q\in \operatorname{int}(\Sigma)\mid X(q)=0\}$ we denote its index by $\operatorname{ind}_p(X)$.

\section{Mathematical preliminaries}
\label{S3}
We start by giving some definitions.
\begin{defn}[Centre regions]
	\label{3D1}
	Let $\Sigma$ be a connected, compact, smooth surface with (possibly empty or disconnected) boundary and let $X\in \mathcal{V}(\Sigma)$ be a smooth tangent field with $X\parallel \partial\Sigma$ (in case $\partial\Sigma\neq\emptyset$). We say that $X$ or the flow of $X$ has a \textit{centre region} if there is some $p\in \operatorname{int}(\Sigma)$ with $X(p)=0$ and such that every open neighbourhood $U$ of $p$ contains a (possibly smaller) neighbourhood $W\subseteq U$ of $p$ such that $W\setminus \{p\}$ is a union of contractible proper periodic orbits, i.e., non-constant periodic orbits which can be shrank to a point, of $X$. Given such a $p$, we call $U\subset \Sigma$ \textit{the centre region containing $p$} if $p\in U$, $U$ is a connected open set such that $U\setminus \{p\}$ consists entirely of contractible, proper periodic orbits of $X$ and $U$ is maximal (with respect to inclusion) among all such open neighbourhoods. 
\end{defn}
The next definition concerns the saddle point region. We recall here that a zero $p\in \Sigma$ of a vector field $X\in \mathcal{V}(\Sigma)$ is called non-degenerate if in some (and hence every) fixed coordinate system the Jacobian of the local representation of $X$ at $p$ is invertible. Note that according to the inverse function theorem all non-degenerate zeros are isolated. A (real) $2\times2$ matrix is said to admit a hyperbolic saddle singularity at $0$ if it admits two real, non-zero eigenvalues of opposite sign.
\begin{defn}[Saddle region]
	\label{3D2}
	Let $\Sigma$ be a connected, compact, smooth surface with (possibly empty or disconnected) boundary and let $X\in \mathcal{V}(\Sigma)$ be a smooth tangent field with $X\parallel \partial\Sigma$. Given some $p\in \operatorname{int}(\Sigma)$ with $X(p)=0$, we say that some open neighbourhood $p\in U\subset\operatorname{int}(\Sigma)$ around $p$ is a \textit{saddle region} if $U$ is contained in the domain of a coordinate chart $\mu$ around $p$ and if there exists a $C^1$-diffeomorphism $\eta:U\rightarrow W$ onto some open subset of $\mathbb{R}^2$ with the property that $\eta(p)=0$ and $\eta_*X=A$ where $A$ is a $2\times2$-matrix with a hyperbolic saddle singularity at $0$.
\end{defn}
\begin{rem}
	\label{3R3}
	\begin{enumerate}
		\item $\eta_*X$ denotes the corresponding push-forward vector field.
		\item The condition $\eta_*X=A$ is equivalent to the condition that for given $q\in U$, if we let $x:=\eta(q)$ and denote by $\gamma^X_q$ the field line of $X$ starting at $q$ and by $\gamma^A_x$ the field line of the vector field induced by the linear operator $A$, then $\gamma^X_q=\eta^{-1}\circ \gamma^A_x$. Further, upon shrinking the domain if necessary and upon composing $\eta$ with the map $\widetilde{\eta}:\mathbb{R}^2\rightarrow\mathbb{R}^2$, $(x,y)\mapsto (x,-y)$, we may assume that $\eta$ is orientation preserving. In particular, then $\gamma^X_q$ has the same qualitative behaviour as $\gamma^A_x$. The qualitative behaviour of such a field line $\gamma^A_x$ and henceforth of $\gamma^X_q$ is depicted in \Cref{1F1}.
		\item According to the Hartman-Grobman theorem there exists a homeomorphism transforming the field lines of $X$ into field lines of its linearisation in local coordinates whenever the Jacobian of $X$ (in some chosen coordinate system) at the point $p$ has eigenvalues with non-vanishing real part. Further, in the special case of $2$-dimensional systems this homeomorphism may be taken to be of class $C^1$, c.f. \cite[Part III]{H60}.
	\end{enumerate}
\end{rem}
	After introducing the precise mathematical meanings of the dynamical structures of interest we now come to the notion of Morse functions which will allow us to talk about "generic" current distributions.
	\begin{defn}[Morse functions]
		\label{3D4}
		Let $\Sigma$ be a connected (not necessarily compact) smooth surface with empty boundary. Then a function $f:\Sigma\rightarrow\mathbb{R}$ is a \textit{Morse function} if all critical points of $f$ are non-degenerate, i.e. if for all $p\in \Sigma$ for which in some local coordinate system $\mu$ we have $\partial_1f(p)=0=\partial_2f(p)$, we necessarily have that the corresponding Hessian at $p$ is invertible.
	\end{defn}
	Morse functions will be the key element in constructing "generic" subsets of the space of div-free fields which admit nice properties. To be more precise regarding the notion of "genericity" we give the following definition.
	\begin{defn}[Genericity]
		\label{3D5}
		Let $\Sigma$ be a smooth, connected, compact surface with (possibly empty or disconnected) boundary. Suppose we are given some subset $B\subseteq \mathcal{V}(\Sigma)$. Then we say that a subset $S\subseteq B$ is \textit{($C^1$-)generic} if it is relatively open and dense in $B$ with respect to the $C^1$-topology.
		\newline
		Further, a property is said to hold \textit{($C^1$-)generically} among elements of $B$ if there exists a generic subset $A\subseteq B$ such that the property holds for all elements in $A$.
	\end{defn}
	Note that for any given subset $B\subset \mathcal{V}(\Sigma)$ it is the case that $B$ is a generic subset of itself, so that if a property holds for all elements in $B$ then it also holds generically.
	\newline
	An important example in a different context is the property of a smooth function to be a Morse function, i.e. generically every smooth function on a compact smooth manifold is a Morse function \cite[Proposition 1.2.4 \& Theorem 1.2.5]{AD14}. We will utilise the notion of Morse functions in order to obtain suitable generic sets of the div-free tangent fields on a surface which have nice properties.
	\newline
	\newline
	The next two notions will be used to provide a detailed characterisation of generic div-free flows on the torus.
	\begin{defn}[Ambient isotopy (relative to a set)]
		\label{3D6}
		Let $M$ be a smooth, connected $n$-manifold without boundary. Let $\gamma_i:S^1\rightarrow M$, $i=1,2$, be two topological embeddings, i.e. homeomorphisms onto their image, then we say that $\gamma_1$ and $\gamma_2$ are \textit{ambient isotopic} if there exists some continuous function $H:[0,1]\times M\rightarrow M$ with $H(0,x)=x$ for all $x\in M$ and $H(1,\gamma_1(p))=\gamma_2(p)$ for all $p\in S^1$ and such that $H(t,\cdot):M\rightarrow M$ is a homeomorphism for every fixed $t\in [0,1]$.
		
		Given some $A\subset M$ we say that $\gamma_1$ and $\gamma_2$ are \textit{ambient isotopic relative to} $A$ if the $\gamma_i$ are ambient isotopic and the ambient isotopy $H:[0,1]\times M\rightarrow M$ can be chosen such that for every fixed $t\in [0,1]$ we have $H(t,x)=x$ for all $x\in M\setminus A$.
		
		We say that an ambient isotopy is \textit{compactly supported} if it is an ambient isotopy relative to some compact subset $K\subset M$ and we say that $\gamma_1$ and $\gamma_2$ are \textit{compactly ambient isotopic} if they are ambient isotopic by means of a compactly supported ambient isotopy.
	\end{defn}
	\begin{rem}
		\label{3R7}
		The notion of an ambient isotopy defines an equivalence relation on the set of topological embeddings from $S^1$ into $M$, see for instance \cite[Proposition A.1.4 \& Corollary A.1.6]{G20Diss} for a detailed proof.
	\end{rem}
	In the next definition we identify $T^2\cong \mathbb{R}^2\slash \mathbb{Z}^2$ and let $\theta,\phi$ denote the standard induced coordinates on this representation of the torus.
	\begin{defn}[(Semi-)rectifiability]
		\label{3D8}
		Let $\Sigma\cong T^2=\mathbb{R}^2\slash \mathbb{Z}^2$ be a toroidal surface and $X\in \mathcal{V}(\Sigma)$. Then $X$ is called \textit{rectifiable} if there exists a diffeomorphism $\Psi:\Sigma\rightarrow T^2$ such that $\Psi_*X=a\partial_{\theta}+b\partial_{\phi}$ for suitable constants $a,b\in \mathbb{R}$ where $\Psi_*X$ denotes the corresponding pushforward vector field, in other words it is the representation of $X$ in the "global coordinate system" $(\theta,\phi)$ on $T^2$.
		
		A vector field $X\in \mathcal{V}(\Sigma)$ is called \textit{semi-rectifiable} if there exists some strictly positive function $f\in C^{\infty}(\Sigma,(0,\infty))$ such that $f\cdot X$ is rectifiable.
	\end{defn}

\section{Discussion and relevance of findings}
\label{S4}
Despite the advantages of stellarators over tokamaks, the complexity of their coil design has traditionally limited their adoption and development. In this context, characterising current distributions on coil winding surfaces is crucial for evaluating the feasibility of coil fabrication and assembly.

Traditional designs use coil winding surfaces at a fixed distance from the plasma boundary,
known as conformal surfaces. 
Simple current distributions on these surfaces can be approximated using non-planar filamentary coils,
essentially one-dimensional in current flow, distributed around the surface. 
For example, the Wendelstein 7-X stellarator uses non-planar coils that present significant manufacturing and assembly challenges due to their complex geometries, tight precision demands,
and large scale \cite{Bosch13}.

Promising advances in superconducting technology now enable the use of
broad superconducting surfaces, potentially simplifying the design, manufacturing,
and assembly of stellarator coils \cite{RFPatents}. 
While these wide surfaces allow for simplified coil winding geometries, 
such as piecewise cylindrical surfaces, they require managing more complex two-dimensional
current distributions.
On these surfaces, specfic current patterns can be achieved using laser-engraved
grooves to guide the current along channels shaped to minimise magnetic
field errors in the plasma region.
This approach has already been explored in other applications, demonstrating the feasibility
of patterning conductive surfaces to achieve precise magnetic field configurations \cite{PPPV24}.

While the patterned surface approach simplifies manufacturing,
the complexity of the resulting current patterns remains a key consideration. In particular, the presence of centre and saddle-point regions introduces engineering difficulties,
especially regarding current feeding and the potential need for dedicated access points, which tend to further degrade magnetic field accuracy. 
Managing these regions effectively requires significant design efforts to minimise their impact on overall system performance.

\subsection{Numerical observations}
\label{Sub41}

In this section, we present numerical examples of the current patterns observed on a piecewise cylindrical coil winding surface.
The results shown are based on a scaled-down Wendelstein 7-X equilibrium using five identical modules (one per field period).

These numerical examples illustrate the competition in \ref{1E1} that is central to stellarator coil design: higher regularisation simplifies the current distribution, easing manufacturing, but reduces magnetic field accuracy.
For small values of $\lambda$, the magnetic field error term $\chi_B^2$ dominates, leading to complex current patterns with many centre and saddle regions.
As $\lambda$ inscreases, the regularisation term $\chi_j^2$ becomes more important.
The toroidal circulation of the magnetic field is given by the equilibrium, and results in a fixed net poloidal current according to Ampère's law.
Minimising the current norm then results in a uniform poloidal current, which is the simplest way to ensure a net poloidal current, see the discussion starting at the bottom of page 7 of \cite{G25BS} for a more detailed discussion about how the poloidal part of $j$ is fixed.
Balancing these factors is critical for the practical implementation
of stellarator coils, especially with patterned conducting surfaces.

\Cref{complex_current} shows an optimal current distribution obtained with minimal regularisation ($\lambda=10^{-20}$ in \eqref{1E1}) on a cylindrical coil winding surface (single module).
the distribution features numerous centre and saddle regions across nearly the entire domain,
highlighting potential engineering challenges that could offset the
benefits of patterned surfaces, such as the need for multiple individual current feeds per region. 
Similar challenges arise in the traditional conformal approach,
where numerous centre and saddle regions complicate the design and assembly of filamentary coils.

\Cref{orbits_observation} shows the effect of increasing the regularisation factor to $\lambda = 10^{-15}$.
Most streamlines close in the poloidal direction, with only a few centre and saddle regions present.
This higher regularisation produces a simpler current pattern, but at the cost of a less
accurate approximation of the target magnetic field.

\begingroup
\begin{figure}[H]
\centering
	\includegraphics[width=.9\textwidth,trim={0.3cm 0 0 0},clip]{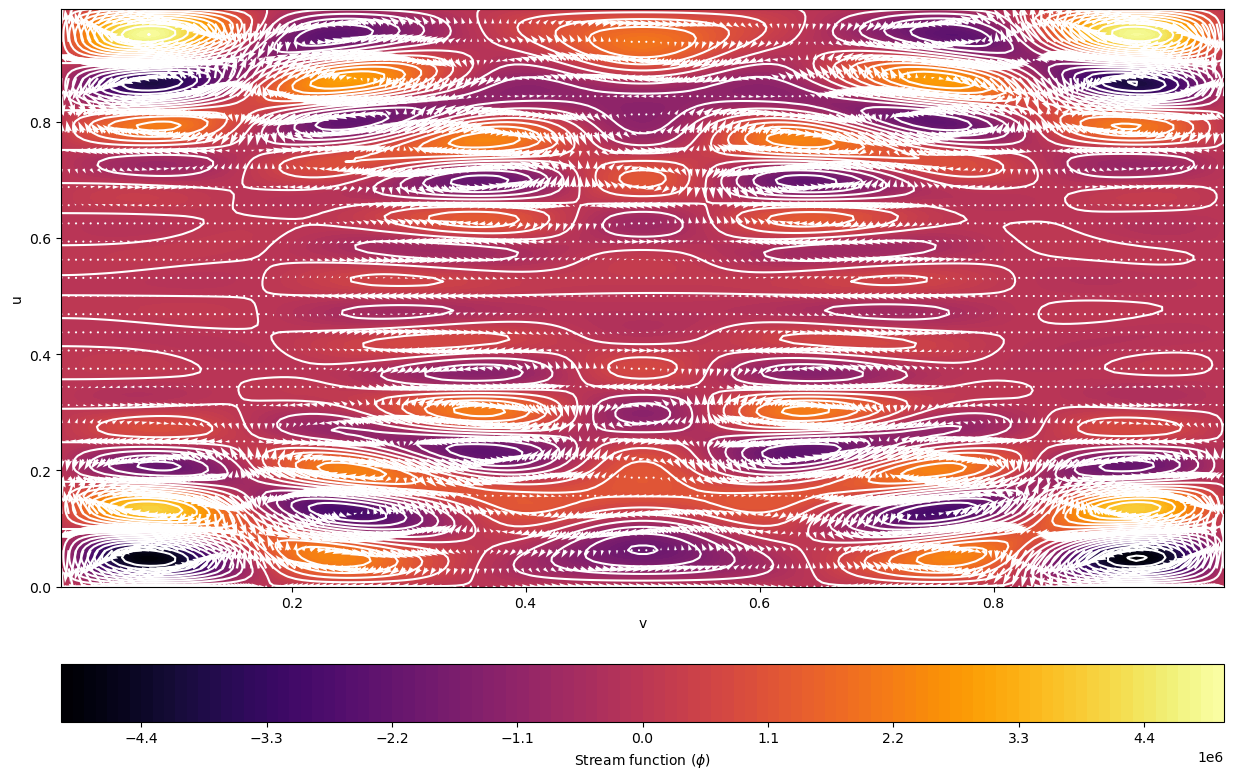}
	\caption{
		Optimal electric current distribution with minimal regularisation on a 2D parametric
		domain representing one module of a piecewise cylindrical coil winding surface.
		Poloidal direction around the plasma is $u$,
		toroidal direction along the plasma is $v$.
		The colormap shows the local current potential; white lines indicate current streamlines,
		and white arrows show the current field direction.
		Numerous centre and saddle regions are visible.} 
	\label{complex_current}
	\includegraphics[width=.9\textwidth,trim={0.3cm 0 0 0},clip]{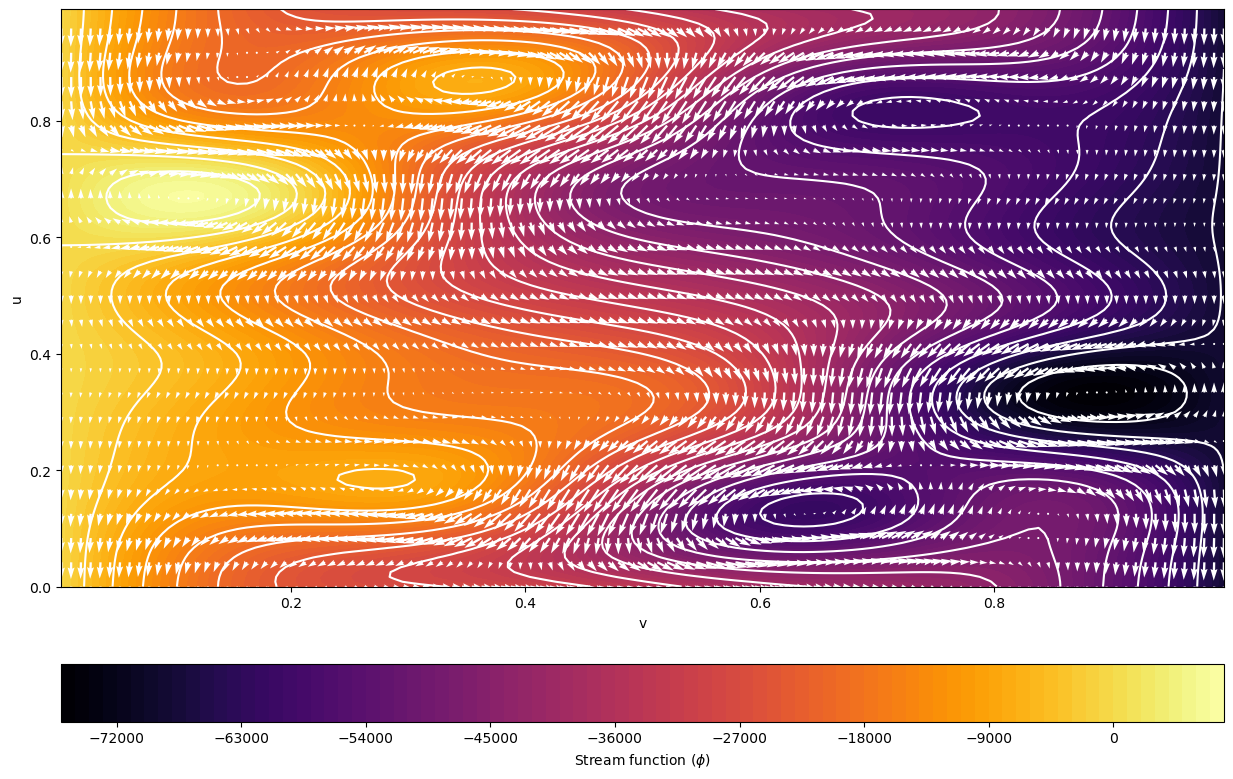}
	\caption{
		Optimal electric current distribution with low regularisation on a 2D parametric
		domain representing one module of a piecewise cylindrical coil winding surface.
		Poloidal direction around the plasma is $u$,
		toroidal direction along the plasma is $v$.
		The colormap shows the local current potential; white lines indicate current streamlines,
		and white arrows show the current field direction.
		A moderate number of saddle and contractible orbit zones are observed.} 
	\label{orbits_observation}
\end{figure}
\endgroup

\begingroup
\begin{figure}[H]
    \centering
	\includegraphics[width=.9\textwidth,trim={0.3cm 0 0 0},clip]{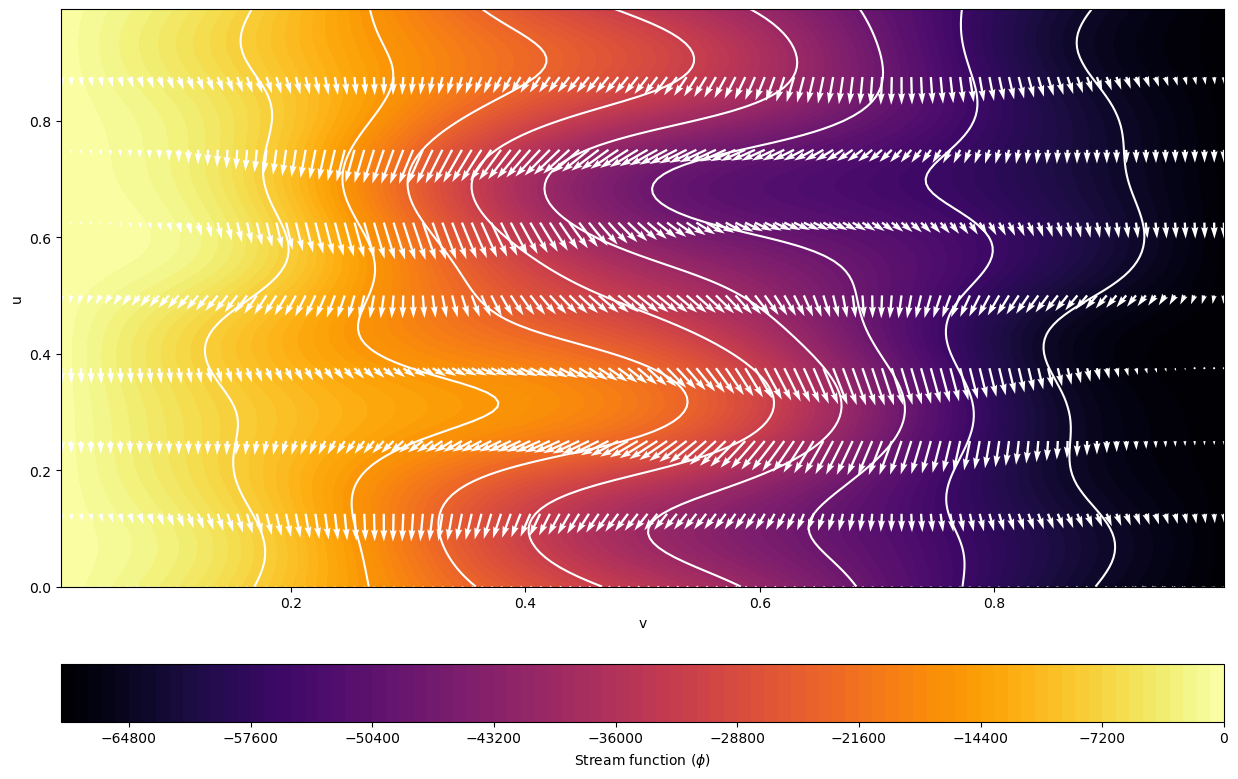}
	\caption{
		Optimal electric current distribution with high regularisation on a 2D parametric
		domain representing one module of a piecewise cylindrical coil winding surface.
		Poloidal direction around the plasma is $u$,
		toroidal direction along the plasma is $v$.
		The colormap shows the local current potential; white lines indicate current streamlines,
		and white arrows show the current field direction.
		All streamlines close poloidally.} 
	\label{poloidal_current}
	\includegraphics[width=.9\textwidth,trim={0.3cm 0 0 0},clip]{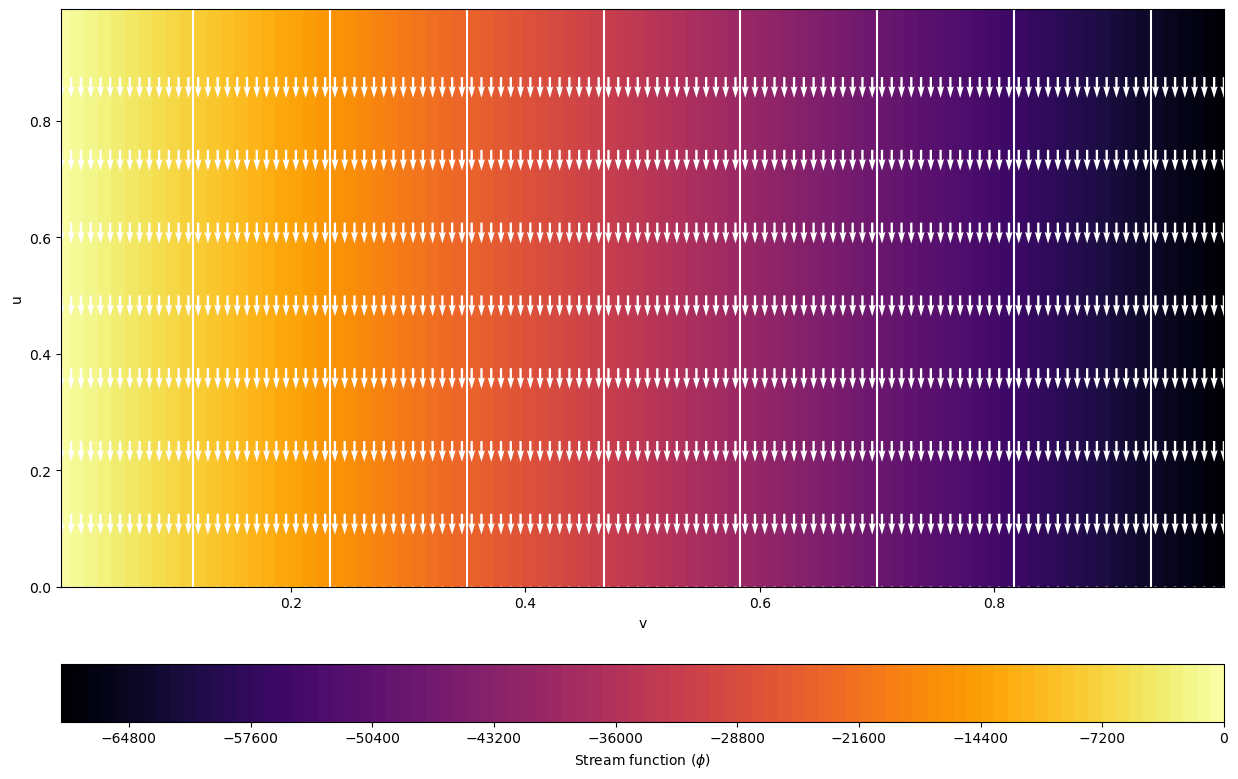}
	\caption{
		Optimal electric current distribution with a high enough regularisation on a 2D parametric
		domain representing one module of a piecewise cylindrical coil winding surface. 
		Poloidal direction around the plasma is $u$,
		toroidal direction along the plasma is $v$.
		The colormap shows the local current potential; white lines indicate current streamlines,
		and white arrows show the current field direction.
		Uniform poloidal current distribution.} 
	\label{straight_current}
\end{figure}
\endgroup

\Cref{poloidal_current} shows that above a threshold, here $\lambda=10^{-14}$,
all current streamlines close poloidally.
This solution leads to a straightforward coil design for wide patterned
conductive surfaces, but it does not provide sufficient magnetic field accuracy.

\Cref{straight_current} shows that with sufficiently high regularisation, here $\lambda=10^{-5}$,
the current constraint \eqref{chi_j} dominates the magnetic field constraint
\eqref{chi_B}, leading to a uniform poloidal current distribution around the cylinder.

\subsection{Application of the results to stellarator optimisation}
\label{Sub42}

In the coil winding surface approach, once a continuous current distribution is found that
produces a magnetic field sufficiently close to the target, the next step is to identify physically realisable coil
structures.

Traditionally, this is done by extracting poloidally closing current streamlines,
yielding a set of non-planar coils.
An alternative is to extract streamlines and make them non-conductive, so that the current flows between them.
In both cases, understanding the streamline dynamics is essential.

Although theory suggests that centre and saddle-point regions might not appear on a toroidal
surface, as discussed in \Cref{1T1}, in practice, specifically in case $ii)$, such regions do emerge.
Similarly, for piecewise cylindrical surfaces, these
centre and saddle-point regions are generally expected, as shown in \Cref{1T2}.
This result initially assumes non-vanishing and oppositely oriented currents
at the cylinder boundaries but has since been generalised to the weaker condition of
merely non-vanishing currents on these boundaries, as detailed in \Cref{A}.

Between two poloidal grooves, as anticipated, \Cref{1T3} confirms the absence of centre or saddle-point regions, with all streamlines following poloidal orbits.
To replicate the dynamics of centre regions,
an alternative to poloidal grooves must be explored.
One possible solution within these regions is to adopt a different figure of merit.
We note that the Biot-Savart operator is continuous with respect to the
$W^{-\frac{1}{2},2}(\Sigma)$-norm (and thus also with respect to the $L^2(\Sigma)$-norm),
as shown in \cite[Lemma C.1]{G24}. 
One can then focus on minimising the $W^{-\frac{1}{2},2}$-norm (or the $L^2$-norm) between the
physical and continuous currents within contractible orbit regions.
For regions composed of poloidal curves, coils can be aligned with the poloidal field lines of the continuous current distribution. 
However, proximity in $L^2(\Sigma)$ does not guarantee similar dynamics.
Thus, the presence of centre regions does not inherently prevent achieving a
good approximation with poloidal physical currents.

The results presented here are especially relevant as stellarator coil technology advances beyond
traditional filamentary designs.
With the development of patterned superconducting surfaces that can directly support surface currents,
the distinction between continuous currents and coils is becoming less distinct.

In these concepts, the coils are defined by the regions between laser-engraved grooves,
which can be connected in series, in parallel, or powered individually. 
The groove pattern should align with the streamlines of the computed optimal current distribution.
If the optimal current potential primarily produces broad poloidal currents, the resulting pattern may feature several poloidally closing grooves, relatively simple to implement. 
However, if the current potential contains many localised contractible orbits,
the surface must incorporate a network of separate etched loops, adding significant engineering complexity.

Crucially, regularisation allows one to control the density and fineness of the groove network:
a smoother current distribution produces fewer, broader conductive paths, simplifying engineering implementation.

Stronger regularisation, however, reduces the accuracy of the magnetic field approximation, a
fundamental trade-off in classical stellarator coil design.
The topology of surface currents is therefore critical in stellarator coil optimisation. 
A rigorous understanding of how centre and saddle-point regions emerge enables
the development of new approaches that effectively integrate these features when converting optimal
current patterns into discrete coils.

The practical impact of these advances is clear: by effectively managing complex continuous
current distributions, we can create coil designs that are far easier to manufacture, 
without substantially sacrificing the quality of the magnetic configuration. 
This approach lowers both costs and risks in stellarator projects while still meeting plasma performance objectives.

\section{Proofs}
\label{S6}
\subsection{Auxiliary results}
We recall that a zero $p\in \Sigma$ of a tangent field $j\parallel \Sigma$ is said to be non-degenerate if in some (and hence every) local coordinate system the Jacobian of $j$ at $p$ is non-singular.
\begin{lem}
	\label{6L1}
	Let $\Sigma\subset\mathbb{R}^3$ be a smooth, closed surface. Then the set
	\begin{gather}
		\nonumber
		\{j\in \mathcal{V}_0(\Sigma)\mid j\text{ has only non-degenerate zeros}\}
	\end{gather}
	is a generic subset of $\mathcal{V}_0(\Sigma)$.
\end{lem}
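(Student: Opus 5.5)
The plan is to prove the two halves of genericity separately: first that the set, call it $\mathcal{G}$, is relatively open in $\mathcal{V}_0(\Sigma)$ for the $C^1$-topology, and then that it is dense. Openness does not use the div-free condition. Density is obtained from a finite-dimensional family of \emph{explicit} div-free perturbations combined with a parametric transversality (Sard) argument. This route works for closed $\Sigma$ of any genus, so no Hodge decomposition is needed.

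\emph{Openness.} Let $j\in\mathcal{G}$. Non-degenerate zeros are isolated and $\Sigma$ is compact, so $j$ has finitely many zeros $p_1,\dots,p_k$. Around each $p_i$ fix a chart and a small closed disc $D_i$ on which the local representation of $j$ has Jacobian with $|\det|\geq c>0$. On the compact set $K:=\Sigma\setminus\bigcup_i\operatorname{int}(D_i)$ we have $|j|\geq c'>0$. Now take $\tilde j\in\mathcal{V}_0(\Sigma)$ with $\|\tilde j-j\|_{C^1}$ small.
\begin{itemize}
\item On $K$, the field $\tilde j$ has no zeros.
\item On each $D_i$, the determinant of the Jacobian of $\tilde j$ stays bounded away from $0$. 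Hence every zero of $\tilde j$ in $D_i$ is non-degenerate; by the inverse function theorem and the uniform Jacobian bound there is in fact exactly one.
\end{itemize}
Thus $\tilde j\in\mathcal{G}$, which proves openness.

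\emph{Density.} Let $x^1,x^2,x^3$ be the restrictions to $\Sigma$ of the Euclidean coordinate functions, and set $X_l:=\mathcal{N}\times\nabla_{\Sigma}x^l$ for $l=1,2,3$. Each $X_l$ is smooth and div-free on $\Sigma$. To see this, note that $\int_\Sigma(\mathcal{N}\times\nabla_\Sigma x^l)\cdot\nabla_\Sigma\phi\,d\sigma=0$ because locally the integrand is a Jacobian determinant of $(x^l,\phi)$; equivalently, the $1$-form dual to $X_l$ is $\star dx^l$, which is coclosed. Moreover, $\nabla_\Sigma x^l(p)$ is the tangential projection of $e_l$, so these three vectors span $T_p\Sigma$ at every $p$. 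Since $\mathcal{N}(p)\times\cdot$ is an isomorphism of $T_p\Sigma$, the vectors $X_1(p),X_2(p),X_3(p)$ also span $T_p\Sigma$.

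Given $j\in\mathcal{V}_0(\Sigma)$, consider the family
\[
j_a:=j+\sum_{l=1}^3a_lX_l, \qquad a\in\mathbb{R}^3 .
\]
Every $j_a$ lies in $\mathcal{V}_0(\Sigma)$, and $\|j_a-j\|_{C^1}\leq C|a|$. It therefore suffices to show that $j_a\in\mathcal{G}$ for almost every $a$. Cover $\Sigma$ by finitely many charts $\mu_m:U_m\to\mathbb{R}^2$, and in each chart write the local representation as a smooth map
\[
F_m:U_m\times\mathbb{R}^3\to\mathbb{R}^2,\qquad (p,a)\mapsto \mu_{m*}(j_a)(p).
\]
Its derivative in the $a$-variables is the matrix with columns $\mu_{m*}X_l(p)$, which has rank $2$ by the spanning property. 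Hence $F_m$ is a submersion, and $Z_m:=F_m^{-1}(0)$ is a smooth $3$-dimensional submanifold. Let $\pi:Z_m\to\mathbb{R}^3$ be the projection $(p,a)\mapsto a$. The standard parametric transversality lemma says that $a$ is a regular value of $\pi$ exactly when $\partial_pF_m(\cdot,a)$ is surjective at every zero of $j_a$ in $U_m$, i.e. exactly when all zeros of $j_a$ in $U_m$ are non-degenerate. By Sard's theorem, almost every $a$ is a regular value of $\pi$ for every $m$ simultaneously, since there are finitely many charts. Choosing such an $a$ with $|a|$ arbitrarily small gives $j_a\in\mathcal{G}$ arbitrarily $C^1$-close to $j$, which proves density.

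The main obstacle is making the transversality step rigorous and self-contained: verifying the equivalence between regular values of $\pi$ and non-degeneracy of the zeros of $j_a$, which is a short linear-algebra check on $T_{(p,a)}Z_m=\ker dF_m$. The div-free property of the $X_l$ is the only place where the constraint defining $\mathcal{V}_0(\Sigma)$ enters; I would verify it directly from the weak definition in Section 2.
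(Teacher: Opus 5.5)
Your proof is correct. It runs on the same engine as the paper's proof: Sard's theorem applied to a three-parameter family of perturbations built from the coordinate functions. The packaging, however, is quite different.

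\textbf{The paper's route.} It first works with curl-free fields $Y$, perturbing them by $\alpha\nabla_\Sigma\frac{|x-p|^2}{2}$, in the spirit of the classical proof that $|x-p|^2$ restricted to $\Sigma$ is Morse for almost every $p$. It applies Sard to the map $\widehat E(x,v)=x+v+Y(x)/\alpha$ on the normal bundle. The coordinate computations \eqref{6E4}--\eqref{6E6} verify that a degenerate zero of $F_p$ forces $p$ to be a critical value of $\widehat E$. It then takes $\alpha_n=1/n$ with a single $p$ outside a countable union of null sets. Finally it transfers the result to div-free fields via $j=\mathcal N\times Y$.

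\textbf{Why the two arguments coincide.} Since $\nabla_\Sigma\frac{|x-p|^2}{2}=\nabla_\Sigma\frac{|x|^2}{2}-\sum_l p_l\nabla_\Sigma x^l$, the paper's perturbation of $j$ is $\alpha\,\mathcal N\times\nabla_\Sigma\frac{|x|^2}{2}-\alpha\sum_l p_l X_l$. That is your family plus a fixed radial term that plays no role. Moreover, for fixed $\alpha$ the map $(x,v)\mapsto(x,\widehat E(x,v))$ is a diffeomorphism from the normal bundle onto the incidence set $\{(x,p): F_p(x)=0\}$, and under it $\widehat E$ becomes the projection to the parameter. So the paper carries out your parametric-transversality argument by hand, using a global parametrization of the incidence manifold instead of charts.

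\textbf{What each route buys.} Yours is shorter and more transparent:
\begin{itemize}
\item you stay inside $\mathcal V_0(\Sigma)$ throughout;
\item you replace the computations with $\partial_k\partial_i x$ and $\partial_k g^{ij}$ by the two-line kernel check on $T_{(p,a)}Z_m$;
\item you control the size of the perturbation directly by $|a|$, so no $\alpha_n$ device is needed;
\item you actually supply the openness argument that the paper only calls straightforward.
\end{itemize}
The paper's route has the merit of following a textbook Morse-theory computation step by step, without invoking the transversality lemma.

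\textbf{One small point.} Your aside that each $D_i$ contains exactly one zero of $\tilde j$ is not needed for openness. It also does not follow from the inverse function theorem alone, which is only local. If you keep it, justify it by a degree count: $\det D\tilde j$ has constant sign on $D_i$ and $\tilde j\neq 0$ on $\partial D_i$.
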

\begin{proof}[Proof of \Cref{6L1}]
	\underline{Step 1:} In a first step we prove the corresponding statement for curl-free vector fields before we transfer the results via a duality argument to div-free fields in a second step. To this end fix any $Y\in \mathcal{V}(\Sigma)$ with $\operatorname{curl}_{\Sigma}(Y)=0$. Our goal is to adapt the reasoning in the case of Morse functions, c.f. \cite[Chapter 1.2]{AD14}. Note however that not every curl-free field will be the gradient of a function, so that some additional work is necessary. We start by fixing any $\alpha>0$ and $p\in \mathbb{R}^3$. We then define the vector field
	\begin{gather}
		\label{6E1}
		F_p(x):=\alpha\nabla_{\Sigma}\frac{|x-p|^2}{2}+Y(x)\in \mathcal{V}(\Sigma).
	\end{gather}
	Further, we define the normal bundle of $\Sigma$ by $N:=\{(x,v)\mid v\perp T_x\Sigma\}\subset \Sigma\times \mathbb{R}^3$ where $T_x\Sigma$ denotes the tangent plane of $\Sigma$ at the point $x$. The normal bundle is a smooth manifold in its own right and we define the following smooth function on $N$
	\begin{gather}
		\label{6E2}
		\widehat{E}:N\rightarrow \mathbb{R}^3\text{, }(x,v)\mapsto x+v+\frac{Y(x)}{\alpha}.
	\end{gather}
	Just like in the case of Morse functions, \cite[Proposition 1.2.1]{AD14}, the goal now is to show that the points $p\in \mathbb{R}^3$ (for fixed $\alpha>0$) for which $F_p$ has degenerate zeros is contained in the set of critical values of $\widehat{E}$ so that Sard's theorem, c.f. \cite[Theorem 6.10]{L12}, will imply that this set is a null set.
	\newline
	We start by observing that
	\begin{gather}
		\label{6E3}
		F_p(x)=0\Leftrightarrow \nabla_{\Sigma}\frac{|x-p|^2}{2}=-\frac{Y(x)}{\alpha}\Leftrightarrow \alpha (x-p)^\parallel=-Y(x)
	\end{gather}
	where $\cdot^\parallel$ denotes the part of $x-p$ which is tangent to $T_x\Sigma$. Now we can fix an arbitrary coordinate system $(u_1,u_2)\mapsto x(u_1,u_2)$ of $\Sigma$ and letting $g_{ij}$ denote the metric tensor, $g^{ij}$ the corresponding inverse matrix and $Y^j$ the local expression of $Y$, we find $F_p(x)=\left(\alpha g^{ij}(x-p)\cdot \partial_ix+Y^j\right)\partial_jx$ and consequently
	\begin{gather}
		\nonumber
		\partial_kF^j_p(x)=\partial_kY^j+\alpha g^{ij}\partial_kx\cdot \partial_ix+\alpha g^{ij}(x-p)\cdot \partial_k\partial_ix+\alpha \partial_ix\cdot (x-p)\partial_kg^{ij}
	\end{gather}
	where we use Einstein's summation convention. We observe that $\partial_ix\parallel \Sigma$ and thus $\alpha\partial_ix\cdot (x-p)=\alpha \partial_ix\cdot (x-p)^\parallel=-Y(x)\cdot \partial_ix$ where we utilised (\ref{6E3}). We conclude that if $x\in \Sigma$ is a degenerate zero of $F_p$, then the following matrix is singular
	\begin{gather}
		\label{6E4}
		\partial_kY^j+\alpha \delta^j_{k}+\alpha g^{ij}(x-p)\cdot \partial_k\partial_ix-Y(x)\cdot \partial_ix \partial_kg^{ij}\text{, }1\leq j,k\leq 2
	\end{gather} 
	where we used that $\partial_kx\cdot \partial_ix=g_{ki}$ and $g^{ij}g_{ki}=\delta^j_{k}$ is the Kronecker delta. We can now on the other hand verify that $q\in \mathbb{R}^3$ is a critical value of $\widehat{E}$ if and only if $q=\widehat{E}(x,v)$ for some $(x,v)\in N$ and the following matrix is singular, the calculations are identical as in \cite[Lemma 1.2.2]{AD14} where the map $E(x,v):=x+v$ was considered,
	\begin{gather}
		\label{6E5}
		\partial_kx\cdot \partial_ix-v\cdot \partial_k\partial_ix+\frac{(\partial_k(Y^m\partial_mx))\cdot \partial_ix}{\alpha}.
	\end{gather}
	We now observe that if $x$ is a zero of $F_p$ then $\alpha(x-p)+Y(x)\perp T_x\Sigma$ and thus we may pick $v=-((x-p)+\frac{Y(x)}{\alpha})$ and note that $\widehat{E}(x,v)=p$. By our previous reasoning $p$ will be a critical value of $\widehat{E}$ if the matrix
	\begin{gather}
		\nonumber
		\partial_kx\cdot \partial_ix+(x-p)\cdot \partial_k\partial_ix+\frac{Y(x)\cdot \partial_k\partial_ix}{\alpha}+\frac{(\partial_k(Y^m\partial_mx))\cdot \partial_ix}{\alpha}
		\\
		\nonumber
		=g_{ki}+(x-p)\cdot \partial_k\partial_ix+\frac{\partial_k(Y^mg_{mi})}{\alpha}
	\end{gather}
	is singular, where we expressed $Y=Y^m\partial_mx$ and used again that $\partial_kx\cdot \partial_ix=g_{ki}$. We can multiply the above expression by $g^{ij}$ and take the sum over $i$ which corresponds to a matrix multiplication by a non-singular matrix so that the above matrix is singular if and only if
	\begin{gather}
		\label{6E6}
		\delta^j_k+g^{ij}(x-p)\cdot \partial_k\partial_ix+\frac{g^{ij}\partial_k(Y^mg_{mi})}{\alpha}
	\end{gather}
	is singular. We finally note that
	\begin{gather}
		\nonumber
		g^{ij}\partial_k(Y^mg_{mi})=g^{ij}g_{mi}\partial_kY^m+Y^mg^{ij}\partial_kg_{mi}=\partial_kY^j-Y^mg_{mi}\partial_kg^{ij}=\partial_kY^j-Y\cdot \partial_ix \partial_kg^{ij}
	\end{gather}
	where we used that $0=\partial_k\delta^{j}_m=\partial_k(g^{ji}g_{im})$ and $Y^mg_{mi}=Y^m\partial_mx\cdot \partial_ix=Y\cdot \partial_ix$. We can insert this into (\ref{6E6}) and observe that the expression obtained in that way coincides (modulo the non-zero factor $\alpha$) with the expression in (\ref{6E4}).
	\newline
	To summarise, we showed that if $x\in \Sigma$ is a degenerate zero of $F_p$, then (\ref{6E3}) holds and (\ref{6E4}) defines a singular matrix. Further, $p=\widehat{E}(x,v)$ with $v=-(x-p)-\frac{Y(x)}{\alpha}$, $(x,v)\in N$ turns out to be a critical value of $\widehat{E}$ by means of the characterisation in (\ref{6E5}). Consequently the set of all $p\in \mathbb{R}^3$ such that $F_p$ has degenerate zeros is a null set.
	\newline
	We can now consider the sequence $\alpha_n:=\frac{1}{n}$ and we observe that for every $n\in \mathbb{N}$ there exists a null set $A_n\subset \mathbb{R}^3$ such that for all $p\in \mathbb{R}^3\setminus A_n$ the vector field $F_n(x):=\frac{\nabla_{\Sigma}|x-p|^2}{2n}+Y(x)$ has only non-degenerate zeros. We can then let $A:=\cup_{n\in \mathbb{N}}A_n$ which remains a null set so that we can fix any $p\in \mathbb{R}^3\setminus A$ and find that for every $n\in \mathbb{N}$ the vector field $F_n(x)=\frac{\nabla_{\Sigma}|x-p|^2}{2n}+Y(x)$ has only non-degenerate zeros. Lastly, if $\operatorname{curl}_{\Sigma}(Y)=0$ it is clear that also $\operatorname{curl}_{\Sigma}(F_n)=0$ for all $n$ and we have
	\begin{gather}
		\nonumber
		\|F_n-Y\|_{C^1(\Sigma)}=\frac{1}{n}\left\|\nabla_{\Sigma}\frac{|x-p|^2}{2}\right\|_{C^1(\Sigma)}\rightarrow 0
	\end{gather}
	since $p$ is independent of $n$. The fact that the set of curl-free fields with non-degenerate zeros is open in the space of curl-free fields with respect to the $C^1$-topology is straightforward to confirm.
	\newline
	\newline
	\underline{Step 2:} We fix any (smooth) unit normal $\mathcal{N}$ on $\Sigma$ and we note that taking the cross product with $\mathcal{N}$ induces an isomorphism between the smooth div-free fields on $\Sigma$ and the curl-free fields on $\Sigma$ (which is most easily seen by observing that in the language of differential forms this corresponds to applying the Hodge star operator to the corresponding $1$-form). Hence, given any $j\in \mathcal{V}_0(\Sigma)$ we can write $j=\mathcal{N}\times Y$ for a suitable $Y\in \mathcal{V}(\Sigma)$ with $\operatorname{curl}_{\Sigma}(Y)=0$. By step 1 we may approximate $Y$ by curl-free tangent fields $Y_n$ in $C^1(\Sigma)$-topology which all have only non-degenerate zeros. We define accordingly $j_n:=\mathcal{N}\times Y_n\in \mathcal{V}_0(\Sigma)$ and one readily checks that $j_n\rightarrow j$ in $C^1(\Sigma)$ and that further the zero sets of the $j_n$ and the $Y_n$ coincide and that $j_n$ has a degenerate zero $x$ if and only if $x$ is a degenerate zero of $Y_n$. Consequently the $j_n$ are all div-free with only non-degenerate zeros and approximating $j$ in $C^1$-topology.
\end{proof}
The following is the corresponding result for vector fields of interest as described in \Cref{1T2}.
\begin{lem}
	\label{6L2}
	Let $S^1\times [0,1]\cong \Sigma\subset\mathbb{R}^3$ be a smooth cylindrical surface and consider the set 
	\begin{gather}
		\nonumber
		\mathcal{V}_0^{\operatorname{OT}}(\Sigma):=\{j\in \mathcal{V}_0(\Sigma)\mid j\parallel \partial\Sigma\text{, }j\neq 0\text{ on }\partial\Sigma\text{ and the field lines of }j\text{ on the two distinct} 
		\\
		\nonumber
		\text{boundary components are oppositely oriented}\}\subset \mathcal{V}(\Sigma)
	\end{gather}
	of div-free, "oppositely boundary tangent" fields. Then the following subset
	\begin{gather}
		\nonumber
		\{j\in \mathcal{V}_0^{\operatorname{OT}}(\Sigma)\mid j\text{ has only non-degenerate zeros} \}
	\end{gather}
	is a generic subset of $\mathcal{V}_0^{\operatorname{OT}}(\Sigma)$.
\end{lem}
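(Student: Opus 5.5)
The plan follows the proof of \Cref{6L1}, with one change: the perturbation must preserve the boundary conditions, so it is cut off near $\partial\Sigma$. The nonvanishing of $j$ on $\partial\Sigma$ guarantees that the cut-off region creates no zeros. Fix $j\in\mathcal{V}_0^{\operatorname{OT}}(\Sigma)$ and a unit normal $\mathcal{N}$, and set $Y:=-\mathcal{N}\times j$. As in Step 2 of \Cref{6L1}, $Y$ is curl-free, and $j=\mathcal{N}\times Y$. The zeros of $Y$ and $j$ coincide, and a zero is degenerate for one field if and only if it is degenerate for the other.

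\textbf{Choice of cut-off.} Since $j$ is continuous, nonvanishing on $\partial\Sigma$, and $\Sigma$ is compact, there is a collar neighbourhood $U$ of $\partial\Sigma$ and a constant $c>0$ with $|j|\geq c$ on $U$. Choose $\chi\in C^\infty(\Sigma)$ with $\chi\equiv 0$ near $\partial\Sigma$, $\chi\equiv 1$ on $\Sigma\setminus U$, and $0\le\chi\le 1$.

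\textbf{Perturbed fields.} For $p\in\mathbb{R}^3$ and $\alpha>0$ define
\begin{gather}
	\nonumber
	j_{p,\alpha}:=j+\alpha\,\mathcal{N}\times\nabla_{\Sigma}\Big(\chi(x)\frac{|x-p|^2}{2}\Big).
\end{gather}
The added term is the rotated gradient of a smooth function, so it is divergence-free. It vanishes identically near $\partial\Sigma$, so $j_{p,\alpha}$ is still tangent to $\partial\Sigma$, equals $j$ on $\partial\Sigma$, and keeps the opposite orientations on the two boundary circles. Hence $j_{p,\alpha}\in\mathcal{V}_0^{\operatorname{OT}}(\Sigma)$. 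Moreover, for $p$ in a fixed bounded set, $\|j_{p,\alpha}-j\|_{C^1}\le C\alpha$.

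\textbf{Zeros stay away from the collar.} For $\alpha<c/(2C)$ we get $|j_{p,\alpha}|\ge c/2$ on $U$, so all zeros lie in $\Sigma\setminus U$. There $\chi\equiv1$, and the associated curl-free field is exactly $F_p=\alpha\nabla_\Sigma\frac{|x-p|^2}{2}+Y$ from (\ref{6E1}).

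\textbf{Sard argument.} The computation in Step 1 of \Cref{6L1} is purely local: it only involves the point $x$ where $F_p$ vanishes, and $x$ lies in the open surface $\operatorname{int}(\Sigma)\setminus\overline{U}$. So I would run the same argument with $\widehat{E}$ from (\ref{6E2}), restricted to the normal bundle of this open set. Sard's theorem then gives a null set $A_\alpha\subset\mathbb{R}^3$ such that $j_{p,\alpha}$ has only non-degenerate zeros for $p\notin A_\alpha$. Taking $\alpha_n=1/n$ for large $n$ and $p$ in a bounded set avoiding $\bigcup_n A_{\alpha_n}$ produces a sequence $j_n\in\mathcal{V}_0^{\operatorname{OT}}(\Sigma)$ with only non-degenerate zeros and $j_n\to j$ in $C^1$. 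This proves density.

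\textbf{Openness.} Suppose $j\in\mathcal{V}_0^{\operatorname{OT}}(\Sigma)$ has only non-degenerate zeros. These zeros are isolated and lie in $\operatorname{int}(\Sigma)$, so by compactness there are finitely many of them. Cover the zeros by small coordinate discs on which $\det Dj$ is bounded away from zero. On the complement of these discs (which includes $\partial\Sigma$), $|j|$ is bounded below. Both bounds persist under small $C^1$ perturbations, so any nearby field in the class has zeros only inside the discs, and those zeros are non-degenerate.

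\textbf{Main obstacle.} The delicate step is justifying that the Sard argument of \Cref{6L1} applies verbatim on the non-compact open piece. The key points are that the correspondence between degenerate zeros of $F_p$ and critical values of $\widehat{E}$ is pointwise, and that Sard's theorem holds on any second-countable manifold. I would check that the uniform lower bound on the collar is what makes it legitimate to discard the region where $\chi\neq1$.
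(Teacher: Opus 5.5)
Your argument is correct, but it reaches density by a different mechanism than the paper.

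\textbf{The paper's route.} It first uses the Hodge--Morrey decomposition and the one-dimensionality of $\mathcal{H}_N(\Sigma)$ on the annulus (cf.\ (\ref{1E3})--(\ref{1E5})). This lets it write $j$ globally as $\mathcal{N}\times\nabla_{\Sigma}$ of a stream function that is constant on each boundary circle. It then extends $\Sigma$ and the stream function to a slightly larger cylinder and approximates the extension by Morse functions $f_n$ in $C^2_{\operatorname{loc}}$. Finally it glues the stream function to $f_n$ with a cut-off equal to $1$ on $\partial\Sigma$ and vanishing away from a collar. The lower bound on the gradient of the stream function on that collar plays exactly the role of your bound $|j|\geq c$ on $U$.

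\textbf{Your route.} You never need a global stream function. You perturb the dual curl-free field by the cut-off term $\alpha\nabla_{\Sigma}(\chi|x-p|^2/2)$. You then rerun the Sard correspondence of Step 1 of \Cref{6L1} on the normal bundle of the open set $\operatorname{int}(\Sigma)\setminus\overline{U}$. This is legitimate because that correspondence is pointwise and Sard's theorem holds on any second-countable manifold.

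\textbf{What each buys.} Your route avoids the harmonic-field input, the extension step, and the appeal to Morse density on a non-compact surface. It never uses the annulus topology and leaves $j$ untouched near $\partial\Sigma$. It therefore applies verbatim to any oriented compact surface with boundary, and to any class defined by boundary conditions alone (e.g.\ $\mathcal{V}^{\operatorname{T}}_0(\Sigma)$ in \Cref{ExtraL1}). You also spell out the openness argument, which the paper only cites by analogy with Morse functions. The paper's route, in exchange, works directly with the stream-function representation (locally constant on $\partial\Sigma$) on which the proof of \Cref{6T8} is built.

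\textbf{One small tidy-up.} State the bound $|j_{p,\alpha}|\geq c/2$ on $\overline{U}$; it extends there by continuity. That is what actually places every zero in the open set $\operatorname{int}(\Sigma)\setminus\overline{U}$, where $\chi\equiv 1$.
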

\begin{proof}[Proof of \Cref{6L2}]
	We start by fixing some $j\in \mathcal{V}_0^{\operatorname{OT}}(\Sigma)$ and we observe that according to the Hodge-Morrey-decomposition theorem \cite[Theorem 2.4.2]{S95} we may write $j=\mathcal{N}\times \nabla \phi+\gamma$ for a suitable $\phi\in C^{\infty}(\Sigma)$ with $\phi|_{\partial\Sigma}=0$ and $\gamma\in \mathcal{H}_N(\Sigma)$ where we used that $j$ is div-free and tangent to the boundary and where $\mathcal{N}$ denotes a (smooth) unit normal to $\Sigma$. Then, according to (\ref{1E2})-(\ref{1E5}) we can therefore write $j=\mathcal{N}\times \nabla \chi$ for a suitable $\chi\in C^{\infty}(\Sigma)$ with $\chi|_{\partial\Sigma}$ is locally constant, i.e. is constant on each boundary circle of $\partial\Sigma$ with possibly distinct constants on each circle. We can then consider the outward flow along an extension of the outward normal $n$ along $\partial\Sigma$, i.e. $n\perp \partial\Sigma$, $n\cdot \mathcal{N}=0$, so that we may assume without loss of generality that $\Sigma$ is contained in a slightly larger cylindrical surface $\widetilde{\Sigma}$ and we may also extend $\chi$ to a smooth function $\widetilde{\chi}$ on $\widetilde{\Sigma}$. It then follows from standard Morse theory, \cite[Proposition 1.2.4]{AD14}, that we can approximate $\widetilde{\chi}$ on $\operatorname{int}(\widetilde{\Sigma})$ by Morse functions $f_n$ in $C^2_{\operatorname{loc}}(\operatorname{int}(\widetilde{\Sigma}))$. Consequently, since $\Sigma\subset \operatorname{int}(\widetilde{\Sigma})$ is compact, the restrictions $f_n|_{\Sigma}$ approximate $\chi$ in $C^2(\Sigma)$. We now make use of the assumption that $j\neq 0$ on $\partial\Sigma$ which is equivalent to the statement that $\nabla_{\Sigma}\chi \neq 0$ on $\partial\Sigma$. We observe that then for a suitable $0<\epsilon<\frac{1}{2}$ it will be the case that $|\nabla_{\Sigma}\chi|\geq c_0>0$ on $[0,\epsilon]\times S^1\cup [1-\epsilon,1]\times S^1$ (where we implicitly use the identification $[0,1]\times S^1\cong \Sigma$). Fix such an $\epsilon$ and fix a bump function $\rho\in C^{\infty}(\Sigma)$ satisfying $\rho|_{\partial\Sigma}=1$ and $\rho|_{[\epsilon,1-\epsilon]\times S^1}=0$ and define 
	\begin{gather}
		\nonumber
		\chi_n:=\rho \chi+(1-\rho)f_n|_{\Sigma}.
	\end{gather}
	Then
	\begin{gather}
		\nonumber
		\chi-\chi_n=(1-\rho)(\chi-f_n|_{\Sigma})\Rightarrow \|\chi-\chi_n\|_{C^2(\Sigma)}\leq C(\rho)\|\chi-f_n|_{\Sigma}\|_{C^2(\Sigma)}\rightarrow 0\text{ as }n\rightarrow\infty.
	\end{gather}
	In particular, $|\nabla_{\Sigma}\chi_n|\geq \frac{|\nabla_{\Sigma}\chi|}{2}\geq \frac{c_0}{2}>0$ on $[0,\epsilon]\times S^1\cup [1-\epsilon,1]\times S^1$ for all large enough $n$. Consequently, $\nabla \chi_n\neq 0$ on $[0,\epsilon]\times S^1\cup [1-\epsilon,1]\times S^1$. In addition, $\rho=0$ in $[\epsilon,1-\epsilon]\times S^1$ so that $\chi_n=f_n$ within this set and hence, since the $f_n$ are Morse, $\nabla \chi_n$ has only non-degenerate zeros throughout $\Sigma$. Lastly, $\chi_n|_{\partial\Sigma}=\chi|_{\partial\Sigma}$ is locally constant since $\rho|_{\partial\Sigma}=1$ and by properties of $\chi$. This most importantly implies that $\nabla_{\Sigma}\chi_n$ is normal to $\partial\Sigma$ and approximates $\nabla_{\Sigma}\chi$ in $C^1(\Sigma)$-norm. It is then straightforward to verify, for instance working in normal coordinates, c.f. \cite[Proposition 5.24]{L18}, to reduce to the standard Euclidean situation, that the zero sets of $\nabla_{\Sigma}\chi_n$ and $j_n:=\mathcal{N}\times \nabla_{\Sigma}\chi_n$ coincide and that a zero of $\nabla_{\Sigma}\chi_n$ is non-degenerate if and only if the same zero is non-degenerate for $j_n$. In addition, one easily checks that $j_n$ converges in $C^1(\Sigma)$-norm to $\mathcal{N}\times \nabla_{\Sigma}\chi=j$ (in terms of differential forms $\mathcal{N}\times \nabla_{\Sigma}\chi$ corresponds to $\star d\chi$ where $\star$ denotes the Hodge star operator and $d$ denotes the exterior derivative). Overall we found a sequence of vector fields $(j_n)_n\subset \mathcal{V}_0(\Sigma)$ with $j_n\neq 0$ on $\partial\Sigma$ and with only non-degenerate zeros inside $\Sigma$. We further recall that $\nabla \chi_n\perp \partial\Sigma$ which in turn implies $j_n=\mathcal{N}\times \nabla_{\Sigma}\chi_n\parallel \partial\Sigma$. Finally, using the identification $\Sigma\cong [0,1]\times S^1$ it is clear that since the field lines of $j$ are oppositely oriented on the boundary circles, then so must be those of the $j_n$ for all large enough $n$. This proves the density of $\left\{j\in \mathcal{V}_0^{\operatorname{OT}}(\Sigma)\mid \text{ j has only non-degenerate zeros}\right\}$ within $\mathcal{V}_0^{\operatorname{OT}}(\Sigma)$. The fact that this set is relatively open in $C^1$-topology is straightforward to verify, c.f. \cite[Theorem 1.2.5]{AD14} for the corresponding statement about Morse functions, so that we conclude the $C^1$-genericity of this set.
\end{proof}
During the course of the proof of \Cref{1T3} we will also need the following observation, where we recall that $\mathcal{H}_N(\Sigma)=\{\gamma\in \mathcal{V}(\Sigma)\mid \operatorname{div}_{\Sigma}(\gamma)=0=\operatorname{curl}_{\Sigma}(\gamma)\text{, }\gamma \parallel \partial\Sigma\}$.
\begin{lem}
	\label{6L3}
	Let $[0,1]\times S^1\cong \Sigma\subset \mathbb{R}^3$ be a smooth cylindrical surface. Let $j\in \mathcal{H}_N(\Sigma)$,  then the flow of $j$ does not contain any non-constant contractible periodic orbits within $\operatorname{int}(\Sigma)$.
\end{lem}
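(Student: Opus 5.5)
We argue by contradiction. Suppose $\gamma$ is a non-constant contractible periodic orbit of $j$ contained in $\operatorname{int}(\Sigma)$. Since $\gamma$ is an embedded closed curve that is null-homotopic in the annulus, the Jordan–Schoenflies theorem (applied after identifying $\Sigma$ with a planar annulus) shows that $\gamma$ bounds an embedded closed disc $D\subset\operatorname{int}(\Sigma)$ with $\partial D=\gamma$. On $D$ we use the representation established before: $j\in\mathcal{H}_N(\Sigma)$, and by (\ref{1E5}) $j=I\gamma_0=\mathcal{N}\times\nabla_\Sigma(I\psi)$, where $\psi$ is the harmonic function from (\ref{1E3}). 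Set $\chi:=I\psi$. Then $\chi$ is a smooth harmonic function on $\Sigma$; harmonicity follows from $\operatorname{curl}_\Sigma j=0$, or equivalently directly from (\ref{1E3}). We may assume $I\neq0$, since otherwise $j\equiv0$ and there are no non-constant orbits at all.

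The key step is that $\chi$ is constant along field lines of $j$: indeed $j\cdot\nabla_\Sigma\chi=(\mathcal{N}\times\nabla_\Sigma\chi)\cdot\nabla_\Sigma\chi=0$. Hence $\chi|_\gamma\equiv c$ for some constant $c$. Since $\chi$ is harmonic on the disc $D$ with constant boundary values, the maximum principle for the Laplace–Beltrami operator gives $\chi\equiv c$ on $D$. As $\operatorname{int}(D)$ is a non-empty open set and $\chi$ is real-analytic on $\operatorname{int}(\Sigma)$ (being harmonic for a smooth metric, after isothermal coordinates), unique continuation yields $\chi\equiv c$ on all of $\Sigma$. This contradicts $\psi|_{c_1}=0\neq1=\psi|_{c_2}$ together with $I\neq0$. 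An alternative ending avoids analyticity: $\nabla_\Sigma\chi=0$ on $D$ implies $j=0$ on $\gamma\subset D$, which contradicts $\gamma$ being a non-constant orbit. This ending is simpler, and I would use it.

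The main obstacle is the topological step, namely justifying that a contractible embedded periodic orbit in the interior of the annulus bounds a disc lying in $\operatorname{int}(\Sigma)$. I would handle it by embedding $\Sigma$ diffeomorphically as a planar annulus $\{1\le|z|\le2\}\subset\mathbb{R}^2$. The curve $\gamma$ is then a Jordan curve, and its bounded complementary component $U$ is an open disc. If $U$ contained the inner hole $\{|z|<1\}$, then $\gamma$ would have winding number $\pm1$ about $0$ and hence be non-contractible in the annulus. So $\bar U\subset\operatorname{int}(\Sigma)$, and the Schoenflies theorem makes $\bar U$ a closed disc. The maximum principle is then applied on $\bar U$, with the orbit being $C^\infty$ so that boundary regularity causes no issue. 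Only the weak maximum principle is needed, since $\chi$ is continuous up to $\gamma$.
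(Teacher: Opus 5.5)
Your proof is correct and follows the paper's argument: write $j=\mathcal{N}\times\nabla_{\Sigma}\chi$ with $\chi$ harmonic, observe that $\chi$ is constant along the orbit, and apply the maximum principle on the enclosed disc. The differences are minor. You justify the existence of the disc in $\operatorname{int}(\Sigma)$ via Jordan--Schoenflies and a winding-number argument, which the paper only asserts. Your preferred ending, that $\nabla_{\Sigma}\chi=0$ on $D\supset\gamma$ already contradicts non-constancy of the orbit, avoids the unique continuation step the paper invokes.
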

\begin{proof}[Proof of \Cref{6L3}]
	Assume for a contradiction that the flow of $j$ admits a contractible, non-constant periodic orbit within $\operatorname{int}(\Sigma)$. Then we can find a disc $D\subset \operatorname{int}(\Sigma)$ bounded by this orbit, i.e. $j\parallel \partial D$. We then know from (\ref{1E2})-(\ref{1E5}) that we can write $j=\mathcal{N}\times \nabla_{\Sigma}\psi$ for a suitable harmonic function $\psi$ on $\Sigma$, where $\mathcal{N}$ is a smooth unit normal on $\Sigma$. We note that $j\parallel \partial D$ implies $\nabla_{\Sigma}\psi\perp \partial D$ which in turn implies that $\psi|_{\partial D}$ is constant. Now according to the maximum principle \cite[Chapter 6.4.1 Theorem 1]{Evans10} $\psi$ will achieve its minimum and maximum on $\partial D$ and therefore $\psi$ is constant throughout all of $D$ and hence $\nabla_{\Sigma}\psi=0$ in $D$ so that in particular $j=0$ in $D$. It then follows from the unique continuation principle, c.f. \cite{AKS62}, that $j=0$ on all of $\Sigma$ contradicting the fact that the flow of $j$ admits a proper periodic field line.
\end{proof}
We will also need the following result, which we prove for completeness. We recall that for a given vector field $X\in \mathcal{V}(\Sigma)$ we can define the multiplicity $k\in \mathbb{N}_0$ of a zero $p\in \Sigma$ of $X$ by the property that in some (and hence every) given coordinate chart $\mu$ around $p$ we find $\partial^{\alpha}X^j(p)=0$ for all multi-indices $|\alpha|\leq k$ and all $1\leq j\leq 2$ and such that there is some multi-index $\tilde{\alpha}\in \mathbb{N}_0^2$ with $|\widetilde{\alpha}|=k+1$ and $\partial^{\widetilde{\alpha}}X^j(p)\neq 0$ for some $j\in \{1,2\}$. We denote the multiplicity of a zero $p$ of a vector field by $m(p)$.
\begin{lem}
	\label{6L4}
	Let $(\Sigma,g)$ be an oriented, compact, smooth Riemannian $2$-manifold with (possibly empty and disconnected) boundary. Let further $X\in \mathcal{V}(\Sigma)$ be a harmonic vector field, i.e. $\operatorname{div}_{\Sigma}(X)=0=\operatorname{curl}_{\Sigma}(X)$. Then either $X$ is identically zero or $\operatorname{int}(\Sigma)$ contains only isolated zeros of $X$ and for every $p\in \{q\in \operatorname{int}(\Sigma)\mid X(q)=0\}$ we have $\operatorname{ind}_p(X)=-(m(p)+1)$ where $\operatorname{ind}_p(X)$ denotes the index of the isolated zero $p$ of $X$.
\end{lem}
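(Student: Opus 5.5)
The plan is to reduce the statement to a computation with holomorphic functions in isothermal coordinates. Since $\Sigma$ is an oriented Riemannian surface, around any interior point $p$ we can choose isothermal coordinates $(u_1,u_2)$, in which $g=\lambda(u)(du_1^2+du_2^2)$ with $\lambda>0$. In such a chart the conditions $\operatorname{div}_{\Sigma}(X)=0=\operatorname{curl}_{\Sigma}(X)$ become conformally invariant. Concretely, if $X=X^1\partial_1+X^2\partial_2$ and $\omega=X^\flat=\lambda X^1du_1+\lambda X^2du_2$, then harmonicity means $d\omega=0$ and $d\star\omega=0$. The Hodge star on $1$-forms is conformally invariant in dimension two, so both conditions read as for the flat metric.

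Writing $a:=\lambda X^1$ and $b:=\lambda X^2$, these two equations are exactly the Cauchy--Riemann equations for
\[
h:=a-ib .
\]
Thus $h$ is holomorphic in $z=u_1+iu_2$, and $X$ corresponds to the vector field $\lambda^{-1}\overline{h}$.

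The first step is the dichotomy. If $X$ is not identically zero, then $h$ is not identically zero in any chart. If $h$ vanished on an open set, unique continuation (as already used in \Cref{6L3}, or simply the identity theorem chart by chart together with connectedness of $\operatorname{int}(\Sigma)$) would force $X\equiv 0$. Hence zeros of $h$, and so of $X$, are isolated in $\operatorname{int}(\Sigma)$.

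Next, the multiplicity. At a zero $p$ (take $z(p)=0$), write $h(z)=z^{k+1}\phi(z)$ with $\phi(0)\neq 0$, where $k+1$ is the order of vanishing of $h$. I will check that this $k$ coincides with $m(p)$ as defined before the lemma. Since $X^j=\lambda^{-1}\cdot(\text{real and imaginary parts of }\overline{h})$, Leibniz' rule shows that all partial derivatives of $X$ of order $\le k$ vanish at $p$. It also shows that the order-$(k+1)$ derivatives of $X$ at $p$ equal $\lambda(p)^{-1}$ times those of $(\operatorname{Re}h,-\operatorname{Im}h)$. These do not all vanish, because $\partial_{u_1}^{k+1}h(0)=(k+1)!\,\phi(0)\neq0$. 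One should also note that $m(p)$ is chart-independent, as the paper asserts, so computing it in an isothermal chart suffices.

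The final step is the index computation. The index is a degree, which is invariant under orientation-preserving diffeomorphisms, and it is unchanged under multiplication of the vector field by the positive function $\lambda^{-1}$ (a homotopy through nonvanishing fields on a small circle). So $\operatorname{ind}_p(X)$ equals the winding number of $\overline{h}$ around $0$ on a small circle $|z|=r$. Since $h=z^{k+1}\phi$ and $\phi$ is nearly constant and nonzero on a small circle, $\operatorname{wind}(h)=k+1$. Conjugation reverses orientation in the target, so $\operatorname{wind}(\overline h)=-(k+1)$, giving $\operatorname{ind}_p(X)=-(m(p)+1)$.

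The main obstacle I expect is the careful justification of the isothermal reduction. One needs the existence of isothermal coordinates (a standard theorem), and one needs to check that $\operatorname{div}_\Sigma$ and $\operatorname{curl}_\Sigma$ as defined weakly in \Cref{S2} translate to $d\star X^\flat=0$ and $dX^\flat=0$, i.e. to the Cauchy--Riemann equations for $\lambda X^1-i\lambda X^2$. I would verify this by plugging test functions into the weak definitions using $d\sigma=\lambda\,du$ and $\nabla_\Sigma\phi=\lambda^{-1}(\partial_1\phi,\partial_2\phi)$; the factors of $\lambda$ cancel, and the surface normal cross product becomes rotation by $\pi/2$ for an oriented chart.
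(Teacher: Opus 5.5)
Your proof is correct, but it takes a different route from the paper's.

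\textbf{The paper's route.} The paper works in Riemannian normal coordinates centred at $p$. There the divergence and curl equations are only approximately the Cauchy--Riemann equations. It uses that all derivatives of $X$ of order $\le m(p)$ vanish at $p$ and that $g_{ij}(0)=\delta_{ij}$. From this it extracts the exact flat Cauchy--Riemann relations for the order-$(m(p)+1)$ Taylor coefficients of $X$, namely (\ref{6E10})--(\ref{6E11}). It then evaluates the index integral $\frac{1}{2\pi}\int\frac{f\,dh-h\,df}{f^2+h^2}$ by rescaling to the unit circle and letting $\epsilon\searrow 0$. After factoring out $m(p)+1$, the limiting integrand is shown to be identically $-1$. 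Finite vanishing order is imported from unique continuation \cite{AKS62}, and isolatedness of the interior zeros from \cite{B99}.

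\textbf{Your route.} In isothermal coordinates the system becomes exactly the Cauchy--Riemann equations for $\lambda X^1-i\lambda X^2$. So everything follows from one-variable complex analysis:
\begin{itemize}
\item finite vanishing order and isolatedness come from the identity theorem together with connectedness of $\operatorname{int}(\Sigma)$;
\item $m(p)+1$ is identified with the order of vanishing of $h$;
\item the index follows from the winding number of $\overline{h}=\overline{z^{k+1}\phi}$.
\end{itemize}
This avoids the external unique-continuation and nodal-set results. The price is the existence theorem for isothermal coordinates, itself a nontrivial elliptic result, plus the conformal invariance of $\star$ on $1$-forms in dimension two. Your sign and factor checks ($d\star X^\flat=0$ and $dX^\flat=0$ giving the Cauchy--Riemann equations for $a-ib$, and $X\leftrightarrow\lambda^{-1}\overline{h}$) are right.

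\textbf{Trade-off.} The paper's leading-order argument only uses that the principal part of the system at $p$ is the Cauchy--Riemann operator. It would therefore survive lower-order perturbations of the equations, where your reduction would only yield a generalised analytic function. Yours is cleaner and more self-contained for the genuinely harmonic case.

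Both arguments tacitly need $\Sigma$ connected for the dichotomy, and you were right to make that explicit.
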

\begin{proof}[Proof of \Cref{6L4}]
	It follows first, since $X$ is harmonic, from \cite{AKS62} that either $X$ is identically zero or otherwise all of its zeros must be of finite order. Then, using again that $X$ is harmonic, it follows from \cite{B99} that $\{q\in \operatorname{int}(\Sigma)\mid X(q)=0\}$ is a countable and locally finite subset of $\operatorname{int}(\Sigma)$ consisting of isolated points.
	\newline
	Now fix any $p\in \{q\in \operatorname{int}(\Sigma)\mid X(q)=0\}$ and a normal coordinate system centred around $p$. We may then compute the index $\operatorname{ind}_p(X)$ within this coordinate system so that we may assume that $X$ is defined on an Euclidean ball $B_r(0)$. The vector field $X$ admits the coordinate expression $X=f \partial_1+h \partial_2$ and we may then compute the index of $X$ at $p$ according to the formula
	\begin{gather}
		\nonumber
		\operatorname{ind}_p(X)=\frac{1}{2\pi}\int_{\partial B_{\epsilon}(0)}\frac{f dh-h df}{f^2+h^2}
	\end{gather}
	where $\partial B_{\epsilon}(0)$ is oriented in the counter-clockwise way and $0<\epsilon\ll 1$ can be chosen arbitrarily. In particular, we may use the parametrisation $\gamma_{\epsilon}(t):=\epsilon (\cos(t),\sin(t))$ and for notational simplicity we set $\gamma(t):=\gamma_1(t)$. Since the computation of the index is independent of $\epsilon>0$, c.f. \cite[Lemma 7.1]{Ful95}, we find
	\begin{gather}
		\nonumber
		\operatorname{ind}_p(X)=\frac{\lim_{\epsilon\searrow0}}{2\pi}\int_{\partial B_{\epsilon}(0)}\frac{f dh-h df}{f^2+h^2}
		\\
		\nonumber
		=\frac{\lim_{\epsilon\searrow0}}{2\pi}\int_0^{2\pi}\epsilon\frac{f(\gamma_{\epsilon}(t))\dot{\gamma}^i(t)\partial_ih(\gamma_{\epsilon}(t))-h(\gamma_{\epsilon}(t))\dot{\gamma}^i(t)\partial_if(\gamma_{\epsilon}(t))}{f^2(\gamma_{\epsilon}(t))+h^2(\gamma_{\epsilon}(t))}dt
		\\
		\nonumber
		=\frac{\lim_{\epsilon\searrow 0}}{2\pi}\int_0^{2\pi}\frac{\left(\frac{f(\gamma_{\epsilon}(t))}{\epsilon}\right)\dot{\gamma}^i(t)\partial_ih(\gamma_{\epsilon}(t))-\left(\frac{h(\gamma_{\epsilon}(t))}{\epsilon}\right)\dot{\gamma}^i(t)\partial_if(\gamma_{\epsilon}(t))}{\left(\frac{f(\gamma_{\epsilon}(t))}{\epsilon}\right)^2+\left(\frac{h(\gamma_{\epsilon}(t))}{\epsilon}\right)^2}dt
	\end{gather}
	where $\gamma^1(t)=\cos(t)$ and $\gamma^2(t)=\sin(t)$ and we use throughout the Einstein summation convention.
	\newline
We set $n:=m(p)$ and we Taylor expand the function $\epsilon\mapsto f(\gamma_{\epsilon}(t))$ for fixed $t\in \mathbb{R}$ which yields
\begin{gather}
	\nonumber
	f(\gamma_{\epsilon}(t))=\sum_{k=0}^{n+1}\frac{\frac{d^{k}}{d\epsilon^k}|_{\epsilon=0}f(\gamma_{\epsilon}(t))}{k!}\epsilon^k+\rho(\epsilon,t)\epsilon^{n+2}
\end{gather}
where $\rho\equiv \rho(\epsilon,t)$ can be uniformly bounded in $\epsilon$ and $t$, i.e. there exists some $C>0$ such that $|\rho|\leq C$ for all $t$ and $\epsilon$. We observe that $\frac{d}{d\epsilon}f(\gamma_{\epsilon}(t))=\gamma^i(t)\partial_if(\gamma_{\epsilon}(t))$ and so $\frac{d^2}{d\epsilon^2}f(\gamma_{\epsilon}(t))=\gamma^i(t)\gamma^j(t)\partial_i\partial_jf(\gamma_{\epsilon}(t))$. Consequently we find
\begin{gather}
	\nonumber
	f(\gamma_{\epsilon}(t))=\frac{\gamma^{i_1}\gamma^{i_2}\dots \gamma^{i_{n+1}}\partial_{i_1}\partial_{i_2}\dots\partial_{i_{n+1}}f(0)}{(n+1)!}\epsilon^{n+1}+\rho\epsilon^{n+2}
\end{gather}
where we used that we are dealing with a zero of order $n$ and thus all derivatives of $f$ (and $h$) at $0$ vanish up to order $n$ and we employ Einstein's summation convention. Similarly we can expand $h(\gamma_{\epsilon})$ up to order $n+2$ and $\partial_ih(\gamma_{\epsilon}(t))$ and $\partial_if(\gamma_{\epsilon}(t))$ up to order $n+1$ and arrive at
\begin{gather}
	\nonumber
	\frac{2\pi}{n+1}\operatorname{ind}_p(X)=
	\\
	\nonumber
	\scalemath{0.89}{ \int_0^{2\pi}\frac{\gamma^{i_1}\text{..} \gamma^{i_{n+1}}\partial_{i_1}\text{..} \partial_{i_{n+1}}f(0) \gamma^{j_1}\text{..}\gamma^{j_n}\dot{\gamma}^m\partial_{j_1}\text{..}\partial_{j_n}\partial_mh(0)-\gamma^{i_1}\text{..} \gamma^{i_{n+1}}\partial_{i_1}\text{..} \partial_{i_{n+1}}h(0) \gamma^{j_1}\text{..}\gamma^{j_n}\dot{\gamma}^m\partial_{j_1}\text{..}\partial_{j_n}\partial_mf(0)}{\left(\gamma^{i_1}\text{..}\gamma^{i_{n+1}}\partial_{i_1}\text{..}\partial_{i_{n+1}}f(0)\right)^2+\left(\gamma^{i_1}\text{..}\gamma^{i_{n+1}}\partial_{i_1}\text{..}\partial_{i_{n+1}}h(0)\right)^2}dt}
	\\
	\label{6E7}
	+o(1)\text{ as }\epsilon\searrow0.
\end{gather}
We claim that the integrand appearing in (\ref{6E7}) equals $-1$ which will complete the proof.
\newline
To see this we fix any multi-index $\alpha\in \mathbb{N}_0^2$ with $|\alpha|=n$ and consider
\begin{gather}
	\label{6E8}
	\dot{\gamma}^m\partial^\alpha\partial_mf(0)=-\sin(t)\partial^\alpha\partial_1f(0)+\cos(t)\partial^\alpha\partial_2f(0)
\end{gather}
where we used the explicit expression $\gamma(t)=(\cos(t),\sin(t))$ and so $\dot{\gamma}(t)=(-\sin(t),\cos(t))$. Since $\operatorname{div}_g(X)=0$ we conclude that
\begin{gather}
	\label{6E9}
	\partial_1(\sqrt{\det{g}}f)+\partial_2(\sqrt{\det{g}}h)=0.
\end{gather}
We can now apply $\partial^{\alpha}$ on both sides of (\ref{6E9}) and observe that only the terms where $\partial^{\alpha}\partial_1$ acts entirely on $f$ and $\partial^{\alpha}\partial_2$ acts entirely on $h$ are non-zero because $p$ is a zero of order $n$. Hence the Leibniz rule implies 
\begin{gather}
	\label{6E10}
	\partial^{\alpha}\partial_1f(0)=-\partial^{\alpha}\partial_2h(0).
\end{gather}
Similarly, we recall that $\operatorname{curl}_g(X)=0$ which implies that
\begin{gather}
	\nonumber
	\partial_1(g_{12}f+g_{22}h)=\partial_2(g_{11}f+g_{12}h).
\end{gather}
Again we apply $\partial^{\alpha}$ on both sides, where once more all terms vanish except for those in which all derivatives act upon $f$ or $h$. Further, we note that by choice of our coordinate system we have $g_{ij}(0)=\delta_{ij}$ so that we arrive at the identity
\begin{gather}
	\label{6E11}
	\partial^{\alpha}\partial_1h(0)=\partial^{\alpha}\partial_2f(0).
\end{gather}
We can now insert (\ref{6E11}) and (\ref{6E10}) into (\ref{6E8}) to deduce that
\begin{gather}
	\label{6E12}
	\dot{\gamma}^m\partial^{\alpha}\partial_mf(0)=\gamma^{m}\partial^{\alpha}\partial_mh(0)
\end{gather}
where we once more used the explicit expression for $\gamma$. In the same way we deduce that 
\begin{gather}
	\label{6E13}
	\dot{\gamma}^m\partial^{\alpha}\partial_mh(0)=-\gamma^m\partial^{\alpha}\partial_mf(0)
\end{gather}
for all multi-indices $|\alpha|\leq n$. We can insert (\ref{6E12}) and (\ref{6E13}) into (\ref{6E7}) in order to deduce that the numerator equals minus the denominator so that (\ref{6E7}) simplifies to $\operatorname{ind}_p(X)=-(n+1)=-(m(p)+1)$ as desired.
\newline
Lastly, in order to justify that we may indeed reduce the fraction we have to argue why the denominator of the integrand in (\ref{6E7}) is non-zero. To see this, we note that if the denominator would be zero for some fixed $t\in [0,2\pi]$, then using repeatedly the identities (\ref{6E12}) and (\ref{6E13}) we would find that $$\gamma^{i_1}\text{..}\gamma^{i_k}\dot{\gamma}^{j_1}\text{..}\dot{\gamma}^{j_{n+1-k}}\partial_{i_1}\text{..}\partial_{i_k}\partial_{j_1}\text{..}\partial_{j_{n+1-k}}f(0)=0=\gamma^{i_1}\text{..}\gamma^{i_k}\dot{\gamma}^{j_1}\text{..}\dot{\gamma}^{j_{n+1-k}}\partial_{i_1}\text{..}\partial_{i_k}\partial_{j_1}\text{..}\partial_{j_{n+1-k}}h(0)$$ for every $0\leq k\leq n+1$ and consequently, since $\gamma(t)$ and $\dot{\gamma}(t)$ forms a basis of $\mathbb{R}^2$ for each $t\in [0,2\pi]$ we deduce that $\partial^{\widetilde{\alpha}}f(0)=0=\partial^{\widetilde{\alpha}}h(0)$ for all multi-indices $|\widetilde{\alpha}|\leq n+1$ contradicting the fact that $p$ is a zero of order $n$. Therefore, the result follows.
\end{proof}
We obtain the following immediate well-known corollary, see also \cite[Theorem 7]{PPS22} for a related, stronger result in the case of the $2$-torus which was obtained by a different approach.
\begin{cor}
	\label{6C5}
	Given an oriented, compact, smooth Riemannian $2$-manifold $(\Sigma,g)$ without boundary we define $\mathcal{H}(\Sigma):=\{\gamma\in \mathcal{V}(\Sigma) \mid \operatorname{div}_{\Sigma}(\gamma)=0=\operatorname{curl}_{\Sigma}(\gamma)\}$. Then the following holds
	\begin{enumerate}
		\item If $\Sigma$ is a surface of genus $h\in \mathbb{N}_0$, then every $\gamma\in \mathcal{H}(\Sigma)\setminus \{0\}$ has at most $(2h-2)$ zeros.
		\item If $\Sigma\cong S^2$, then $\mathcal{H}(\Sigma)=\{0\}$.
		\item If $\Sigma\cong T^2$, then every $\gamma\in \mathcal{H}(\Sigma)$ is either identically zero or no-where vanishing.
	\end{enumerate}
\end{cor}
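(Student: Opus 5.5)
The plan is to combine \Cref{6L4} with the Poincar\'{e}--Hopf theorem. Fix $\gamma\in\mathcal{H}(\Sigma)\setminus\{0\}$. Since $\partial\Sigma=\emptyset$, \Cref{6L4} shows that the zeros of $\gamma$ are isolated. Because $\Sigma$ is compact, there are then only finitely many zeros $p_1,\dots,p_N$. Each has index $\operatorname{ind}_{p_i}(\gamma)=-(m(p_i)+1)\leq -1$, and in particular every index is strictly negative. By the Poincar\'{e}--Hopf theorem,
\begin{gather}
\nonumber
\sum_{i=1}^N \operatorname{ind}_{p_i}(\gamma)=\chi(\Sigma)=2-2h,
\end{gather}
where $h$ is the genus. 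This identity is the only global input; everything else is bookkeeping with signs.

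For (i), the index bound gives
\begin{gather}
\nonumber
-N\geq \sum_{i=1}^N -(m(p_i)+1)=2-2h,
\end{gather}
so $N\leq 2h-2$. (For $h=0$ this reads $N\leq -2$, which is impossible; that case is treated in (ii).) A sharper form follows from the same identity: the zeros counted with multiplicity $m(p_i)+1$ sum to exactly $2h-2$.

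For (ii), take $h=0$, so $\chi(S^2)=2>0$. A sum of strictly negative integers (or an empty sum, which equals $0$) cannot equal $2$. Hence no nonzero $\gamma$ exists, and $\mathcal{H}(S^2)=\{0\}$.

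For (iii), take $h=1$, so $\chi(T^2)=0$. A sum of strictly negative integers vanishes only if it is empty, so $N=0$ and $\gamma$ is nowhere vanishing. Otherwise $\gamma\equiv 0$.

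The only step that needs care is the applicability of Poincar\'{e}--Hopf, which requires the zero set to be finite. That finiteness comes from the dichotomy in \Cref{6L4} (isolated zeros unless $\gamma\equiv0$) together with compactness. No real obstacle is expected beyond this; the corollary is essentially immediate once \Cref{6L4} is available.
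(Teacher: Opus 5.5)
Your proposal is correct and follows the paper's proof essentially verbatim. Lemma \ref{6L4} gives isolated zeros, Poincar\'{e}--Hopf gives $\sum_p \operatorname{ind}_p(\gamma)=2-2h$, the bound $\operatorname{ind}_p(\gamma)=-(m(p)+1)\leq -1$ yields $N\leq 2h-2$, and (ii) and (iii) are the cases $h=0$ and $h=1$ (your explicit appeal to compactness for finiteness of the zero set is a detail the paper leaves implicit).
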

\begin{proof}[Proof of \Cref{6C5}]
	If $\gamma\in \mathcal{H}(\Sigma)\setminus \{0\}$, then according to \Cref{6L4} $\gamma$ has only isolated zeros, since $\partial\Sigma=\emptyset$. Further, by means of the Poincar\'{e}-Hopf theorem, c.f. \cite[Chapter 6]{M65}, we must have $\sum_{p\in \{\gamma=0\}}\operatorname{ind}_p(\gamma)=\chi(\Sigma)=2-2h$ where $\chi(\Sigma)$ denotes the Euler-characteristic of $\Sigma$ and $h\in \mathbb{N}_0$ denotes the genus of $\Sigma$.
	
	We let $N$ denote the number of zeros of $\gamma$ and obtain from \Cref{6L4} and the above formula
	\begin{gather}
		\nonumber
		2-2h=\sum_{p\in \{\gamma=0\}}\operatorname{ind}_p(\gamma)=-\sum_{p\in \{\gamma=0\}}(m(p)+1)\leq -N\Rightarrow N\leq 2h-2.
	\end{gather}
	The remaining two statements follow from the fact that $S^2$ has genus $h=0$ and $T^2$ has genus $h=1$. 
\end{proof}
Lastly, we will require the following property of div-free fields.
\begin{lem}
	\label{6L6}
	Let $(\Sigma,g)$ be an oriented, connected (not necessarily compact) smooth Riemannian $2$-manifold without boundary and $j\in \mathcal{V}_0(\Sigma)$. Then for every proper periodic orbit $\gamma_p$ starting at some $p\in \Sigma$ of $j$ there exists some $\delta>0$ such that every orbit $\gamma_q$ of $j$ starting at some $q\in \Sigma$ with $\operatorname{dist}_g(\gamma_q,\gamma_p)<\delta$ is compactly ambient isotopic to $\gamma_p$, where $d_g$ denotes the standard  induced Riemannian distance.
\end{lem}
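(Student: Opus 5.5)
Throughout, let $T>0$ be the minimal period of $\gamma_p$ and let $C:=\gamma_p([0,T])$, an embedded circle in $\Sigma$ because $j(p)\neq 0$. The plan is to build a flow box around $C$ in which $j$ is conjugate to a rotation, and then show that every orbit passing close to $C$ is itself a closed curve parallel to $C$ inside a thin annulus. Such a curve is then isotoped onto $C$ by an isotopy supported in that annulus.

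\textbf{Step 1: a cross-section and the first return map.} Since $j(p)\neq 0$, we choose a short smooth transversal arc $\tau:(-\epsilon,\epsilon)\to\Sigma$ with $\tau(0)=p$ and $\tau$ transverse to $j$. The flow is continuous and $\psi_T(p)=p$, so the implicit function theorem gives a smooth first return time $T(s)$ with $T(0)=T$ and a return map $P(s)$ defined on $(-\epsilon',\epsilon')$, where $\psi_{T(s)}(\tau(s))=\tau(P(s))$. $P$ is a local diffeomorphism with $P(0)=0$. Because $\Sigma$ is oriented, $P$ preserves orientation of the arc, i.e. it is increasing.

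\textbf{Step 2: the div-free property forces $P=\mathrm{id}$.} This is the main point. Locally we write $j=\mathcal{N}\times\nabla_\Sigma \chi$ near $C$, i.e. $\iota_j\mu=d\chi$ with $\mu$ the area form. This holds on a tubular annulus $A$ around $C$ provided the flux of $j$ across $C$ vanishes, and it does, since $j$ is tangent to $C$. The function $\chi$ is then a first integral of $j$ on $A$. Its restriction $s\mapsto \chi(\tau(s))$ has derivative $\mu(j,\dot\tau)\neq 0$, so it is strictly monotone. Since $\chi(\tau(P(s)))=\chi(\tau(s))$, we get $P(s)=s$. (Alternatively one can argue by area preservation: if $P(s)\neq s$, the region between $\tau(s)$, its orbit segment and the arc would be mapped strictly into itself, contradicting the flow preserving $\mu$. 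The first-integral route is cleaner.) Consequently every orbit through $\tau(s)$, $|s|<\epsilon'$, is periodic with period $T(s)$. The map $(s,t)\mapsto\psi_t(\tau(s))$ for $t\in[0,T(s)]$ then parametrises a closed annulus $A'$ containing $C$ in its interior, foliated by these periodic orbits. The map is injective for small $\epsilon'$ by transversality and the minimality of $T$.

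\textbf{Step 3: choice of $\delta$ and the isotopy.} We pick $\delta>0$ so that the $\delta$-neighbourhood of $C$ lies in $\operatorname{int}(A')$; this is possible since $C$ is compact and lies in the interior of $A'$. If $\operatorname{dist}_g(\gamma_q,\gamma_p)<\delta$, then $\gamma_q$ meets the $\delta$-neighbourhood, hence meets $A'$, hence coincides with one of the leaves $\gamma_{\tau(s)}$ (up to reparametrisation), and it is entirely contained in $A'$. In the coordinates $(s,\theta)$, with $\theta=t/T(s)$, the annulus $A'$ is diffeomorphic to $[-\epsilon'',\epsilon'']\times S^1$ and the leaves are the circles $\{s\}\times S^1$. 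We define $H$ as the compactly supported isotopy that slides $s$ to $0$ using a bump function in the $s$-variable. Concretely, $H(t,(s',\theta))=(s'-t\,\beta(s')s,\theta)$ with $\beta=1$ near $[-|s|,|s|]$, supported in the interior, and $|s\beta'|$ small so that each time slice is a diffeomorphism; we extend $H$ by the identity outside $A'$. After reparametrising $S^1$, $H(1,\cdot)$ maps $\gamma_q$ to $\gamma_p$, so the two orbits are compactly ambient isotopic (relative to the compact set $A'$).

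The main obstacle is Step 2. One has to justify carefully the existence of the local stream function on the annulus (zero flux through $C$) and the monotonicity along the transversal. The remaining steps are routine flow-box arguments.
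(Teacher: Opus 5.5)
Your argument is correct, and its core ingredient is the same as in the paper. Because $j$ is tangent to $C$, the flux of $j$ across $C$ vanishes, so on a tubular annulus around $C$ the field has a stream function. The paper writes $Y=\mathcal{N}\times j=\nabla_{\Sigma}f$ on $U$, you write $\iota_j\mu=d\chi$, and in both cases the orbits near $C$ are its closed level curves.

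The difference is in how the foliated annulus is built. The paper takes the flowout of $C$ along $Y/|Y|^2$. Since $f$ increases at unit rate along this flow, the flowout carries $C$ onto the nearby level sets of $f$. The flowout theorem therefore hands over the annulus of closed orbits directly, with no Poincar\'e section or return-map argument. The isotopy is then just the flow of a cut-off of $Y/|Y|^2$ up to the relevant level, and the final matching of parametrisations is delegated to a cited result (two smooth embedded loops with the same image and orientation are ambient isotopic).

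Your route needs more bookkeeping:
\begin{itemize}
\item the first return time via the implicit function theorem;
\item $P=\operatorname{id}$ from the strict monotonicity of $\chi$ along the transversal;
\item embeddedness of $(s,t)\mapsto\psi_t(\tau(s))$.
\end{itemize}
In exchange you get flow-adapted coordinates $(s,\theta)$ in which the whole isotopy is explicit. This includes the phase correction hidden in ``after reparametrising $S^1$'', which you can realise by composing with $(s',\theta)\mapsto(s',\theta-t\beta(s')\theta_0)$.

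One small repair is needed in Step 3. The slide $s'\mapsto s'-t\beta(s')s$, with $\beta=1$ near $[-|s|,|s|]$ and $\beta$ compactly supported in $(-\epsilon'',\epsilon'')$, can be a diffeomorphism for all $t\in[0,1]$ only if roughly $|s|<\epsilon''/2$, because at $t=1$ the point $-s$ is sent to $-2s$. Your choice of $\delta$ only guarantees $|s|<\epsilon''$. You can fix this in either of two ways:
\begin{itemize}
\item shrink $\delta$ so that the $\delta$-neighbourhood of $C$ lies in $\{|s'|<\epsilon''/4\}$; or
\item replace the slide by $(1-t)\operatorname{id}+t\phi$, where $\phi$ is any increasing diffeomorphism of $[-\epsilon'',\epsilon'']$ that is the identity near the endpoints and satisfies $\phi(s)=0$.
\end{itemize}
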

\begin{proof}[Proof of \Cref{6L6}]
	We start by defining the vector field $Y:=\mathcal{N}\times j$ where $\mathcal{N}$ denotes an arbitrary smooth unit normal field to $\Sigma$ (here we can think of $\Sigma$ being (not necessarily isometrically) embedded in $\mathbb{R}^3$ and the quantities involved being considered with respect to the pullback metric or equivalently $Y$ may be obtained from $j$ by identifying $j$ with a $1$-form via the musical isomorphism and applying the Hodge star operator to this $1$-form). We observe that $\operatorname{curl}_{\Sigma}(Y)=0$ and we let $\gamma$ denote the image of $\gamma_p$. Since $\gamma$ is a proper periodic orbit, we can find some open neighbourhood $U$ around $\gamma$ such that $j\neq 0$ in $U$ and consequently $Y\neq 0$ in $U$. Then we may consider the flowout of $\frac{Y}{|Y|^2}$ from $\gamma$, \cite[Theorem 9.20]{L12}, which gives us a diffeomorphism $\psi:(-\epsilon,\epsilon)\times \gamma\rightarrow U$ (after possibly shrinking $U$) with $\psi|_{\{0\}\times \gamma}=\operatorname{Id}$. We observe that $U\cong (0,1)\times S^1$ and that $\gamma$ is a generator of the first fundamental group of $U$. We can then verify that $\int_{\gamma}Y=0$ because $\gamma$ is tangent to $j$ and $j$ and $Y$ are perpendicular. Since $Y$ is curl-free and $\gamma$ a generator we conclude that $Y$ integrates to zero along any closed curve and is curl-free so that $Y$ admits a scalar potential within $U$, i.e. $Y=\nabla_{\Sigma}f$ for a suitable function $f\in C^{\infty}(U)$.
	
	Now a direct calculation shows that for fixed $p\in \gamma$, $\frac{d}{dt}f(\psi(t,p))=\nabla_{\Sigma} f\cdot \frac{Y}{|Y|^2}=1$ where we used that $Y=\nabla_{\Sigma}f$ and that by definition of the flowout $\psi(t,p)$ coincides with the integral curve of $\frac{Y}{|Y|^2}$ starting at $p$. From this one easily concludes that $\psi$ defines a diffeomorphism between the level sets of $f$. Further, since $\nabla_{\Sigma}f$ is normal to its level sets and $j\cdot \nabla_{\Sigma}f=0$, $j$ must be tangent to the level sets of $f$ so that if $d_{g}(\gamma_q,\gamma_p)$ is so small that $\gamma_q\cap U\neq \emptyset$ we conclude that $\gamma_q\subset U$ and is itself a periodic orbit.
	
	By the above arguments the image of $\gamma_q$ will coincide with some level set $\tau$ of $f$ (w.l.o.g. $\tau\geq 0$). We can then multiply $\frac{Y}{|Y|^2}$ by some bump function $\rho$ which is identically $1$ on $\psi([-\tau-\kappa,\tau+\kappa]\times \gamma)$ and identically zero on $\psi((-\epsilon,-\epsilon+\kappa)\times \gamma)\cup \psi((\epsilon-\kappa,\epsilon)\times \gamma)$ for small enough $\kappa>0$ such that $\rho\frac{Y}{|Y|^2}$ admits a globally defined flow $\Psi$ which coincides with the flowout $\psi$ upon restricting it to $[0,\tau]\times \gamma$. Then $\Psi:[0,\tau]\times \Sigma\rightarrow \Sigma$ gives rise to a continuous family of homeomorphisms such that $\gamma_p$ turns out to be ambient isotopic to $\Psi_{\tau}\circ \gamma_p$. Finally, $\gamma_q$ and $\Psi_{\tau}\circ \gamma_p$ have the same image and are oriented in the same way so that the smoothness of these curves implies that they are ambient isotopic, see the proof of \cite[Proposition A.1.13]{G20Diss} for a detailed account of this argument.
	
	Since the notion of ambient isotopy is an equivalence relation, c.f. \Cref{3R7}, we conclude that $\gamma_p$ and $\gamma_q$ are ambient isotopic as claimed.
\end{proof}
\subsection{Proof of \Cref{1T1}}
The following is the exact mathematical formulation of the theorem
\begin{thm}[Generic behaviour of toroidal currents]
	\label{6T7}
	Let $T^2\cong\Sigma\subset\mathbb{R}^3$ be a smooth toroidal surface and let $j\in \mathcal{V}_0(\Sigma)$. Then generically, c.f. \Cref{6L1}, we have the following dichotomy:
	\begin{enumerate}
		\item Either $j(p)\neq 0$ for all $p\in \Sigma$ and if this is the case, then $j$ is semi-rectifiable,
		\item or $j(p)=0$ for some $p\in \Sigma$. If this is the case, then the flow of $j$ has the following properties 
		\begin{enumerate}
			\item The flow of $j$ admits at least one centre- as well as saddle-point region, c.f. \Cref{3D1} and \Cref{3D2}, and the number of centre regions coincides with the number of saddle regions.
			\item Every singular point of $j$ is either a centre singularity or a saddle singularity.
			\item All but finitely many orbits of $j$ are periodic and the number of non-periodic orbits of $j$ coincides with the number of singular points of $j$.
			\item The $\omega$- and $\alpha$-limit set of each non-periodic orbit coincides with precisely one saddle point of $j$ respectively (the $\omega$- and $\alpha$-limit set may, but need not to, consist of the same point).
			\item Let $\Gamma\subset \Sigma$ denote the union of all saddle singularities and non-periodic orbits of $j$. Then $U:=\Sigma\setminus \Gamma$ is an open subset consisting of finitely many connected components $U_1,\dots,U_n$ for some $n\in \mathbb{N}$. \item If some $U_i$ contains a (necessarily centre) singularity then all orbits of $j$ starting in $U_i$ are contractible, i.e. they represent the trivial element of the first fundamental group of $\Sigma$.
			\item If $j(p)\neq 0$ for all $p\in U_i$ then any two orbits of $j$ starting in $U_i$ are ambient isotopic relative to $U_i$ within $\Sigma$.
		\end{enumerate}
	\end{enumerate} 
\end{thm}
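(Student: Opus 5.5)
We restrict throughout to the generic set of \Cref{6L1}, i.e. $j\in\mathcal{V}_0(\Sigma)$ with only non-degenerate zeros. For case (i), where $j$ has no zeros, we invoke the result of \cite{PPS22}: a no-where vanishing div-free field on $T^2$ is semi-rectifiable. The area form $\mu$ is preserved, so $\iota_j\mu$ is a closed, no-where vanishing $1$-form. Together with the winding condition of \cite[Theorem 9]{PPS22}, this gives a positive $f$ with $fj$ rectifiable. Alternatively, one can pick a global transversal closed curve, which exists because $\iota_j\mu$ is closed and non-vanishing, and reduce to a circle diffeomorphism preserving a measure.

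For case (ii) we first classify the zeros. Locally $j=\mathcal{N}\times\nabla_\Sigma\chi$ for a stream function $\chi$, which exists locally since $\iota_j\mu$ is closed. A non-degenerate zero of $j$ is then a non-degenerate critical point of $\chi$. By the Morse lemma, level sets near a local extremum are closed curves surrounding $p$; this gives a centre region. The flow there consists of periodic orbits, and it is not a focus because of area preservation. At an index-one critical point the linearisation is trace-free with $\det<0$, hence a hyperbolic saddle, and \Cref{3R3}(iii) supplies the $C^1$ conjugacy of \Cref{3D2}. This proves (b). Poincar\'e--Hopf on $T^2$ forces $\#\text{centres}-\#\text{saddles}=\chi(T^2)=0$. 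Since some zero exists, both types occur in equal number, which gives (a). Because zeros are non-degenerate and $\Sigma$ is compact, there are only finitely many of them.

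For (c) and (d), take a non-singular orbit. By Poincar\'e recurrence (\Cref{1T4}) and the local Hamiltonian structure, each regular orbit is either periodic or has $\omega$-limit set containing a zero. The $\omega$-limit cannot be a centre, since a neighbourhood of a centre is filled with closed orbits. The orbit also cannot accumulate on a saddle while being recurrent without lying on a separatrix. Indeed, near a saddle, orbits not on the stable manifold leave, and an orbit converging to a saddle must lie on one of the two stable branches. So non-periodic orbits are exactly saddle separatrices, and each has a single saddle as its $\omega$- and $\alpha$-limit. Ruling out a separatrix whose limit set is a larger invariant set, for instance a dense orbit, is the hard part. I would argue by a Lojasiewicz-type argument on the local stream function near the saddle, combined with \Cref{6L6}-type flow-box arguments on the regular set. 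Each saddle has $4$ separatrix branches, but every connection uses two branches (one outgoing, one incoming). Hence the number of non-periodic orbits is $2\#\text{saddles}=\#\text{saddles}+\#\text{centres}=\#\text{zeros}$, giving (c).

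For (e)--(g), $\Gamma$ is a compact union of finitely many points and connections, so $U=\Sigma\setminus\Gamma$ is open with finitely many components. Each component consists of periodic orbits and possibly centres. Applying \Cref{6L6} and connectedness, the set of orbits in a component $U_i$ without zeros that are ambient isotopic relative to $U_i$ to a fixed orbit is open and closed; this gives (g). If $U_i$ contains a centre, orbits near the centre are contractible, and isotopy propagates contractibility throughout $U_i$. Moreover $U_i$ contains at most one centre, since the annular family starting at the centre remains in $U_i$ until it hits $\Gamma$. This proves (f). The main obstacle remains the exclusion of dense (Cherry-type) behaviour in case (ii). I expect it to follow from the presence of centre regions, which puncture the torus: these force the foliation to have closed leaves, so it cannot be minimal.
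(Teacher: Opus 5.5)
Your case (i) (citing \cite[Theorem 9]{PPS22} for the closed, no-where vanishing form $\iota_j\mu$, equivalently the curl-free field $\mathcal{N}\times j$) matches the paper. So does your classification of zeros in case (ii): trace-free linearisation with $\det<0$ plus \Cref{3R3}, the Morse lemma for a local stream function, and Poincar\'e--Hopf for the count.

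\textbf{The gap is in (c)--(d), and it is not a technicality.} Your step ``the orbit cannot accumulate on a saddle while being recurrent without lying on a separatrix'' conflates $p\in\omega(q)$ with $\omega(q)=\{p\}$. The local saddle picture only shows that an orbit \emph{converging} to $p$ lies on a stable branch. A recurrent orbit may pass arbitrarily close to $p$ infinitely often and leave each time. Nor can the global step be imported from the cylinder. The proof of \Cref{6T10} closes an orbit segment with a short transversal and uses $0=\int_\gamma df=\int|Y|^2$, which requires $\iota_j\mu=df$ to be exact. On $T^2$ the form $\iota_j\mu$ is only closed, and $\int_\gamma\iota_j\mu$ is its period on $[\gamma]\in H_1(T^2)$, which need not vanish.

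\textbf{Your proposed way out fails, and no argument using only non-degeneracy of the zeros can work.} The heuristic ``centres force closed leaves, hence no minimality'' overlooks that closed leaves around centres can coexist with a minimal component. Here is an example.
\begin{itemize}
\item Write $\Sigma\cong\mathbb{R}^2/\mathbb{Z}^2$ and define $j$ by $\iota_j\mu=\alpha\,dx+\beta\,dy+dh$, with $\alpha,\beta$ linearly independent over $\mathbb{Q}$.
\item Take $h=A\phi$, where $\phi$ is a bump supported in a small disc missing the circle $\tau=\{y=y_0\}$, and $A$ is so large that $\nabla h=-(\alpha,\beta)$ somewhere. Then $j\in\mathcal{V}_0(\Sigma)$ has zeros. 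Perturb by a $C^1$-small exact form, as in \Cref{6L1}; this does not change the periods $\alpha,\beta$. Now all zeros are non-degenerate and $j$ lies in the generic set, with a centre and a saddle.
\item Any periodic orbit $\gamma$ satisfies $\int_\gamma\iota_j\mu=\int\mu(j,j)\,dt=0$. This integral also equals $\alpha m+\beta n$ for $[\gamma]=(m,n)$, so $[\gamma]=0$.
\item Near $\tau$ the field satisfies $dy(j)=-\alpha/\sqrt{\det g}\neq 0$, so $\tau$ is a closed transversal crossed always in the same direction. A null-homologous orbit has zero algebraic intersection with $\tau$, so it cannot meet $\tau$ at all.
\item Hence the open saturation of $\tau$, which has positive area, consists entirely of non-periodic orbits. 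This is Arnol'd's flow with a trap, and (c) and (d) fail for it.
\end{itemize}

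The paper does not prove this step by hand either. It derives $|j(q)|\geq C|q|$ near the zeros and invokes \cite[Theorem 1]{CJL99}. The example shows that this route cannot deliver (c) as stated.

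\textbf{What is missing is an extra hypothesis.} For instance, assume the periods of $\iota_j\mu$ are commensurable, so that $j=\mathcal{N}\times\nabla_\Sigma F$ with $F$ real- or circle-valued. Then orbits lie in the compact level sets of $F$, the level-set argument of \Cref{6T8} gives (c)--(d), and your treatment of (e)--(g) goes through.
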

\begin{rem}
	We emphasise that both, \Cref{6L1} as well as \Cref{6T7}, are valid more generally for toroidal surfaces $\Sigma$ equipped with an arbitrary smooth Riemannian metric $g$.
\end{rem}
\begin{proof}[Proof of \Cref{6T7}]
	We start by considering the situation $j(p)\neq 0$ for all $p\in \Sigma$. We can consider $Y:=\mathcal{N}\times j$ where $\mathcal{N}$ defines the outward unit normal. This defines a no-where vanishing curl-free vector field on $\Sigma$ with $Y\cdot j=0$ everywhere. The semi-rectifiability property is then a direct consequence of \cite[Theorem 9]{PPS22}.
	\newline
	We note that so far we did not make use of the genericity assumption on the vector fields $j$.
	\newline
	\newline
	We observe now that according to \Cref{6L1} it is enough to establish the dichotomy for vector fields $j\in \mathcal{V}_0(\Sigma)$ which have only non-degenerate zeros. It is clear that either $j(p)\neq 0$ for all $p\in \Sigma$ or that $j$ admits a zero in $\Sigma$. We are therefore left with establishing properties (a)-(g) under the assumption of the existence of a zero of $j$.
	\newline
	Suppose that $j(p)=0$ for some $p\in \Sigma$. Since all zeros of $j$ are non-degenerate, the index of any zero of $j$ is either $+1$ or $-1$, c.f. \cite[Chapter 6: Lemma 4, Lemma 5]{M65}. Further, since non-degenerate zeros are isolated and hence $j$ has only isolated zeros, it follows from the Poincar\'{e}-Hopf theorem, \cite[Chapter 6]{M65}, that the sum over all indices over all zeros of $j$ equals the Euler-characteristic of $\Sigma$ which equals $0$ since $\Sigma\cong T^2$. Since $j$ admits at least one zero by assumption, it must in fact have at least one zero $p_+$ of index $+1$ and one zero $p_-$ of index $-1$.
	\newline
	We argue now first that the zero of index $-1$ corresponds to a saddle region. To see this, we work in normal coordinates centred around $p_-$. We observe that the corresponding Jacobian at point $p_-$ admits $2$ (complex) eigenvalues $\lambda_1,\lambda_2\in \mathbb{C}$. Since $j$ is div-free it follows immediately from the fact that we work in normal coordinates that the Jacobian of $j$ at point $p_-$ has zero trace and consequently $\lambda:=\lambda_1=-\lambda_2$. Further, since all entries of the Jacobian are real, so must be its determinant, which is also given by $-\lambda^2=\det{Dj(p_-)}\in \mathbb{R}$ (where the Jacobian is computed in normal coordinates). Note further, that the Jacobian is non-singular at $p_-$ and therefore $\lambda\neq 0$. We conclude that either $\lambda$ is real and non-zero (which corresponds to the determinant being negative) or $\lambda$ is purely imaginary (and non-zero) which corresponds to the situation where the determinant is positive. Since the index at $p_-$ equals $-1$ and corresponds to the sign of the determinant of the Jacobian of $j$ at $p_-$, \cite[Chapter 6]{M65}, we conclude that $\lambda\in \mathbb{R}\setminus \{0\}$ and consequently both eigenvalues of $Dj(p_-)$ are real, non-zero and of opposite sign. As explained in \Cref{3R3} it follows now from the Hartman-Grobman theorem that $j$ admits a saddle region.
	\newline
	Now we will argue that the zero $p_+$ of index $+1$ corresponds to a centre region. Note that by the above arguments, the eigenvalues of the Jacobian of $j$ at $p_+$ are purely imaginary and hence the Hartman-Grobman theorem is no longer applicable. Instead, we fix any contractible neighbourhood $U$ around $p_+$ and define the vector field $w:=j\times \mathcal{N}$. Since $j$ is div-free, we conclude that $\operatorname{curl}_{\Sigma}(w)=0$ and further one verifies that $j=\mathcal{N}\times w$. By contractibility of $U$ we may express $w=\nabla_{\Sigma}f$ within $U$ (with $f(p_+)=0$). Further, it is clear that $\nabla_{\Sigma}f(p_+)=w(p_+)=0$ because $j(p_+)=0$ and one verifies easily by working in normal coordinates that $p_+$ is a non-degenerate zero of $w$ because it is a non-degenerate zero of $j$ and even more that the index of $w$ at $p_+$ equals $+1$. Consequently, $p_+$ is a non-degenerate critical point of $f$ in $U$ and the corresponding Hessian is either positive definite or negative definite. We can therefore apply the Morse lemma, \cite[Lemma 2.2]{Mil63}, and find a coordinate system $\mu$ around $p_+$ in which $f$ takes the form $f(x,y)=x^2+y^2$ and $\mu(p_+)=0$ where we without loss of generality assume that the Hessian of $f$ at $p_+$ is positive definite. We observe further by definition of $j$ and since $\nabla_{\Sigma}f=w$ that $df(j)=j(f)=j\cdot \nabla_{\Sigma}f=0$. In addition, it is immediate from the expression of $f$ in the given local coordinate system that the vector field $(-y,x)$ spans the kernel of the differential $df$ at every $(x,y)\in \mathbb{R}^2\setminus \{(0,0)\}$. We have already observed that $j(x,y)$ also lies in the kernel of $df$ for every $(x,y)$ and consequently (since the kernel is $1$-dimensional) we conclude that $j(x,y)=h(x,y)(-y,x)$ in our coordinate system for all $(x,y)\neq 0$ for a suitable non-vanishing scalar function $h$ (since $j$ does not vanish in a punctured neighbourhood of $p_+$). We note that the field lines of $(-y,x)$ are all circles not containing zero and that therefore on each circle $h$ is bounded away from zero. Hence, the field lines of $j$ also form circles around $(0,0)$. We can therefore let $W:=\mu^{-1}(B_r(0))$ be the preimage of a small enough Euclidean ball around $(0,0)$ and conclude that $W\setminus \{p_+\}$ consists of proper periodic orbits of $j$. This proves that the flow of $j$ admits a centre region.
	\newline
	We have therefore characterised the behaviour of the flow of $j$ around all of its singularities and shown that there is at least one saddle and one centre region.
	
	To prove the characterisation of the field lines of $j$ in statement (ii) we observe that locally, in normal coordinates centred around any fixed singular point, $j$ satisfies a Lojasiewicz-type inequality $|j(q)|\geq C|q|$ for some $C>0$ and all $q$ within this neighbourhood. If that were not the case there would be a sequence $q_n\rightarrow 0$ with $\frac{|j(q_n)|}{|q_n|}\rightarrow 0$ and since $j(0)=0$ in our coordinate system this would imply by definition of the differential that $Dj(0)\cdot \frac{q_n}{|q_n|}\rightarrow 0$. But by compactness of $S^1$ (after passing to a subsequence), $\frac{q_n}{|q_n|}\rightarrow q\neq 0$ and so $Dj(0)\cdot q=0$ for some $q\neq 0$, contradicting the invertibility of $Dj(0)$. It is then a direct consequence of \cite[Theorem 1]{CJL99} that all orbits of $j$ are proper periodic orbits except for its singular points and non-periodic orbits which connect singular points.
	
	Since points of index $+1$ correspond to centre regions and can therefore not be connected to because distinct orbits do not cross, we conclude that the non-periodic orbits all connect saddle singularities. It is now standard that the $\omega$-limit set is connected and since it is contained in the zero set of $j$ which consists of isolated points, we must have $\omega(q)=\{p\}$ whenever $q\in \Sigma$ is such that $\gamma_q$ is not periodic and where $p\in \Sigma$ is a suitable saddle singularity of $j$. Saddle singularities admit precisely two incoming and two outgoing field lines, c.f. \Cref{1F1}. Therefore each saddle singularity $p$ of $j$ will be contained in precisely two (distinct) $\omega$- and $\alpha$-limit sets of non-periodic trajectories. Hence there must be precisely $2\cdot \#\{p\in \Sigma\mid p\text{ is a saddle singularity}\}$ non-periodic trajectories. But as we have seen, according to the Poincar\'{e}-Hopf theorem, there are as many centre singularities as there are saddle singularities and each singularity is either of centre or of saddle type so that overall there are as many non-periodic orbits as there are singularities.
	
	We now let $\Gamma\subset \Sigma$ denote the union of all non-periodic orbits and saddle singularities of $j$ and we observe that by our previous arguments $\Gamma$ can equivalently be expressed as the union of the closures of the non-periodic orbits of $j$. Since there are finitely many such orbits, $\Gamma$ is closed as a union of finitely many closed sets and consequently $U=\Sigma\setminus \Gamma$ is open. Further, if $U$ were to have infinitely many connected components, we could extract a sequence $(p_n)_n\subset U$ such that each $p_n$ is contained in a distinct component and such that $p_n\rightarrow p$ for a suitable $p\in \Sigma$. By construction we must have $p\in \Gamma$. If $p$ is a saddle singularity, then according to \Cref{3D2}, see also \Cref{1F1}, the non-periodic orbits divide a suitable neighbourhood around $p$ in at most $4$ connected components so that infinitely many of the $p_n$ must be contained in the same component, which contradicts the choice of the $p_n$. The remaining case is that $p$ is contained in a non-periodic orbit $\gamma$ of $j$. But then, since the $\omega$- and $\alpha$-limit sets of the non-periodic field lines of $j$ consist only of saddle points we conclude that $\Gamma\setminus \gamma$ must have a positive distance to $p$ so that locally around $p$ at most two distinct connected components can meet and we obtain once again a contradiction. We conclude that $U$ has indeed finitely many connected components.
	
	Finally, fix any connected component $U_i$ of $U$, denote by $P_i:=\{p\in U_i\mid j(p)=0\}$ and set $W_i:=U_i\setminus P_i$. We observe that $P_i$ consists of at most finitely many points so that the $W_i$ are all connected, open sets and hence manifolds in their own right and by what we have already shown all orbits of $j$ starting at some $W_i$ are proper periodic orbits.
	
	We claim first that any two orbits within $W_i$ are ambient isotopic relative to $W_i$. To see this, fix $p\in W_i$ and define the set $S:=\{q\in W_i\mid \gamma_q\text{ is ambient isotopic to }\gamma_p\text{ relative to }W_i\}$ where $\gamma_p$ denotes the field line of $j$ starting at $p$. 
	Clearly $p\in S$ and so $S\neq\emptyset$. Further, if $q\in S$, then according to \Cref{6L6} there exists some $\delta>0$ such that $\gamma_x$ for all $x\in B_{\delta}(q)$ is compactly ambient isotopic to $\gamma_q$ within $W_i$ and so in particular ambient isotopic to $\gamma_q$ relative to $W_i\subset \Sigma$. Then by definition of $S$ and the transitivity of ambient isotopy relative to a set we conclude that $B_{\delta}(q)\subset S$ and so $S$ is open. Similarly, if $(q_n)_n\subset S$ converges to some $q\in W_i$, then by means of \Cref{6L6} for large enough $n$ it will be the case that $\gamma_{q_n}$ and $\gamma_q$ are compactly ambient isotopic within $W_i$ and again by transitivity we find $q\in S$, i.e. $S$ is closed. Overall, $S$ is a non-empty clopen subset of the connected set $W_i$ and thus $W_i=S$ which proves the claim. In particular, if $j(p)\neq 0$ in $U_i$, then $U_i=W_i$ and bullet point (g) follows
	
	Lastly, if $P_i\neq \emptyset$, i.e. $U_i$ contains a (centre) singularity of $j$, then since centre regions contain proper periodic orbits which are contractible within $\Sigma$, all orbits within $U_i$ must be contractible by the aforementioned ambient isotopy. More precisely all orbits starting at $P_i$ are constant orbits and thus contractible while all orbits starting in $W_i$ will be ambient isotopic to one of the contractible orbits surrounding a centre singularity and hence itself be contractible.
\end{proof}
\subsection{Proof of \Cref{1T2}}
We state now the precise mathematical formulation of the theorem
\begin{thm}[Generic behaviour of cylindrical currents]
	\label{6T8}
	Let $S^1\times [0,1]\cong \Sigma\subset\mathbb{R}^3$ be a smooth cylindrical surface and let $j\in \mathcal{V}^{\operatorname{OT}}_0(\Sigma)$. Then generically, c.f. \Cref{6L2}, the flow of $j$ has the following properties:
	\begin{enumerate}
		\item The flow of $j$ admits at least one centre and one saddle region and the number of centre regions coincides with the number of saddle regions.
		\item Every singular point of $j$ is either a centre singularity or a saddle singularity.
		\item All but finitely many orbits of $j$ are periodic and the number of non-periodic orbits of $j$ coincides with the number of singularities of $j$.
		\item The $\omega$- and $\alpha$-limit set of each non-periodic orbit coincides with precisely one saddle point of $j$ respectively (the $\omega$- and $\alpha$-limit set may, but need not to, consist of the same point).
		\item Let $\Gamma\subset \Sigma$ denote the union of all saddle singularities and non-periodic orbits of $j$. Then $U:=\Sigma\setminus \Gamma$ is an open subset consisting of finitely many connected components $U_1,\dots,U_n$ for some $n\in \mathbb{N}$.
		\item If $U_i$ contains a singularity, then all orbits of $j$ starting in $U_i$ are contractible.
		\item If $U_i\cap \partial\Sigma=\emptyset$ and $U_i$ does not contain any singularities of $j$, then any two orbits of $j$ starting in $U_i$ are ambient isotopic relative to $U_i$ within $\operatorname{int}(\Sigma)$.
		\item The connected components $U_i$ with $U_i\cap \partial\Sigma\neq\emptyset$ consist of poloidal orbits of $j$, i.e. of proper periodic orbits which represent non-trivial elements of the first fundamental group of $\Sigma$.
	\end{enumerate}
\end{thm}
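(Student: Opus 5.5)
By \Cref{6L2} it suffices to treat $j\in\mathcal{V}_0^{\operatorname{OT}}(\Sigma)$ with only non-degenerate zeros. The plan is to first show that such a $j$ has a zero, and then rerun the local and global analysis from the proof of \Cref{6T7}, with extra care near $\partial\Sigma$.

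\emph{Existence of zeros.} As in the proof of \Cref{6L2}, write $j=\mathcal{N}\times\nabla_\Sigma\chi$ with $\chi$ constant on each boundary circle, say $\chi|_{c_1}=a$ and $\chi|_{c_2}=b$. Since $j\neq0$ on $\partial\Sigma$, $\nabla_\Sigma\chi$ is a nonzero multiple of the normal $n$ on each circle, and the sign of $\partial_n\chi$ is constant along each circle. The opposite orientation of $j$ on $c_1$ and $c_2$ translates into a condition on the outward normal derivatives. Parametrise $\Sigma\cong[0,1]\times S^1$ with coordinate $s$. On the circles, $j=\mathcal{N}\times(\partial_n\chi)\,n$ is tangent to $S^1$, and $\mathcal{N}\times n$ points in the same $S^1$-direction on both circles once $n=\pm\partial_s$ is accounted for. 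So opposite orientation means $\partial_s\chi$ has the \emph{same} sign on both circles, equivalently $\partial_n\chi$ is positive on one circle and negative on the other. I would check this sign bookkeeping carefully.

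If $\partial_s\chi>0$ on both circles, then $\chi$ increases into the interior from $c_1$ and decreases into the interior towards $c_2$ near $s=1$. If $a\le b$, then $\chi$ exceeds $b$ just inside $c_2$, so its maximum over the compact $\Sigma$ is attained in $\operatorname{int}(\Sigma)$. Otherwise, near $c_1$ the function $\chi$ drops below $a$, so the minimum is interior. Either way $\chi$ has an interior critical point, hence $j$ has a zero.

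\emph{Index count.} Zeros are non-degenerate, hence isolated, finite in number and of index $\pm1$. To apply Poincar\'e–Hopf on a manifold with boundary, I would double $\Sigma$ or rotate $j$ by $90^\circ$ to $\nabla_\Sigma\chi$, which is normal to $\partial\Sigma$: pointing outward on one circle and inward on the other. The cleaner route is to glue discs onto both circles. On the disc where $\nabla\chi$ points outward of $\Sigma$, extend by a vector field with a single source or sink of index $+1$; similarly on the other disc. The resulting field on $S^2$ then gives $\sum\operatorname{ind}+2=2$, so the index sum over $\operatorname{int}(\Sigma)$ is $0$. Combined with the existence of at least one zero, this yields equally many zeros of index $+1$ and $-1$, at least one of each. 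The identification of saddles via the trace-zero Jacobian together with Hartman–Grobman, and of centres via the Morse lemma applied to a local potential, is copied verbatim from \Cref{6T7}. This gives (i) and (ii).

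\emph{Global orbit structure.} For (iii)–(v) I would use the Lojasiewicz inequality $|j(q)|\ge C|q|$ near zeros as before. Then I would either invoke \cite[Theorem 1]{CJL99} on the closed surface obtained by doubling (gluing two copies of $\Sigma$ along $\partial\Sigma$; $j$ tangent to $\partial\Sigma$ reflects to a continuous, indeed Lipschitz after smoothing in a collar, div-free field), or argue directly with the potential $\chi$. Since $\chi$ is a first integral and orbits are level-set components, the direct argument is actually simpler: regular level-set components of $\chi$ are closed curves, and only the finitely many critical levels contain saddle connections. Counting two incoming and two outgoing separatrices per saddle, and noting that the boundary circles are themselves periodic orbits and cannot be limit sets, gives the orbit counts. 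Finite connectedness of $U$ follows exactly as in \Cref{6T7}.

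\emph{Components (vi)–(viii).} Item (vi) and item (vii) follow from \Cref{6L6} by the clopen argument on $W_i=U_i\setminus P_i$, applied inside $\operatorname{int}(\Sigma)$. For (viii), a component $U_i$ meeting $\partial\Sigma$ contains a full boundary circle, because the circle is an orbit disjoint from $\Gamma$. Nearby orbits lie in a collar and are isotopic to the circle, which is poloidal. Moreover, such a $U_i$ cannot contain a centre, since orbits near a centre are contractible and the isotopy class is constant on the connected $W_i$. The isotopy argument needs care here because $U_i$ touches the boundary: I would apply \Cref{6L6} in a slightly enlarged open cylinder $\widetilde\Sigma$ as in \Cref{6L2}.

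The main obstacle is the existence step: getting the sign and orientation bookkeeping right, and making Poincar\'e–Hopf with boundary rigorous.
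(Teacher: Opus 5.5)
Your overall plan follows the paper's proof: reduction via \Cref{6L2}, the potential $\chi$, an interior extremum, index sum zero, the local analysis from \Cref{6T7}, level sets of $\chi$ as orbits, the clopen argument with \Cref{6L6}, and a collar argument for (viii). However, the one step where the opposite-orientation hypothesis enters has its sign reversed, and as written it asserts something false.

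\textbf{The sign bookkeeping.} Let $V$ be the unit tangent field orthogonal to $\partial_\theta$ with $V\cdot\partial_s>0$.
\begin{itemize}
\item $\mathcal{N}\times V=\sigma\,\partial_\theta/|\partial_\theta|$ for a sign $\sigma$ that is constant on the connected surface.
\item On $c_i$ one has $\nabla_\Sigma\chi=\lambda_iV$ with $\operatorname{sign}\lambda_i=\operatorname{sign}\partial_s\chi$, so $j=\sigma\lambda_i\,\partial_\theta/|\partial_\theta|$ there.
\item Hence opposite orientation means $\partial_s\chi$ has \emph{opposite} signs on $c_1$ and $c_2$.
\item The outward normal is $n=-V$ on $c_1$ and $n=V$ on $c_2$. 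So this means $\partial_n\chi$ has the \emph{same} sign on both circles: $\nabla_\Sigma\chi$ points outward along all of $\partial\Sigma$ or inward along all of it.
\end{itemize}
On the round cylinder with $\mathcal N=e_r$ and $\chi=\chi(z)$ one gets $j=-\chi'(z)e_\theta$, which confirms this. Your slip is the claim that $\mathcal N\times n$ has a fixed $S^1$-direction on both circles; it is $\mathcal N\times V$ that does.

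\textbf{Why this breaks (i).} The configuration you work with, $\nabla_\Sigma\chi$ outward on one circle and inward on the other, is that of \emph{equally} oriented boundary flows. There zeros need not exist: take $\chi=s$, or the harmonic field $\gamma_0$ of \eqref{1E4}, which is nowhere vanishing by \Cref{6T9}. Your case analysis also contradicts its own hypothesis. If $\partial_s\chi>0$ on both circles, then $\chi>a$ just inside $c_1$ and $\chi<b$ just inside $c_2$, the reverse of both inequalities you then invoke. So the existence of a zero, and with it (i), is not established.

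\textbf{The repair} is the paper's argument. Replacing $\chi$ by $-\chi$ if necessary, let $\nabla_\Sigma\chi$ point outward along both circles. Then $\chi<a$ just inside $c_1$ and $\chi<b$ just inside $c_2$. Hence $\min_\Sigma\chi<\min(a,b)=\min_{\partial\Sigma}\chi$ is attained in $\operatorname{int}(\Sigma)$, with no case split on $a$ versus $b$.

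With the correct sign, Poincar\'e--Hopf for manifolds with boundary applies directly to the outward field $\nabla_\Sigma\chi$, whose indices equal those of $j$; this is what the paper does. Your capping-off argument, where each glued disc carries one source or sink of index $+1$, is also valid. It has the advantage of not depending on the boundary configuration, so it would also give the index-sum-zero statement used in \Cref{ExtraT2} without the winding-number computation from \Cref{6T9}.

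The rest of your plan for (ii)--(viii) follows the paper and is fine once a zero is known to exist.
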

\begin{proof}[Proof of \Cref{6T8}]
	According to \Cref{6L2} we only need to prove the statement for vector fields $j\in \mathcal{V}_0^{\operatorname{OT}}(\Sigma)$ with non-degenerate zeros within $\operatorname{int}(\Sigma)$.
	\newline
	Our goal now will be to argue that $j$ has at least one zero of index $-1$ in $\operatorname{int}(\Sigma)$ and one zero of index $+1$ in $\operatorname{int}(\Sigma)$. Then the arguments from the proof of \Cref{6T7} regarding the existence of saddle and centre regions will apply verbatim and show that zeros of index $+1$ correspond to centre regions while zeros of index $-1$ correspond to saddle regions.
	\newline
	Just like in the proof of \Cref{6L2} we can find some $\psi\in C^{\infty}(\Sigma)$ such that $\psi|_{\partial\Sigma}$ is locally constant and  $j=\mathcal{N}\times \nabla_{\Sigma}\psi$. We observe further that the fact that $j\neq 0$ on $\partial\Sigma$ and that the field lines of $j$ are oppositely oriented on the boundary components imply that $\nabla_{\Sigma}\psi\neq 0$ on $\partial\Sigma$, $\nabla_{\Sigma}\psi\perp \partial\Sigma$ and that $\nabla_{\Sigma}\psi$ is either outward pointing along both boundary circles or is inward pointing along both boundary circles, see \Cref{6F1}.
	
	\begingroup\centering
	\begin{figure}[H]
		\hspace{4.5cm}\includegraphics[width=0.35\textwidth, keepaspectratio]{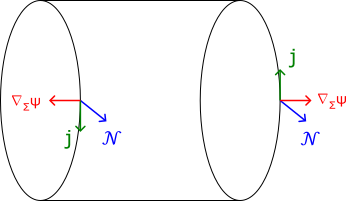}
		\caption{The outward unit normal $\mathcal{N}$ in blue. The outward pointing $\nabla_{\Sigma}\psi$ in red and $j=\mathcal{N}\times \nabla_{\Sigma}\psi$ in green.} 
		\label{6F1}
	\end{figure}
	\endgroup
	
	It then first of all follows from the Poincar\'{e}-Hopf theorem for manifolds with boundary, c.f. \cite[Chapter 6]{M65}, and the fact that cylindrical surfaces have zero Euler characteristic that the sum of all indices of the zeros of $j$ equals zero. Since $j$ has only non-degenerate zeros, we conclude just like in the proof of \Cref{6T7} that if $j$ has at least one zero, then it must in fact have at least one zero $p_-$ of index $-1$ and at least one zero $p_+$ of index $+1$. Even more, the number of zeros of index $+1$ must coincide with the number of zeros of index $-1$.
	
	To see that $j$ has at least one zero (we assume without loss of generality that $\nabla_{\Sigma}\psi$ is everywhere outward pointing) we make use of the fact that the values of $\psi$ are (strictly) increasing as we approach the boundary circles of $\partial\Sigma$ since $\nabla_{\Sigma}\psi\perp \partial\Sigma$ is non-vanishing and outward pointing. Therefore, letting $\partial\Sigma=C_1\cup C_2$ be the decomposition into the two boundary circles, there exist points $q_1,q_2\in \operatorname{int}(\Sigma)$ such that $\psi(q_1)<\min_{q\in C_1}\psi(q)$ and $\psi(q_2)<\min_{q\in C_2}\psi(q)$ and consequently there is some $\widetilde{q}\in \operatorname{int}(\Sigma)$ with $\psi(\widetilde{q})<\min_{q\in \partial\Sigma}\psi(q)$ and hence (by compactness) $\psi$ must admit a global minimum within $\operatorname{int}(\Sigma)$ and hence admits at least one critical point $x$. This implies $\nabla_{\Sigma}\psi(x)=0$ and thus $j(x)=\mathcal{N}(x)\times \nabla_{\Sigma}\psi(x)=0$ admits at least one zero within $\operatorname{int}(\Sigma)$. As explained previously, this implies the presence of at least one zero of index $-1$ and one zero of index $+1$ which correspond to a saddle and centre region respectively.
	
	We will now argue that all but finitely many orbits are proper periodic orbits and that the remaining orbits are the equilibria and non-periodic orbits connecting saddles.
	
	First, we note that since $\nabla_{\Sigma}\psi$ is everywhere outward pointing along $\partial\Sigma$ the (non-empty) level sets $\{q\in \operatorname{int}(\Sigma)\mid \psi(q)=s\}$ for any fixed $s\in \mathbb{R}$ have a positive distance to $\partial\Sigma$. This implies that all regular level sets of $\psi$ within $\operatorname{int}(\Sigma)$ are smooth, compact $1$-manifolds without boundary, i.e. they are a finite union of circles, see \cite[Appendix]{M65}. Further, $j\cdot \nabla_{\Sigma}\psi=0$ and so $j$ is tangent to the level sets of $\psi$ and consequently all orbits of $j$ starting at a regular level set of $\psi$ are proper periodic orbits. 
	Clearly also the two boundary circles are proper periodic orbits of $j$. Generic fields $j$ have only finitely many singular points (all contained in $\operatorname{int}(\Sigma)$) $p_1,\dots,p_N\in \operatorname{int}(\Sigma)$ and so there are finitely many critical level sets. We can now fix any such critical level set and observe that if we remove the corresponding critical points from this level set, the remaining level set will be a smoothly embedded $1$-manifold without boundary and so its connected components will either correspond to proper periodic orbits or to orbits which are diffeomorphic to $\mathbb{R}$ and the orbits diffeomorphic to $\mathbb{R}$ are precisely the non-periodic orbits.
	
	Therefore, if $\gamma_q$ is a non-periodic orbit, then $q$ is contained in a connected component of a critical level set of $\psi$ from which the critical points were removed and this connected component is diffeomorphic to $\mathbb{R}$. It is again standard that the $\omega$-limit set is non-empty and connected. If there were to exist some $p\in \omega(q)$ with $j(p)\neq 0$, i.e. $\nabla_{\Sigma}\psi(p)\neq 0$, then $p$ would be an element of the image of $\gamma_q$ because the critical level set with the critical points removed is a $1$-manifold and consequently we would reach $p$ in finite time, i.e. $\gamma_q$ would be a proper periodic orbit, contradicting the fact that the image of $\gamma_q$ is diffeomorphic to $\mathbb{R}$. Consequently $\omega(q)\subset \{p\in \operatorname{int}(\Sigma)\mid j(p)=0\}$ and since $\omega(q)$ is connected and the zero set isolated we conclude that $\omega(q)$ is a singleton.
	
	From here on we can argue identically as in the proof of the classification of the finitely many non-periodic orbits in statement (ii) of \Cref{6T7} in order to conclude that there are exactly as many non-periodic orbits as there are singular points of $j$ and that the $\omega$- and $\alpha$-limit set of any such orbit coincide with some saddle singularity of $j$ respectively.
	
	To conclude the proof we can consider $U:=\Sigma\setminus \Gamma$ and just like in \Cref{6T7} it follows that $U$ has at most finitely many connected components. At most two of these components will intersect $\partial\Sigma$ and upon removing the boundary circles from these components we can argue identically as in the proof of \Cref{6T7} that in each such component all orbits are contractible if $U_i$ contains a singularity or otherwise all components within $U_i\setminus\partial \Sigma$ are ambient isotopic within $\operatorname{int}(\Sigma)$ relative to $U_i\setminus \partial\Sigma$. Lastly, if $\partial\Sigma\cap U_i\neq \emptyset$, then we can consider the inward flow of $Y:=\frac{\nabla_{\Sigma}(-\psi)}{|\nabla_{\Sigma}\psi|^2}$ which is inward pointing along the boundary and provides a diffeomorphism between the boundary circles and the (interior) level sets of $\psi$. We recall that $j=\mathcal{N}\times \nabla_{\Sigma}\psi$ and so the field lines of $j$ near the boundary circles coincide with the level sets of $\psi$. Accordingly, if $U_i$ contains a boundary circle, then there are field lines of $j$ within $U_i\setminus \partial\Sigma$ which are homotopic to a non-contractible boundary circle and hence $U_i$ must consist of poloidal orbits alone.
\end{proof}
\subsection{Proof of \Cref{1T3}}
In this subsection we prove the following result, where we recall that $\mathcal{H}_N(\Sigma)=\{j\in \mathcal{V}(\Sigma)\mid \operatorname{div}_{\Sigma}(j)=0=\operatorname{curl}_{\Sigma}(j)\text{, }j\parallel \partial\Sigma\}$.
\begin{thm}
	\label{6T9}
	Let $[0,1]\times S^1\cong \Sigma\subset \mathbb{R}^3$ be a smooth cylindrical surface and let $j\in \mathcal{H}_N(\Sigma)$. Then
	\begin{enumerate}
		\item $j$ is either identically zero or is otherwise no-where vanishing.
		\item If $j$ is not identically zero, then all of its orbits are proper poloidal orbits, i.e. they represent non-trivial elements of the first fundamental group of $\Sigma$.
	\end{enumerate}
\end{thm}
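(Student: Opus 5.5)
The plan is to reduce everything to properties of the harmonic potential from (\ref{1E3})--(\ref{1E5}). If $j\not\equiv 0$, then $j=I\gamma_0=I\,\mathcal{N}\times\nabla_{\Sigma}\psi$ with $I\neq 0$. Here $\psi$ is the unique smooth solution of $\Delta\psi=0$ in $\Sigma$ with $\psi|_{c_1}=0$ and $\psi|_{c_2}=1$. Consequently the zeros of $j$ coincide with the critical points of $\psi$, and $j\cdot\nabla_{\Sigma}\psi=0$, so $j$ is tangent to the level sets of $\psi$.

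\textbf{Step 1: no zeros on $\partial\Sigma$.} By the strong maximum principle, $0<\psi<1$ in $\operatorname{int}(\Sigma)$, since $\psi$ is non-constant on the connected surface $\Sigma$. Hence $\psi$ attains its minimum exactly on $c_1$ and its maximum exactly on $c_2$. Both boundary circles are smooth, so they satisfy the interior ball condition, and the Hopf boundary point lemma gives $\partial_n\psi\neq 0$ on $\partial\Sigma$. Thus $\nabla_{\Sigma}\psi\neq 0$, and so $j\neq 0$, along $\partial\Sigma$. This is the step I expect to need the most care: I will apply Hopf's lemma in local coordinates, where $\Delta_{\Sigma}$ is a uniformly elliptic operator with smooth coefficients and no zeroth-order term, so the standard version applies.

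\textbf{Step 2: no zeros in $\operatorname{int}(\Sigma)$.} By \Cref{6L4}, the interior zeros of $j$ are isolated and each has index $-(m(p)+1)\leq -1$. They are also finitely many, because Step 1 keeps them away from $\partial\Sigma$ and $\Sigma$ is compact. Since $j$ is tangent to $\partial\Sigma$ and non-vanishing there, the Poincar\'{e}--Hopf theorem applies. One way to see this is to double $\Sigma$ along its boundary to obtain a torus; alternatively, one can rotate $j$ near the boundary into an outward-pointing field without creating zeros. Either way, the sum of the interior indices equals $\chi(\Sigma)=0$. A finite sum of terms that are each $\leq -1$ can vanish only if it is empty, so $j$ has no zeros at all. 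This proves (i).

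\textbf{Step 3: all orbits are poloidal.} Since $\psi$ now has no critical points, every level set $\{\psi=s\}$ with $s\in(0,1)$ is regular and compact, and by Step 1 it has positive distance to $\partial\Sigma$. It is therefore a finite union of embedded circles, and each such circle is a full orbit of $j$, i.e.\ a proper periodic orbit. The levels $s=0,1$ are the boundary circles $c_1,c_2$, which are themselves periodic orbits and non-contractible. By \Cref{6L3}, no interior orbit is contractible. In the annulus, an embedded circle that is not contractible represents a generator of $\pi_1(\Sigma)$ (up to sign), and hence is a poloidal orbit. This proves (ii); in particular no centre or saddle regions can occur.
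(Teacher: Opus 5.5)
Your proposal is correct and matches the paper's proof step for step: the maximum principle plus Hopf's lemma rule out boundary zeros, \Cref{6L4} forces every interior index to be $\le -1$, the index sum must vanish, which gives the contradiction, and regular level sets of $\psi$ together with \Cref{6L3} give poloidality. The only variation is in Step 2: you get $\sum_p \operatorname{ind}_p = \chi(\Sigma)=0$ by applying Poincar\'{e}--Hopf to the boundary-tangent field $j$ (rotating it outward in a zero-free collar, or doubling to a torus), whereas the paper works with $\nabla_{\Sigma}\psi$, which points inward on one circle and outward on the other, and computes the boundary winding numbers $+1$ and $-1$ by hand on a planar annulus; your route is valid and slightly cleaner.
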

\begin{proof}[Proof of \cref{6T9}]
	We assume that $j$ is not identically zero and argue first that all zeros of $j$ are contained in  $\operatorname{int}(\Sigma)$. First, according to (\ref{1E2})-(\ref{1E5}) we write $j=I\mathcal{N}\times \nabla_{\Sigma}\psi$ where $I\in \mathbb{R}$ is a constant and $\psi$ is a solution of
	\begin{gather}
		\label{6E14}
		\Delta \psi=0\text{ in }\Sigma\text{, }\psi|_{c_1}=0,\psi|_{c_2}=1
	\end{gather}
	where $\partial\Sigma=c_1\cup c_2$ is the decomposition of $\partial\Sigma$ into its two boundary circles. We note that the properties stated in \Cref{6T9} are preserved by multiplications by non-zero constants of $j$ and so we may assume that $I=1$.
	\newline
	We first argue that $j(p)\neq 0$ for all $p\in \partial\Sigma$. To this end we note that according to the maximum principle, \cite[Chapter 6.4.2 Theorem 3]{Evans10}, since $\psi$ is non-constant, $\psi$ achieves its minimum and maximum only on its boundary, so that $\min_{p\in \Sigma}\psi(p)=0$, $\max_{p\in \Sigma}\psi(p)=1$. It then follows from Hopf's lemma, \cite[Chapter 6.4.2]{Evans10}, that $n\cdot \nabla_{\Sigma}\psi\neq 0$ on all of $\partial\Sigma$ (because every boundary point is a global extremum), where $n$ denotes a unit normal to $\partial\Sigma$ within $\Sigma$, i.e. $n\perp \partial\Sigma$ and $n\cdot \mathcal{N}=0$ where $\mathcal{N}$ denotes a smooth unit normal to the surface $\Sigma$. Consequently $\nabla_{\Sigma}\psi\neq 0$ on $\partial\Sigma$ and thus $j=\mathcal{N}\times \nabla_{\Sigma}\psi\neq 0$ on $\partial\Sigma$ as claimed.
	\newline
	\newline
	Now we will argue that $j$ has at most finitely many zeros. According to \Cref{6L4} the set $N:=\{q\in \operatorname{int}(\Sigma)\mid \nabla_{\Sigma}\psi(q)=0\}$ consists of at most countably many points which do not accumulate within $\operatorname{int}(\Sigma)$. Hence, if $N$ would be infinite, then by compactness of $\Sigma$ we could extract a sequence $(p_n)_n\subset N$ converging to some $p\in \Sigma$. Since $N$ does not accumulate within $\operatorname{int}(\Sigma)$, we find $p\in \partial\Sigma$ and by continuity $\nabla_{\Sigma}\psi(p)=0$, i.e. $j(p)=0$, contradicting the first part of our proof. Hence, indeed $j$ has at most finitely many zeros, all contained within $\operatorname{int}(\Sigma)$.
	\newline
	\newline
	We will now show that $j$ does not have any zeros at all. We assume for a contradiction that $N\neq \emptyset$. We then first employ \Cref{6L4} to deduce that
	\begin{gather}
		\label{6E15}
		\sum_{p\in N}\operatorname{ind}_p(\nabla_{\Sigma}\psi)=-\sum_{p\in N}(m(p)+1)\leq -\sum_{p\in N}1\leq -1<0
	\end{gather}
	where we used that $\psi$ is harmonic, recall (\ref{6E14}). We note that since $\psi$ achieves its global minimum only on $c_1$ and its global maximum only on $c_2$, $\nabla_{\Sigma}\psi$ is inward pointing along $c_1$ and outward pointing along $c_2$ so that we cannot utilise directly the Poincar\'{e}-Hopf theorem, \cite[Chapter 6]{M65}. Instead, we can embed $[0,1]\times S^1\cong \Sigma$ into $\mathbb{R}^2$, more precisely we can map it onto a standard annulus $A:=D_2\setminus \operatorname{int}(D_1)$ where $D_r$ denotes the closed $2$-dimensional Euclidean disc centred at $0\in \mathbb{R}^2$. We can then view $\psi$ as a function on $A$ (which with a slight abuse of notation we denote in the same way) and extend it to a smooth function $\widetilde{\psi}$ on $D_2$ in such a way that $\nabla_g\widetilde{\psi}$ (where $g$ is an arbitrary extension of the pull-back metric stemming from the embedding $\Sigma\hookrightarrow \mathbb{R}^2$) has only isolated zeros. This can be achieved similar in spirit as in the proof of \Cref{6L2} by "gluing" an arbitrary extension of $\psi$ together with a suitable Morse function. We can now orient the boundary circles with respect to the outward normal direction, i.e. we orient $\partial D_2$ counter-clockwise and $\partial D_1$ clock-wise. We express $\nabla_g\psi=f \partial_1+h\partial_2$ in standard Euclidean coordinates and then finally make use of the fact, c.f. \cite[Corollary 7.7]{Ful95}, that $\frac{1}{2\pi}\int_{\partial D_2}\frac{f dh-h df}{f^2+h^2}=\sum_{p\in \{q\in D_2\mid \nabla \widetilde{\psi}(q)=0\}}\operatorname{ind}_p(\nabla_g\widetilde{\psi})$ and similarly, since $\partial D_1$ is oppositely oriented, $\frac{1}{2\pi}\int_{\partial D_1}\frac{f dh-h df}{f^2+h^2}=-\sum_{p\in \{q\in D_1\mid \nabla \widetilde{\psi}(q)=0\}}\operatorname{ind}_p(\nabla_g\widetilde{\psi})$ so that we obtain
	\begin{gather}
		\label{6E16}
		\frac{1}{2\pi}\int_{\partial A}\frac{f dh-h df}{f^2+h^2}=\sum_{p\in N}\operatorname{ind}_p(\nabla_{\Sigma}\psi).
	\end{gather}
	We will now argue that $\int_{\partial A}\frac{f dh-h df}{f^2+h^2}=0$. To see this we observe that $\nabla_{\Sigma}\psi$ is outward pointing along $\partial D_2$ and hence if we let $\gamma_2(t):=2(\cos(t),\sin(t))$ be a parametrisation of $\partial D_2$ we may write $\nabla_{\Sigma}\psi(\gamma_2(t))=\rho(t)\frac{\gamma_2(t)}{2}+\lambda(t)\frac{\dot{\gamma}_2(t)}{2}$ for suitable, smooth, $2\pi$-periodic functions $\rho,\lambda$ where $\rho(t)>0$ for all $t$. Consequently we find $f(\gamma_2(t))=\rho(t)\cos(t)-\lambda(t)\sin(t)$, $h(\gamma_2(t))=\rho(t)\sin(t)+\lambda(t)\cos(t)$ and compute
	\begin{gather}
		\nonumber
		\int_{\partial D_2}\frac{f dh-hdf}{f^2+h^2}=\int_0^{2\pi}\frac{f \dot{h}-h\dot{f}}{f^2+h^2}dt=\int_0^{2\pi}\frac{\rho^2+\lambda^2+\rho\dot{\lambda}-\dot{\rho}\lambda}{\rho^2+\lambda^2}dt=2\pi+\int_0^{2\pi}\frac{\rho\dot{\lambda}-\dot{\rho}\lambda}{\rho^2+\lambda^2}dt.
	\end{gather}
	We then observe that since $\rho(t)>0$ for all $t$, we may consider the well-defined function $k(t):=\arctan\left(\frac{\lambda(t)}{\rho(t)}\right)$ and a direct computation yields $\dot{k}(t)=\frac{\rho\dot{\lambda}-\dot{\rho}\lambda}{\rho^2+\lambda^2}$ so that $\int_0^{2\pi}\frac{\rho\dot{\lambda}-\dot{\rho}\lambda}{\rho^2+\lambda^2}dt=0$ since $k(t)$ is $2\pi$-periodic. We conclude $\frac{1}{2\pi}\int_{\partial D_2}\frac{f dh-h df}{f^2+h^2}=1$ and in the same way we conclude $\frac{1}{2\pi}\int_{\partial D_2}\frac{f dh-h df}{f^2+h^2}=-1$ where the minus stems from the fact that we orient $\partial D_1$ in clock-wise direction. Consequently $\frac{1}{2\pi}\int_{\partial A}\frac{f dh-h df}{f^2+h^2}=1-1=0$ as claimed. We can therefore combine (\ref{6E16}) and (\ref{6E15}) in order to deduce
	\begin{gather}
		\nonumber
		0=\frac{1}{2\pi}\int_{\partial A}\frac{f dh-h df}{f^2+h^2}=\sum_{p\in N}\operatorname{ind}_p(\nabla_{\Sigma}\psi)=-\sum_{p\in N}(m(p)+1)<0
	\end{gather}
	which is a contradiction. We conclude that $\{q\in \operatorname{int}(\Sigma)\mid \nabla_{\Sigma}\psi(q)=0\}=N=\emptyset$ and thus altogether $\nabla_{\Sigma}\psi$ is no-where vanishing in $\Sigma$ and thus so is $j=\mathcal{N}\times \nabla_{\Sigma}\psi$.
	
	We refer the reader at this point also to \cite[Theorem 1.1, Theorem 2.1 \& Theorem 2.2]{AM92} for related results regarding the multiplicities of critical points of (Euclidean) harmonic functions in the plane as well as of functions satisfying more general elliptic equations.
	\newline
	\newline
	In order to conclude the proof we are left with arguing that $j$ has only poloidal periodic orbits. First, since $j\parallel \partial\Sigma$ and $j\neq 0$ on $\Sigma$ we conclude that the two boundary circles (which are both poloidal) are field lines of $j$. We have also seen in the beginning of this proof that $\psi$ achieves its global minimum precisely on the boundary circle $c_1$ and its global maximum precisely on the boundary circle $c_2$. This implies that $0<\psi<1$ within $\operatorname{int}(\Sigma)$ and that each level set $\{\psi=s\}$ for any fixed $0<s<1$ is compact and a regular level set since $\nabla_{\Sigma}\psi\neq 0$ on $\Sigma$. We conclude that each level set $\{\psi=s\}$ is a finite union of smoothly embedded circles, see \cite[Appendix]{M65} for a classification of $1$-manifolds. We further observe that $j\cdot \nabla_{\Sigma}\psi=0$ since $j=\mathcal{N}\times \nabla_{\Sigma}\psi$. Therefore, $j$ is tangent to and no-where vanishing on the level sets of $\psi$ so that all orbits of $j$ must be non-constant and periodic. According to \Cref{6L3} none of these periodic orbits can be contractible and consequently all orbits of $j$ must be poloidal, i.e. represent non-trivial elements of the first fundamental group.
	
	The fact that the flow of $j$ does not admit any saddle or centre regions follows immediately from the fact that $j\neq 0$ on all of $\Sigma$.
\end{proof}
\subsection{Proof of \Cref{1T5}}
We will prove the following result.
\begin{thm}
	\label{6T10}
	Let $S^1\times [0,1]\cong \Sigma\subset\mathbb{R}^3$ be a smooth cylindrical surface and $j\in \mathcal{V}_0(\Sigma)$ with $j\parallel \partial\Sigma$. Then there exists some $S\subset \Sigma$ of zero area such that every orbit of $j$ starting at some $p\in \Sigma\setminus S$ is periodic.
	
	In addition, if $\gamma_p$ is a non-periodic orbit of $j$ starting at some $p\in \Sigma$, then $\omega(p)\subset \{q\in \Sigma\mid j(q)=0\}$. In particular, if $j(p)\neq 0$ for all $p\in \Sigma$, then all orbits of $j$ are proper periodic orbits and they all represent non trivial elements of the first fundamental group of $\Sigma$.
\end{thm}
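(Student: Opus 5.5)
The plan is to combine Poincar\'{e} recurrence (\Cref{1T4}) with a Poincar\'{e}--Bendixson type argument on the annulus. By \Cref{1T4} there is a set $S\subset\Sigma$ of zero area such that every $p\in\Sigma\setminus S$ is recurrent. So it suffices to show two things: every recurrent point lies on a periodic orbit, and every non-periodic orbit has $\omega(p)\subset\{j=0\}$. The boundary circles are invariant, since $j\parallel\partial\Sigma$, and have zero area. I will therefore either add them to $S$ or treat them directly: on a circle, $j$ is either non-vanishing there, giving a periodic orbit, or its orbits are fixed points or converge to zeros.

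For the second claim, fix a non-periodic orbit $\gamma_p$ and suppose some $q\in\omega(p)$ has $j(q)\neq0$. Take a small transversal segment $\tau$ through $q$. This is a flow box, adapted to the boundary if $q\in\partial\Sigma$; but if $q\in\partial\Sigma$ with $j(q)\ne0$, the boundary circle is a closed orbit, and an interior orbit cannot accumulate on it without spiralling.

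The orbit $\gamma_p$ hits $\tau$ infinitely often at times $t_k\to\infty$. Since $\gamma_p$ is not periodic, these hits are distinct. Two consecutive hits $x_k,x_{k+1}$, together with the orbit arc between them and the subsegment of $\tau$ joining them, form a simple closed curve $C$. This is the standard Poincar\'{e}--Bendixson construction.

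The key step is to exclude monotone accumulation using area preservation. I split into two cases according to whether $C$ is contractible in $\Sigma$.

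\emph{Contractible case.} Here $C$ bounds a disc $D$ (Jordan curve theorem in the annulus embedded in the plane). The flow crosses the segment part of $C$ in one direction, so $D$ is positively or negatively invariant and is mapped strictly into itself by $\psi_t$ for some $t>0$. Concretely, $\psi_t(D)\subsetneq D$ with $D\setminus\psi_t(D)$ containing an open set. This contradicts preservation of area by the flow of a div-free field.

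\emph{Non-contractible case.} Here $C$ together with a boundary circle bounds an annular region $A'$, and the same trapping argument applies to $A'$ instead of $D$.

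So no such $q$ exists, which gives $\omega(p)\subset\{j=0\}$. Essentially the same construction shows that a recurrent point with $j(p)\ne0$ must have $x_{k+1}=x_k$, i.e.\ a periodic orbit; this is the modification of \cite[Theorem 14.1.1]{HasKa95}. Zeros are trivially periodic. This yields the first claim with the same $S$.

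Finally, if $j$ has no zeros, $\omega(p)$ is nonempty by compactness, so by the above every orbit is periodic, and nonconstant. Suppose an orbit were contractible. It would bound a disc $D\subset\Sigma$ on which $j$ is tangent to the boundary, and Poincar\'{e}--Hopf (index sum $=\chi(D)=1$) forces a zero in $D$, a contradiction. Hence every orbit is poloidal.

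The main obstacle is the trapping-region step. I must verify carefully that the region bounded by $C$ is strictly mapped into itself with a set of positive measure lost, in both topological cases and when $\tau$ is near the boundary. I would handle this by showing that the orbit after time $t_{k+1}$ enters the interior of the region and never returns, so that an open neighbourhood of the exit side leaves the region irreversibly.
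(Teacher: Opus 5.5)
Your argument is correct, but it gets the key step from a different mechanism than the paper.

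\textbf{The paper's route.} The paper never uses the Jordan curve theorem or an invariant-region argument. It writes $j=\mathcal{N}\times\nabla_\Sigma f$ with a single-valued stream function $f$ on the whole annulus. This comes from the Hodge--Morrey decomposition together with the form \eqref{1E4} of the Neumann field, and it is where $j\parallel\partial\Sigma$ enters. The paper then takes the transversal to be a field line of $Y=j\times\mathcal{N}=\nabla_\Sigma f$ and closes the return arc of the orbit with a piece of this transversal. Since $f$ is constant along $j$-orbits and $df(Y)=|Y|^2>0$ along the transversal, $\int_\gamma df=0$ forces the transversal piece to be degenerate. So every return to the transversal is exact. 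The same computation handles both the recurrence claim and the $\omega$-limit claim.

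\textbf{Your route and what each buys.} You replace the first integral by Liouville's theorem. The region cut off by the Poincar\'{e}--Bendixson loop is positively or negatively invariant, so it would lose or gain a flow box's worth of area. This needs only measure preservation and the fact that simple closed curves separate a planar surface. It is the classical argument that area-preserving planar flows have no monotone returns to a transversal, and it avoids the Hodge decomposition entirely. The price is the point-set work you flag, all of which goes through:
\begin{itemize}
\item two-sidedness of $C$ along the open segment;
\item checking invariance at the corners $x_k, x_{k+1}$, where the outgoing orbit lands on the trapped side;
\item exhibiting the open set $\{\psi_s(y): y\text{ in the open segment},\ 0<s<t\}$ inside $R\setminus\psi_t(R)$.
\end{itemize}
Your contractible/non-contractible case split is unnecessary. $C$ cuts $\Sigma$ into two closed pieces whose boundaries consist of $C$ and invariant boundary circles, and one of the two pieces is always trapped.

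\textbf{One slip.} If $q\in\partial\Sigma$ with $j(q)\neq 0$, the boundary circle through $q$ need not be a single closed orbit, because $j$ may vanish elsewhere on that circle. You do not need this claim. Take a half-transversal from $q$ into $\operatorname{int}(\Sigma)$, as the paper does by choosing $Y$ inward-pointing. The hits of an interior orbit then lie in the interior, so $C\subset\operatorname{int}(\Sigma)$ and your trapping argument applies verbatim.
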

\begin{proof}[Proof of \Cref{6T10}]
	The first part of the proof is a modification of the proof of the Poincar\'{e}-Bendixson theorem given in \cite[Theorem 14.1.1]{HasKa95}.
	\newline
	We first note that according to Poincar\'{e}'s recurrence theorem, c.f. \Cref{1T4}, we can find a set $S\subset \Sigma$ of zero area such that every $p\in \Sigma\setminus S$ is a (forward) recurrent point, i.e. there exists a sequence $t_k\equiv t_k(p)\rightarrow \infty$ such that $\gamma_p(t_k)\rightarrow p$, where $\gamma_p$ denotes the field line of $j$ starting at $p$. We claim that any such orbit is already periodic. Since the boundary is of zero area we may assume that $p\in \operatorname{int}(\Sigma)$. If $j(p)=0$, there is nothing to show so that we also assume from now on that $j(p)\neq 0$ and $p\in \omega(p)$, i.e. the point $p$ is forward recurrent (note that in contrast to the statement in \cite[Theorem 14.1.1]{HasKa95} we do not know a priori whether the whole orbit $\gamma_p(\mathbb{R})$ is forward recurrent). Now according to \cite[Theorem 9.22]{L12} we can find a coordinate system $\mu:U\rightarrow \mathbb{R}^2$ around $p$ such that $j$ has the representation $\partial_1$ within the region $U$. We can then consider the vector field $Y:= j\times \mathcal{N}$ where $\mathcal{N}$ denotes the outward unit normal to $\Sigma$. Since $j\neq 0$ within $U$, so is $Y$ and we may consider the (locally defined) field line $\gamma^Y_p$ of $Y$ starting at $p$ which is everywhere transversal to $j$. After possibly shrinking $U$ we may assume that $\mu(U)=D_r$ (the $2$-dimensional disc of radius $r$ centred at $0$) and that $\gamma^Y_p$ intersects each line parallel to the $x$-axis within $D_r$ exactly once, see \Cref{6F2}.
	
	\begingroup\centering
	\begin{figure}[h]
		\hspace{3.5cm}\includegraphics[width=0.5\textwidth, keepaspectratio]{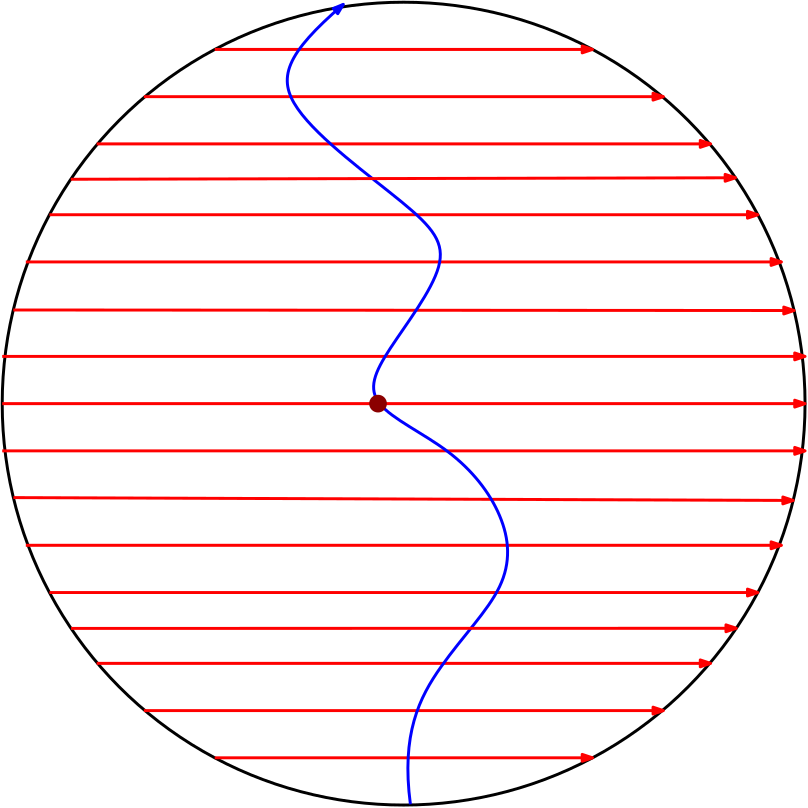}
		\caption{Depicted the disc $D_r$. The point $p$ marked as a thick dot in the centre. The field lines of $j$ (in red) correspond to straight lines passing from left to right. The field line $\gamma^Y_p$ (in blue) is transversal to $j$, i.e. never becomes parallel to the x-axis and thus moves at every instant upwards (or at every instant downwards).}
		\label{6F2}
	\end{figure}
	\endgroup
	
	Since $p$ is (forward) recurrent there will be a smallest $\tau>0$ such that $\gamma_p(\tau)\cap \gamma^Y_p\cap D_r\neq \emptyset$. We can then consider the path concatenation of $\gamma_p[0,\tau]$ and $\gamma^Y_p[0,\tau_Y]$ (possibly inverting the orientation if necessary) to obtain a closed (piecewise smooth) curve $\gamma$, see \Cref{6F3}, where $\tau_Y$ is such that $\gamma^Y_p(\tau_Y)=\gamma_p(\tau)$.
	
	\begingroup\centering
	\begin{figure}[h]
		\hspace{3.5cm}\includegraphics[width=0.5\textwidth, keepaspectratio]{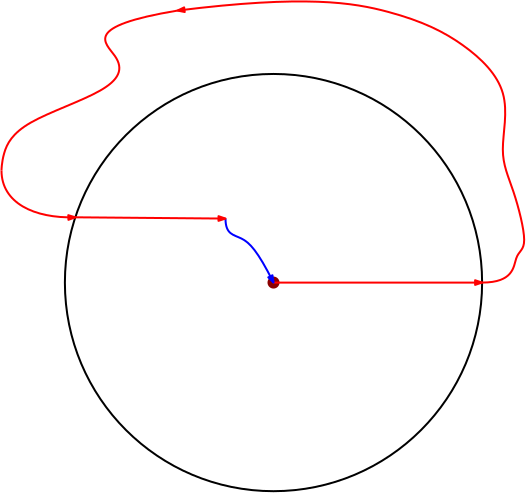}
		\caption{Depicted the disc $D_r$ with the point $p$ marked as a thick dot in the centre. The field line $\gamma_p$ of $j$ (in red) is a straight line within $D_r$ but may curve outside of $D_r$. Eventually the field line $\gamma_p$ enters once again the disc $D_r$ and will intersect the field line $\gamma^Y_p$ (in blue) which corresponds to the curve segment within $D_r$ which is not parallel to the $x$-axis. Together these two curves form the closed curve $\gamma$.}
		\label{6F3}
	\end{figure}
	\endgroup
	
	It then follows from the Hodge-decomposition theorem, \cite[Theorem 2.4.2 \& Theorem 2.4.8]{S95}, and (\ref{1E2})-(\ref{1E5}) that $j=\mathcal{N}\times \nabla_{\Sigma}f$ for a suitable function $f\in C^{\infty}(\Sigma)$ which is locally constant on $\partial\Sigma$, where we used that $j$ is div-free and $j\parallel \partial\Sigma$. Then on the one hand, since $\gamma$ is a closed curve, we find
	\begin{gather}
		\label{6E17}
		\int_{\gamma}df=0.
	\end{gather}
	On the other hand $\gamma=\gamma_p[0,\tau]\oplus \gamma^Y_p[0,\tau_Y]$ and so we also find
	\begin{gather}
		\nonumber
		\int_{\gamma}df=\int_{\gamma_p[0,\tau]}df+\int_{\gamma^Y_p[0,\tau_Y]}df=\int_{\gamma^Y_p[0,\tau_Y]}df=\int_0^{\tau_Y}|Y\left(\gamma^Y_p(s)\right)|^2ds
	\end{gather}
	where we used that $\dot{\gamma}_p$ is tangent to $j$, that $j$ and $\nabla_{\Sigma}f$ are pointwise orthogonal to each other and that $Y=j\times \mathcal{N}=\nabla_{\Sigma}f$. It follows therefore from (\ref{6E17}) that $0=\int_0^{\tau_Y}|Y(\gamma^Y_p(s))|^2ds$ and hence, since $Y\neq 0$ within $D_r$, $\tau_Y=0$. But that means by definition of $\tau$ that $\gamma_p(\tau)=\gamma^Y_p(\tau_Y)=\gamma^Y_p(0)=p$ and hence $\gamma_p$ is periodic since $\tau>0$.
	\newline
	\newline
	We will now lastly argue that if $q\in \Sigma$ is such that $\gamma_q$ (the orbit of $j$ starting at $q$) is non-periodic, then $\omega(q)\subset \{p\in \Sigma\mid j(p)=0\}$.
	
	We suppose for a contradiction that there exists some $p\in \omega(q)$ with $j(p)\neq 0$. We can then as before go into a coordinate chart around $p$ such that $j=\partial_1$ within this chart \cite[Theorem 9.22 \& Theorem 9.35]{L12} and as before we may consider $Y=j\times \mathcal{N}$ (selecting the unit normal $\mathcal{N}$ such that $Y$ is inward pointing in case that $p\in \partial\Sigma$). We can similarly consider the field line $\gamma_p^Y$ of $Y$ starting at $p$ (which will be defined on the interval $[0,\epsilon)$ for some $\epsilon>0$ if $p\in \partial\Sigma$) and since $p$ is recurrent there will be a smallest time $\tau_1>0$ such that $\gamma_q$ intersects $\gamma_p^Y\cap D_r$ and a smallest time $\tau_2>\tau_1$ such that $\gamma_q$ intersects once more $\gamma_p^Y\cap D_r$. We can then consider the closed curve $\widehat{\gamma}$ consisting of the curve segment $\gamma_q([\tau_1,\tau_2])$ and the part of $\gamma^Y_p$ connecting these two points. This gives us a closed curve and we can proceed to argue identically as in (\ref{6E17}) that the two intersection points of $\gamma_q$ at times $\tau_1$ and $\tau_2$ with $\gamma^Y_p$ must coincide and hence $\gamma_q$ is a proper periodic orbit, contradicting our assumption that $\gamma_q$ is non-periodic.
	
	Consequently, if $j\neq 0$ in $\Sigma$, then $j$ cannot have non-periodic orbits since every orbit has a non-empty $\omega$-limit set. If any such orbit would be contractible, it would bound a disc and as argued on several occasions before that would imply the existence of a zero in $\Sigma$, which is absurd.
	\end{proof}

\addsec{Acknowledgements}
This work has been supported by the Inria AEX StellaCage. The research was supported in part by the MIUR Excellence Department Project awarded to Dipartimento di Matematica, Universit`a di Genova, CUP D33C23001110001.
We would like to thank Mario Sigalotti and Ugo Boscain for useful discussions,
as well as the Renaissance Fusion SCIENCE team.

\appendix
\section{Generic flows on cylindrical surfaces with more general boundary conditions}
\label{A}
\begin{lem}
	\label{ExtraL1}
	Let $S^1\times [0,1]\cong \Sigma\subset\mathbb{R}^3$ be a smooth cylindrical surface and consider the set $\mathcal{V}^{\operatorname{T}}_0(\Sigma):=\{j\in \mathcal{V}_0(\Sigma)\mid j\parallel\partial\Sigma\text{ and }j\neq 0\text{ on }\partial\Sigma\}\subset \mathcal{V}(\Sigma)$. Then the set
	\begin{gather}
		\nonumber
		\{j\in \mathcal{V}^{\operatorname{T}}_0(\Sigma)\mid j\text{ has only non-degenerate zeros}\}
	\end{gather}
	is a generic subset of $\mathcal{V}^{\operatorname{T}}_0(\Sigma)$.
\end{lem}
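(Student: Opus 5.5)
The plan is to repeat the proof of \Cref{6L2} essentially word for word. That proof only used the opposite-orientation hypothesis at the very end, to show that the approximants stay in $\mathcal{V}_0^{\operatorname{OT}}(\Sigma)$. Here that step is simply dropped.

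Fix $j\in \mathcal{V}^{\operatorname{T}}_0(\Sigma)$. By the Hodge--Morrey decomposition \cite[Theorem 2.4.2]{S95} together with (\ref{1E2})-(\ref{1E5}), write $j=\mathcal{N}\times \nabla_{\Sigma}\chi$ with $\chi\in C^{\infty}(\Sigma)$ constant on each boundary circle. Use the flow along an extension of the outward normal $n$ to enlarge $\Sigma$ to a slightly larger cylinder $\widetilde{\Sigma}$, and extend $\chi$ to $\widetilde{\chi}$. Then approximate $\widetilde{\chi}$ by Morse functions $f_n$ in $C^2_{\operatorname{loc}}(\operatorname{int}(\widetilde{\Sigma}))$ \cite[Proposition 1.2.4]{AD14}.

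Since $j\neq 0$ on $\partial\Sigma$, we have $|\nabla_{\Sigma}\chi|\geq c_0>0$ on collars $[0,\epsilon]\times S^1\cup[1-\epsilon,1]\times S^1$. Choose a cutoff $\rho$ equal to $1$ on $\partial\Sigma$ and $0$ on $[\epsilon,1-\epsilon]\times S^1$, and set $\chi_n:=\rho\chi+(1-\rho)f_n|_{\Sigma}$. Then:
\begin{enumerate}
\item $\chi_n\to\chi$ in $C^2(\Sigma)$, so for large $n$ we get $|\nabla_\Sigma\chi_n|\geq c_0/2$ on the collars.
\item In the middle region $\chi_n=f_n$, so every critical point of $\chi_n$ is non-degenerate.
\item $\chi_n|_{\partial\Sigma}=\chi|_{\partial\Sigma}$ is locally constant. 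Hence $\nabla_\Sigma\chi_n\perp\partial\Sigma$, and $j_n:=\mathcal{N}\times\nabla_\Sigma\chi_n$ is tangent to $\partial\Sigma$, non-vanishing there, and div-free.
\end{enumerate}
Working in normal coordinates as in \Cref{6L2}, the zeros of $j_n$ and of $\nabla_\Sigma\chi_n$ coincide, and their non-degeneracy agrees. Moreover $j_n\to j$ in $C^1$. This proves density.

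For relative openness, take $j$ in the set and any $j'\in\mathcal{V}^{\operatorname{T}}_0(\Sigma)$ that is $C^1$-close to $j$. Since $j$ does not vanish on $\partial\Sigma$, it is bounded away from zero on a compact collar. Its finitely many zeros are non-degenerate, so near each one $|\det Dj|$ is bounded below, and elsewhere $|j|$ is bounded below. A $C^1$-small perturbation $j'$ therefore has zeros only inside small neighbourhoods of the zeros of $j$. On each such neighbourhood $Dj'$ is invertible, so those zeros are non-degenerate as well, and $j'$ still has no zeros on $\partial\Sigma$. This is the same argument as \cite[Theorem 1.2.5]{AD14}.

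The only delicate point is making sure the approximants remain in $\mathcal{V}^{\operatorname{T}}_0(\Sigma)$, i.e.\ that they stay tangent to and non-vanishing on the boundary. The cutoff construction handles this automatically, since it fixes $\chi$ near $\partial\Sigma$. No orientation condition is needed, which is exactly why the argument of \Cref{6L2} carries over.
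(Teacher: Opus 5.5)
Your proposal is correct and follows the paper's own route. The paper proves this lemma only by saying that the proof of \Cref{6L2} applies mutatis mutandis, and you carry out exactly that argument (Hodge--Morrey potential constant on each boundary circle, Morse approximation on an enlarged cylinder, cutoff that fixes $\chi$ on $\partial\Sigma$), dropping only the final orientation check; your explicit openness argument spells out what the paper calls straightforward.
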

\begin{proof}[Proof of \Cref{ExtraL1}]
	The proof of \Cref{6L2} applies mutatis mutandis.
\end{proof}
\begin{thm}
	\label{ExtraT2}
	Let $S^1\times [0,1]\cong \Sigma\subset \mathbb{R}^3$ be a smooth cylindrical surface and let $j\in \mathcal{V}^{\operatorname{T}}_0(\Sigma)$. Then generically, c.f. \Cref{ExtraL1}, the flow of $j$ satisfies the following dichotomy
	\begin{enumerate}
		\item $j(p)\neq 0$ for all $p\in \Sigma$. If this is the case, then all field lines of $j$ are proper periodic orbits which are all poloidal, i.e. represent a non-trivial element of the first fundamental group of $\Sigma$.
		\item $j(p)=0$ for some $p\in \Sigma$. If this is the case then the flow of $j$ satisfies properties (i)-(viii) listed in \Cref{6T8}
	\end{enumerate}
\end{thm}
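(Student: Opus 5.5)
By \Cref{ExtraL1} it suffices to treat $j\in\mathcal{V}^{\operatorname{T}}_0(\Sigma)$ with only non-degenerate zeros. As in the proof of \Cref{6L2}, write $j=\mathcal{N}\times\nabla_{\Sigma}\psi$ with $\psi\in C^{\infty}(\Sigma)$ constant on each boundary circle $C_1,C_2$. Since $j\neq 0$ and $j\parallel\partial\Sigma$ on $\partial\Sigma$, the gradient $\nabla_{\Sigma}\psi$ is non-vanishing and normal to $\partial\Sigma$ along each $C_i$. On each circle it therefore points consistently either outward or inward. The argument splits according to whether $j$ vanishes somewhere.

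\emph{Case (i).} If $j\neq 0$ everywhere, then \Cref{6T10} applies directly: the field $j$ is div-free, tangent to $\partial\Sigma$ and no-where vanishing. Hence all orbits are proper periodic orbits, and all of them are poloidal. Non-degeneracy of zeros is not needed here.

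\emph{Case (ii).} Now assume $j(p)=0$ for some $p$, necessarily in $\operatorname{int}(\Sigma)$, and reproduce the proof of \Cref{6T8}. Inspecting that proof, the oppositely oriented boundary condition was used in only two places.
\begin{enumerate}
\item It guaranteed the \emph{existence} of a zero. Here existence is an assumption, so this use disappears.
\item It ensured that the level sets of $\psi$ in $\operatorname{int}(\Sigma)$ stay a positive distance from $\partial\Sigma$.
\end{enumerate}
The remaining steps go through unchanged:
\begin{itemize}
\item Poincar\'e--Hopf for the zero-Euler-characteristic cylinder shows that the indices sum to zero. With $j$ tangent to the boundary, one can double $\Sigma$ to a torus, or use the winding-number computation from the proof of \Cref{6T9}, which shows that the boundary contributions cancel whenever $\nabla_\Sigma\psi$ is normal to $\partial\Sigma$.
\item Consequently there is at least one zero of index $+1$ and at least one of index $-1$, and the two counts agree.
\item The local analysis from \Cref{6T7} (normal coordinates, trace zero, Hartman--Grobman for index $-1$, the Morse lemma for index $+1$) then gives the saddle and centre regions, proving properties (i) and (ii) of \Cref{6T8}.
\end{itemize}

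The main obstacle is the level-set argument for property (iii), since a level set of $\psi$ may now reach $\partial\Sigma$. I would handle it as follows.
\begin{itemize}
\item Each boundary circle is itself a level set of $\psi$ and a regular periodic orbit.
\item Near $C_i$ the flowout of $\nabla_\Sigma\psi/|\nabla_\Sigma\psi|^2$ produces a collar foliated by level sets that are parallel circles. So any level set meeting the collar is either $C_i$ or one of these poloidal circles, and a level set $\{\psi=s\}$ with $s\neq\psi|_{C_i}$ stays a positive distance from $C_i$ outside the collar.
\item Thus every regular level set is a compact $1$-manifold, i.e. a finite union of circles lying in the interior or equal to a boundary circle. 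Critical level sets minus the finitely many critical points are $1$-manifolds.
\end{itemize}
The remaining properties then follow from results already established.
\begin{itemize}
\item The $\omega$-limit argument from \Cref{6T8}, or alternatively \Cref{6T10}, gives $\omega(q)\subset\{j=0\}$ for non-periodic orbits. Saddle counting then yields property (iv) and the equality of the number of non-periodic orbits with the number of singularities in property (iii).
\item The finiteness of components in property (v) and the ambient isotopy in property (vii) follow as in \Cref{6T7} using \Cref{6L6}.
\item Contractibility in property (vi) follows by isotopy to the circles around a centre.
\item Property (viii) follows from the collar foliation: a component touching $\partial\Sigma$ contains circles isotopic to a boundary circle, hence non-contractible, so all its orbits are poloidal.
\end{itemize}
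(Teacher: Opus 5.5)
Your proposal is correct and follows the paper's proof. The nowhere-vanishing case is read off from \Cref{6T10}. In the case with zeros, the index sum is shown to vanish via the winding-number computation of \Cref{6T9} (your doubling-to-a-torus alternative works equally well), after which the arguments of \Cref{6T8} are rerun. Your collar argument for level sets near $\partial\Sigma$ spells out a detail the paper leaves implicit; in fact that positive-distance property only needs $\nabla_\Sigma\psi$ to be non-vanishing and normal on $\partial\Sigma$, so it was never a genuine obstacle.
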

\begin{proof}[Proof of \Cref{ExtraT2}]
	If $j(p)\neq 0$ for all $p\in \Sigma$ then the conclusion about the flow follows from \Cref{6T10}.
	
	If, on the other hand, $j(p)=0$ for some $p\in \Sigma$, then either $j$ is oppositely oriented on the boundary circles, so that the statement is simply a consequence of \Cref{6T8}. Otherwise, due to the non-degeneracy of the zeros, $j$ will admit only finitely many zeros and we can argue identically as in the proof of \Cref{6T9} that $\sum_{p\in \{j=0\}}\operatorname{ind}_p(j)=0$. With this knowledge and knowing that $j$ admits at least one zero by assumption, the remaining arguments of \Cref{6T8} apply and the theorem follows.
\end{proof}
\bibliographystyle{plainnat}
\bibliography{mybibfileNOHYPERLINK}
\footnotesize

\end{document}